\DeclareMathOperator*{\argmax}{arg\,max}
\newtheorem{theorem}{Theorem}
\newtheorem{lemma}{Lemma}
\newcommand{\kibitz}[2]{\ifnum\Comments=1{\color{#1}{#2}}\fi}
\newcommand{\ns}[1]{\kibitz{blue}{[Nisarg: #1]}}
\newcommand{\bdf}[1]{\kibitz{red}{[Brandon: #1]}}
\newcommand{\km}[1]{\kibitz{orange}{[Kamesh: #1]}}
\newtheorem{example}{Example}
\newcommand{\E}{\mathbf{E}}
\newcommand{\set}[1]{\left\{#1\right\}}
\newcommand{\calC}{\mathcal{C}}
\newcommand{\bbR}{\mathbb{R}}
\newcommand{\M}{\mathcal{M}}
\newcommand{\calF}{\mathcal{F}}
\newcommand{\bc}{{\mathbf c}} 
\newcommand{\bcstar}{{\mathbf c^*}} 
\newcommand{\bchat}{\widehat{\mathbf c}} 
\newcommand{\bcp}{{\mathbf c'}}
\newcommand{\bd}{{\mathbf d}}
\renewcommand{\vec}{\mathbf}
\newcommand{\gain}{\mathrm{gain}}
\newtheorem{definition}{Definition}
\begin{document}

\title{Fair Allocation of Indivisible Public Goods}

\author{Brandon Fain\thanks{Department of Computer Science, Duke University, Durham, NC 27708. {\tt btfain@cs.duke.edu}.}  
\and Kamesh Munagala\thanks{Department of Computer Science, Duke University, Durham NC 27708. {\tt kamesh@cs.duke.edu}.} 
\and Nisarg Shah\thanks{Department of Computer Science, University of Toronto, Toronto ON, M5S3G8, Canada. {\tt nisarg@cs.toronto.edu}}}

\date{}


\maketitle

\begin{abstract}
We consider the problem of fairly allocating indivisible public goods. We model the public goods as elements with feasibility constraints on what subsets of elements can be chosen, and assume that agents have additive utilities across elements. Our model generalizes existing frameworks such as fair public decision making and participatory budgeting. We study a groupwise fairness notion called the {\em core}, which generalizes well-studied notions of proportionality and Pareto efficiency, and requires that each subset of agents must receive an outcome that is fair relative to its size. 


In contrast to the case of divisible public goods (where fractional allocations are permitted), the core is not guaranteed to exist when allocating indivisible public goods. Our primary contributions are the notion of an {\em additive} approximation to the core (with a tiny multiplicative loss), and polynomial time algorithms that achieve a small additive approximation, where the additive factor is relative to the largest utility of an agent for an element. If the feasibility constraints define a matroid, we show an additive approximation of $2$. A similar approach yields a constant additive bound when the feasibility constraints define a matching. More generally, if the feasibility constraints define an arbitrary packing polytope with mild restrictions, we show an additive guarantee that is logarithmic in the {\em width} of the polytope. Our algorithms are based on variants of the convex program for maximizing the Nash social welfare, but differ significantly from previous work in how it is used. 
Our guarantees are meaningful even when there are fewer elements than the number of agents. As far as we are aware, our work is the first to approximate the core in indivisible settings. 
\end{abstract}


\section{Introduction}
\label{sec:intro}

In fair resource allocation, most work considers \textit{private goods}; each good must be assigned to a particular agent (and no other). However, not all goods are private. \textit{Public goods} are those which can be enjoyed by multiple agents simultaneously, like a public road. Allocation of public goods generalizes the problem of allocation of private goods, and, as we will see, can provide new difficulties from both a normative and an algorithmic perspective.

Consider an example to highlight what a public resource allocation problem might look like, and why fairness might be a concern. Suppose that the next time you vote, you see that there are four referendums for your consideration on the ballot, all of which concern the allocation of various public goods in your city: A = a new school, B = enlarging the public library, C = renovating the community college, and D = improving a museum. In 2016, residents of Durham, North Carolina faced precisely these options~\cite{DurhamElection}. Suppose the government has resources to fund only two of the four projects, and the (hypothetical) results were as follows: a little more than half of the population voted for $(A, B)$, a little less than half voted for $(C,D)$, and every other combination received a small number of votes. Which projects should be funded?

If we na\"{\i}vely tally the votes, we would fund A and B, and ignore the preferences of a very large minority. In contrast, funding A and C seems like a reasonable compromise. Of course, it is impossible to satisfy \textit{all} voters, but given a wide enough range of possible outcomes, perhaps we can find one that fairly reflects the preferences of large subsets of the population. This idea is not captured by fairness axioms like proportionality or their approximations~\cite{FPDM}, which view fairness from the perspectives of {\em individual} agents. Indeed, in the aforementioned example, {\em every} allocation gives zero utility to {\em some} agent, and would be deemed equally good according to such fairness criteria.

\subsection{Public Goods Model}
We consider a fairly broad model for public goods allocation that generalizes much of previous work~\cite{LMMS04,FPDM,Fain2016,Brill,FairKnapsack,envyFreeUpTo1,envyFreeUpToAny}.  There is a  set of voters (or agents) $N = [n]$. Public goods are modeled as elements of a ground set $W$. We denote $m = |W|$. An {\em outcome} $\bc$ is a subset of $W$. 
Let $\calF \subseteq 2^W$ denote the set of feasible outcomes. 

The utility of agent $i$ for element $j \in W$ is denoted $u_{ij} \in \bbR_{\ge 0}$. We assume that agents have additive utilities, i.e., the utility of agent $i$ under outcome $\bc \in \calF$ is $u_i(\bc) = \sum_{j \in \bc} u_{ij}$. Since we are interested in scale-invariant guarantees, we assume without loss of generality that $\max_{j \in W} u_{ij} = 1$ for each agent $i$, so that $u_{ij} \in [0,1]$ for all $i,j$. Crucially, this does not restrict the utility of an agent for an outcome to be $1$: $u_i(\bc)$ can be as large as $m$. Specifically, let $V_i = \max_{\bc \in \calF} u_i(\bc)$, and $V_{\max} = \max_{i\in N} V_i$. Our results differ by the feasibility constraints imposed on the outcome. We consider three types of constraints, special cases of which have been studied previously in literature.

\paragraph{Matroid Constraints.} In this setting, we are given a matroid $\M$ over the ground set $W$, and the feasibility constraint is that the chosen elements must form a basis of $\M$ (see~\cite{Kung} for a formal introduction to matroids). 


This generalizes the {\em public decision making} setting introduced by~\cite{FPDM}. In this setting, there is a set of issues $T$, and each issue $t \in T$ has an associated set of alternatives $A^t = \{a_1^t, \hdots, a_{k_t}^t\}$, exactly one of which must be chosen. Agent $i$ has utility $u_i^t(a_j^t)$ if alternative $a_j^t$ is chosen for issue $t$, and utilities are additive across issues.  An outcome $\bc$ chooses one alternative for every issue. It is easy to see that if the ground set is $\cup_t A^t$, the feasibility constraints correspond to a partition matroid. We note that public decision making in turn generalizes the classical setting of {\em private goods allocation}~\cite{LMMS04,envyFreeUpTo1,envyFreeUpToAny} in which private goods must be divided among agents with additive utilities, with each good allocated to exactly one agent. 

Matroid constraints also capture multi-winner elections in the voting literature (see, e.g.~\cite{Brill}), in which voters have additive utilities over candidates, and a committee of at most $k$ candidates must be chosen. This is captured by a uniform matroid over the set of candidates.



\paragraph{Matching Constraints.}  In this setting, the elements are edges of an undirected graph $G(V,E)$, and the feasibility constraint is that the subset of edges chosen must form a {\em matching}. Matchings constraints in a bipartite graph can be seen as the intersection of two matroid constraints. Matching constraints are a special case of the more general packing constraints we consider below. 

\paragraph{Packing Constraints.} In this setting, we impose a set of packing constraints $A \vec{x} \le \vec{b}$, where $x_j \in \{0,1\}$ is the indicator denoting whether element $j$ is chosen in the outcome. Suppose $A$ is a $K \times m$ matrix, so that there are $K$ packing constraints. By scaling, we can assume $a_{kj} \in [0,1]$ for all $k,j$. Note that even for one agent, packing constraints encode independent set. Thus, to make the problem tractable, we assume $\vec{b}$ is sufficiently large, in particular, $b_k = \omega\left( \log K  \right)$ for all $k \in \{1,2,\ldots,K\}$. This is in contrast to matroid and matching constraints, for which single-agent problems are polynomial time solvable. A classic measure of how easy it is to satisfy the packing constraints is the {\em width} $\rho$~\cite{PST}:
\begin{equation}
\label{eq:width} \rho = \max_{k \in [K]} \frac{\sum_{j \in [m]} a_{kj}}{b_k}.
\end{equation}

Packing constraints capture the general {\sc Knapsack} setting, in which there is a set of $m$ items, each item $j$ has an associated size $s_j$, and a set of items of total size at most $B$ must be selected. This setting is motivated by participatory budgeting applications~\cite{PBP,knapsack1,knapsack2,Fain2016,GargKGMM17,FairKnapsack,BNPS17}, in which the items are public projects, and the sizes represents the costs of the projects. {\sc Knapsack} uses a single packing constraint. Multiple packing constraints can arise if the projects consume several types of resources, and there is a budget constraint for each resource type. For example, consider a statewide participatory budgeting scenario where each county has a budget than can only be spent on projects affecting that county, the state has some budget that can be spent in any county, and projects might affect multiple counties. In such settings, it is natural to assume a small width, i.e., that the budget for each resource is such that a large fraction (but not all) of the projects can be funded. We note that the aforementioned multi-winner election problem is a special case of the {\sc Knapsack} problem with unit sizes. 

\subsection{Prior Work: Fairness Properties}
We define fairness desiderata for the public goods setting by generalizing appropriate desiderata from the private goods setting such as Pareto optimality, which is a weak notion of efficiency, and proportionality, which is a per-agent fair share guarantee.\footnote{Those familiar with the literature on fair division of private goods will note the conspicuous absence of the \textit{envy freeness} property: that no agent should (strongly) prefer the allocation of another agent.  Because we are considering public goods, envy freeness is only vacuously defined: the outcome in our setting is common to all agents.}

\begin{definition}
An outcome $\bc$ satisfies \textbf{Pareto optimality}  if there is no other outcome $\bcp$ such that $u_i(\bcp) \geq u_i(\bc)$ for all agents $i \in N$, and at least one inequality is strict.
\end{definition}

Recall that $V_i$ is the maximum possible utility agent $i$ can derive from a feasible outcome. 

\begin{definition}
The \textbf{proportional share} of an agent $i \in N$ is $Prop_i := \frac{V_i}{n}$.  For $\beta \in (0,1]$, we say that an outcome $\bc$ satisfies $\beta$-proportionality if $u_i(\bc) \geq \beta \cdot Prop_i$ for all agents $i \in N$.  If $\beta=1$, we simply say that $\bc$ satisfies proportionality.
\end{definition}

The difficulty in our setting stems from requiring integral outcomes, and not allowing randomization. In the absence of randomization, it is reasonably straightforward to show that we cannot guarantee $\beta$-proportionality for any $\beta \in (0,1]$. Consider a problem instance with two agents and two feasible outcomes, where each outcome gives a positive utility to a unique agent. In any feasible outcome, one agent has zero utility, which violates $\beta$-proportionality for every $\beta > 0$. 

To address this issue, \cite{FPDM} introduced the novel relaxation of proportionality up to one issue in their public decision making framework, inspired by a similar relaxation called envy-freeness up to one good in the private goods setting~\cite{LMMS04,envyFreeUpTo1}. 
%
They say that an outcome $\bc$ of a public decision making problem satisfies \emph{proportionality up to one issue} if for all agents $i \in N$, there exists an outcome $\bcp$ that differs from $\bc$ only on a single issue and $u_i(\bcp) \geq Prop_i$. Proportionality up to one issue is a reasonable fairness guarantees only when the number of issues is larger than the number of agents; otherwise, it is vacuous and is satisfied by all outcomes. 
Thus, it is perfectly reasonable for some applications (e.g., three friends choosing a movie list to watch together over the course of a year), but not for others (e.g., when thousands of residents choose a handful of public projects to finance). In fact, it may produce an outcome that may be construed as unfair if it does not reflect the wishes of large groups of voters. Thus, in this work, we address the following question posed by~\cite{FPDM}: 

\begin{quote}{\em Is there a stronger fairness notion than proportionality in the public decision making framework...? Although such a notion would not be satisfiable by deterministic mechanisms, it may be satisfied by randomized mechanisms, or it could have novel relaxations that may be of independent interest.} 
\end{quote} 

\subsection{Summary of Contributions}
Our primary contributions are twofold.
\begin{itemize}
	\item We define a fairness notion for public goods allocation that is stronger than proportionality, ensures fair representation of groups of agents, and in particular, provides a meaningful fairness guarantee even when there are fewer goods than agents.
	\item We provide polynomial time algorithms for computing integer allocations that approximately satisfy this fairness guarantee for a variety of settings generalizing the public decision making framework and participatory budgeting.
\end{itemize}

\subsection{Core and Approximate Core Outcomes} 
Below, we define the notion of \emph{core outcomes}, which has been extensively studied (in similar forms) as a notion of stability in economics~\cite{lindahlCore,scarfCore,coreConjectureCounter} and computer science~\cite{ROBUS,Fain2016} in the context of {\em randomized} or {\em fractional} allocations. Our main contribution is to study it in the context of {\em integer} allocations. 

\begin{definition}
\label{def:core}
Given an outcome $\bc$, we say that a set of agents $S \subseteq N$ form a blocking coalition if there exists an outcome $\bcp$ such that $(|S|/n) \cdot u_i(\bcp) \geq u_i(\bc)$ for all $i \in S$ and at least one inequality is strict. 
We say that an outcome $\bc$ is a {\bf core outcome} if it admits no blocking coalitions.
\end{definition}

Note that non-existence of blocking coalitions of size $1$ is equivalent to proportionality, and non-existence of blocking coalitions of size $n$ is equivalent to Pareto optimality. Hence, a core outcome is both proportional and Pareto optimal. However, the core satisfies a stronger property of being, in a sense, Pareto optimal for coalitions of {\em any size}, provided we scale utilities based on the size of the coalition. 
Another way of thinking about the core is to view it as a fairness property that enforces a proportionality-like guarantee for coalitions: \textit{e.g.}, if half of all agents have identical preferences, they should be able to get at least half of their maximum possible utility. It is important to note that the core provides a guarantee for every possible coalition. Hence, in satisfying the guarantee for a coalition $S$, a solution cannot simply make a single member $i \in S$ happy and ignore the rest as this would likely violate the guarantee for the coalition $S \setminus \{i\}$.



\paragraph{Approximate Core.} Since a proportional outcome is not guaranteed to exist (even allowing for multiplicative approximations), the same is true for the core. However, an additive approximation to the core still provides a meaningful guarantee, even when there are fewer elements than agents because it provides a non-trivial guarantee to large coalitions of like-minded agents. 
\begin{definition}
\label{def:approx}
For $\delta,\alpha \ge 0$, an outcome $\bc$ is a {\bf $\boldsymbol{(\delta, \alpha)}$-core outcome} if there exists no set of agents $S \subseteq N$ and outcome $\bcp$ such that 
$$
\frac{|S|}{n}\cdot u_i(\bcp) \ge (1+\delta) \cdot u_i(\bc) + \alpha
$$
for all $i \in S$, and at least one inequality is strict.
\end{definition}

A $(0,0)$-core outcome is simply a core outcome. A $(\delta,0)$-core outcome satisfies $\delta$-proportionality. Similarly, a $(0,1)$-core outcome $\bc$ satisfies the following relaxation of proportionality that is slightly weaker than proportionality up to one issue: for every agent $i \in N$, $u_i(\bc)+1 \ge Prop_i$. We note that this definition, and by extension, our algorithms satisfy scale invariance, i.e., they are invariant to scaling the utilities of any individual agent. Because we normalize utilities of the agents, the true additive guarantee is $\alpha$ times the maximum utility an agent can derive from a single element. Since an outcome can have many elements, an approximation with small $\alpha$ remains meaningful.

The advantage of an approximate core outcome is that it fairly reflects the will of a like-minded subpopulation relative to its size. An outcome satisfying approximate proportionality only looks at what {\em individual} agents prefer, and may or may not respect the collective preferences of sub-populations. We present such an instance in Example~\ref{eg:prop1} (Section~\ref{sec:fnw}), in effect showing that an approximate core outcome is arguably more fair.

In our results, we will assume $\delta < 1$ to be a small constant, and focus on making $\alpha$ as small as possible. In particular, we desire guarantees on $\alpha$ that exhibit sub-linear or no dependence on $n$, $m$, or any other parameters. Deriving such bounds is the main technical focus of our work. 

\subsection{Our Results} 
We present algorithms to find approximate core outcomes under matroid, matching, and general packing constraints. Our first result (Section~\ref{sec:mnw}) is the following:

\begin{theorem}
\label{thm:matroid}
If feasible outcomes are constrained to be bases of a matroid, then a $(0,2)$-core outcome is guaranteed to exist, and for any $\epsilon > 0$, a $(0,2+\epsilon)$-core outcome can be computed in time polynomial in $n, m,$ and $\sfrac{1}{\epsilon}$.
\end{theorem}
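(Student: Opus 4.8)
The plan is to pass through a fractional relaxation and exploit first-order optimality of the Nash social welfare program. First I would write the regularized Nash welfare convex program over the matroid base polytope $P_\M = \mathrm{conv}\{\vec{1}_B : B \text{ is a basis of } \M\}$: maximize $\Phi(\vec{x}) = \sum_{i \in N} \log\!\big(u_i(\vec{x}) + 1\big)$ subject to $\vec{x} \in P_\M$, where $u_i(\vec{x}) = \sum_{j} u_{ij} x_j$ extends utility to fractional points. The ``$+1$'' inside the logarithm is what lets us handle agents who are forced to zero utility by every single basis, and it keeps $\Phi$ finite and continuous on the compact set $P_\M$, so a maximizer $\vec{x}^*$ exists; since $\Phi$ is concave and $P_\M$ admits a polynomial-time separation oracle through the matroid rank function, $\vec{x}^*$ (or an approximately optimal point) can be computed in polynomial time. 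The only property of $\vec{x}^*$ we use is the first-order optimality inequality $\nabla\Phi(\vec{x}^*)\cdot(\vec{y}-\vec{x}^*)\le 0$ for all $\vec{y}\in P_\M$, which unwinds to
\[
\sum_{i\in N}\frac{u_i(\vec{y})}{u_i(\vec{x}^*)+1}\ \le\ \sum_{i\in N}\frac{u_i(\vec{x}^*)}{u_i(\vec{x}^*)+1}\ \le\ n \qquad\text{for every }\vec{y}\in P_\M.
\]

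The second ingredient is a rounding step special to matroids: from $\vec{x}^*$ I would produce an integral basis $\bc$ with $u_i(\bc)\ge u_i(\vec{x}^*)-1$ for \emph{every} agent $i$ simultaneously --- a basis within $\max_j u_{ij}=1$ of the fractional point in each agent's linear objective at once. Granting this, the theorem follows quickly. Suppose some coalition $S$ and feasible deviation $\bcp$ (which is a basis, so $\vec{1}_{\bcp}\in P_\M$) witness that $\bc$ is not a $(0,2)$-core outcome, i.e.\ $\tfrac{|S|}{n}u_i(\bcp)\ge u_i(\bc)+2$ for all $i\in S$ with strict inequality for some $i_0\in S$. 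By the rounding guarantee $u_i(\bc)+2\ge u_i(\vec{x}^*)+1$, so $u_i(\bcp)\ge \tfrac{n}{|S|}\big(u_i(\vec{x}^*)+1\big)$ and hence $\frac{u_i(\bcp)}{u_i(\vec{x}^*)+1}\ge\frac{n}{|S|}$ for each $i\in S$, strictly for $i_0$. Summing over $i\in S$ and discarding the nonnegative terms for $i\notin S$ gives $\sum_{i\in N}\frac{u_i(\bcp)}{u_i(\vec{x}^*)+1} > |S|\cdot\frac{n}{|S|}=n$, contradicting the displayed inequality with $\vec{y}=\vec{1}_{\bcp}$. This is exactly where the additive $2$ is spent: one unit is carried in the denominator by the $+1$ regularization, and one unit is the rounding loss. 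For the polynomial-time claim, computing $\tilde{\vec{x}}$ with $\nabla\Phi(\tilde{\vec{x}})\cdot(\vec{y}-\tilde{\vec{x}})\le \epsilon'$ for a suitably small $\epsilon'$ (inverse-polynomially related to $\epsilon$ and $m$) degrades the bound to $\sum_i \frac{u_i(\vec{y})}{u_i(\tilde{\vec{x}})+1}\le n+\epsilon'$, which propagates to a $(0,2+\epsilon)$ guarantee through the same computation; the rounding step is combinatorial and runs in polynomial time.

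The main obstacle is the rounding lemma. For a single agent it is trivial --- $\max_{\vec{y}\in P_\M}u_i(\vec{y})$ is attained at a basis and is at least $u_i(\vec{x}^*)$ --- but we need one basis that is simultaneously near-optimal for all $n$ agents, and naive approaches (rounding coordinate by coordinate, or reading off a good basis from a Carath\'eodory decomposition of $\vec{x}^*$) can accumulate unbounded loss. The argument has to use the exchange structure of the matroid: eliminate fractional coordinates of $\vec{x}^*$ one at a time by moving mass along exchange pairs (equivalently, along alternating cycles in the tight-set structure of $P_\M$), and bound, via a token/counting argument in the spirit of iterative rounding for matroids, the total decrease charged to any fixed agent's objective by the largest single-element utility, namely $1$. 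Making this per-agent bound hold deterministically and in polynomial time is the crux; the rest is bookkeeping around the Karush--Kuhn--Tucker conditions of the Nash welfare program, and the same template --- with a worse additive constant coming from a weaker rounding guarantee --- is what should yield the matching and packing results.
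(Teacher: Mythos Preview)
Your rounding lemma is false. Consider the partition matroid on $\{a_1,b_1,\ldots,a_k,b_k\}$ whose bases pick exactly one element from each pair, with $2^k$ agents indexed by the bases, where agent $B$ has utility $1$ for each element of $B$ and $0$ otherwise. By symmetry (and concavity of $\Phi$) the maximizer of $\Phi$ over the base polytope is the all-$\tfrac12$ vector, giving every agent utility $k/2$. But for any integral basis $\bc$ the complementary agent $\bar{\bc}$ receives utility $0$, so once $k\ge 3$ no basis satisfies $u_i(\bc)\ge u_i(\vec{x}^*)-1$ simultaneously for all $i$. With many agents one simply cannot preserve every agent's fractional utility up to a bounded additive loss; the integer optimum of the smooth Nash welfare \emph{is} a $(0,2)$-core outcome in this instance, but not because it tracks $\vec{x}^*$ agent by agent. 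Your sketch of an iterative-rounding argument cannot succeed here because every single swap along an exchange pair zeros out some agent.

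The paper's proof never passes through the fractional relaxation. It works directly with the integer objective $F(\bc)=\sum_i\ln(1+u_i(\bc))$ and uses the matroid basis-exchange bijection: for any two bases $\bc,\bcp$ there is a bijection $f\colon\bc\to\bcp$ such that each $\bc\setminus\{j\}\cup\{f(j)\}$ is again a basis. Summing the concavity estimates $h/(x+h)\le\ln(x+h)-\ln x\le h/x$ over these single-element swaps yields
\[
\sum_{j\in\bc}\Big(F\big(\bc\setminus\{j\}\cup\{f(j)\}\big)-F(\bc)\Big)\ \ge\ \sum_{i\in S}\frac{u_i(\bcp)}{u_i(\bc)+2}\ -\ n,
\]
so if a coalition $S$ and deviation $\bcp$ witnessed $\tfrac{|S|}{n}u_i(\bcp)\ge u_i(\bc)+2$ for all $i\in S$ with one strict inequality, the right side would be positive and some swap would strictly increase $F$, contradicting integer optimality. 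The additive $2$ arises from the two places where $u_{ij}\le 1$ is invoked to control the swap increments, not from a regularize-then-round decomposition. The polynomial-time algorithm is local search over single swaps on $F$; the same exchange inequality shows that an approximate local optimum is a $(0,2+\epsilon)$-core outcome.
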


In particular, for the public decision making framework, the private goods setting, and multi-winner elections (a.k.a. {\sc Knapsack} with unit sizes), there is an outcome whose guarantee for {\em every coalition} is close to the guarantee that~Conitzer et al. provide to individual agents~\cite{FPDM}.  

In Section~\ref{sec:matching}, we consider matching constraints. 
Our result now involves a tradeoff between the multiplicative and additive guarantees.

\begin{theorem}
\label{thm:matching}
If feasible outcomes are constrained to be matchings in an undirected graph, then for constant $\delta \in (0,1]$, a $\left(\delta,8+\sfrac{6}{\delta}\right)$-core outcome can be computed in time polynomial in $n$ and $m$.
\end{theorem}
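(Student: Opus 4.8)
The plan is to follow the template behind Theorem~\ref{thm:matroid}: solve a \emph{smoothed} Nash-social-welfare convex program over a fractional relaxation of the feasible region, observe that its optimum is essentially an exact core outcome of the fractional problem, and then round it to an integral matching while losing only an additive constant in the core guarantee. Concretely, fix a smoothing parameter $\beta=\Theta(1/\delta)$ and let $\vec x^\star$ (approximately) maximize $\sum_{i\in N}\log\!\big(u_i(\vec x)+\beta\big)$ over the fractional matching polytope $P=\{\vec x\ge 0:\sum_{e\ni v}x_e\le 1\ \forall v\in V\}$, optionally intersected with the odd-set (blossom) inequalities so that $P$ becomes the integral matching polytope; this is a concave maximization solvable to any accuracy $\epsilon$ in polynomial time. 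The first-order optimality condition at $\vec x^\star$ reads $\sum_i \frac{u_i(\vec y)-u_i(\vec x^\star)}{u_i(\vec x^\star)+\beta}\le 0$ for every feasible $\vec y$. If some coalition $S$ and matching $\bcp$ violated $\frac{|S|}{n}u_i(\bcp)\ge u_i(\vec x^\star)+\beta$ for all $i\in S$ (strictly for one), then taking $\vec y=\bcp$ and bounding the $i\in S$ terms from below by roughly $\frac{n}{|S|}-1$ (this is exactly where $\beta=\Theta(1/\delta)$ is used) and the $i\notin S$ terms from below by $-1$ yields a strictly positive sum, a contradiction. So $\vec x^\star$ is already a fractional $(0,\Theta(1/\delta))$-core outcome; the multiplicative slack $\delta$ in the theorem is entirely an artifact of the rounding.

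The second step is to round $\vec x^\star$ to an integral matching $\bc$ whose core guarantee degrades by only an additive constant, and then compose the two bounds (absorbing the multiplicative $(1+\delta)$ factor into the additive loss, which roughly doubles it) to reach $(\delta,8+6/\delta)$. Here one exploits matching structure: extreme points of the fractional matching polytope are half-integral, with the half-valued edges forming vertex-disjoint odd cycles together with an integral part, so one can move $\vec x^\star$ to such a point and take a near-perfect matching on each odd cycle; alternatively one can write $\vec x^\star\in P$ as a convex combination of matchings, or lift to the bipartite double cover (where the matching polytope is integral) and split the resulting degree-$\le 2$ subgraph of $G$ into two matchings. A key point that keeps the additive loss bounded is that the $|S|/n$ scaling in the definition of the core means we need \emph{not} preserve every agent's absolute utility -- impossible with bounded additive loss for general matchings -- but only ensure that no coalition's scaled-down target is violated: an agent with $u_i(\vec x^\star)=\Omega(1/\delta)$ need only retain a $(1-O(\delta))$-fraction, and an agent with $u_i(\vec x^\star)=O(1/\delta)$ can be handled by the additive term.

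The main obstacle is precisely this rounding step: the utility loss must be controlled for \emph{all} $n$ agents simultaneously, so off-the-shelf randomized or dependent rounding -- which only gives $O(\sqrt{u_i(\vec x^\star)\log n})$-type concentration and can drop small-utility agents to zero -- does not suffice on its own. The resolution is to combine (i) the combinatorial structure above, which confines each rounding decision to a small piece (an odd cycle, a short alternating path, a double-cover edge pair), with (ii) the fact that $\vec x^\star$ is not an arbitrary fractional matching but the smoothed-Nash optimum, so that utility ``restored'' to an agent by a local rounding choice can be charged against the first-order optimality inequality globally rather than tracked agent by agent. The extra factor-$2$-type losses incurred in handling non-bipartite graphs (the odd cycles, or the double-cover split) are what inflate the additive constant from the matroid case's $2$ to the stated $8+6/\delta$, and are also why a strictly positive $\delta$ appears here whereas the matroid result is tight at $\delta=0$.
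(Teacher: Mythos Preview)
Your approach is genuinely different from the paper's, and the rounding step is a real gap, not just an unfilled detail.

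\textbf{What the paper does.} The paper never computes a fractional solution and never rounds anything. It runs a \emph{local search} directly on integral matchings for the smooth objective $F(\bc)=\sum_i\ln\bigl(1+2\kappa+u_i(\bc)\bigr)$ with $\kappa=2/\delta$. A single local move is an \emph{augmentation}: a matching $T\subseteq E\setminus\bc$ of size at most $\kappa$, swapped in by removing the edges of $\bc$ that conflict with $T$. The analysis supposes the output $\bc$ is not $(\delta,8+3\kappa)$-core, extracts a deviating coalition $S$ and matching $\bcp$, defines edge weights $w_j,w'_j$ from the current utilities, and invokes a lemma (adapted from Hougardy--Vinkemeier's PRAM approximate max-weight matching) that decomposes $\bc\triangle\bcp$ into a multiset of $\le\kappa r$ augmentations each of size $\le\kappa$ whose total ``gain'' is at least $\kappa W'-(\kappa+1)W\ge n$. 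One augmentation then improves $F$ by at least $n/(\kappa r)$, contradicting local optimality. The bound on augmentation size is exactly why the running time is polynomial only for constant $\delta$, and why the additive term scales as $1/\delta$.

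\textbf{Where your argument breaks.} Your first step (the fractional smoothed-Nash optimum is a fractional core outcome) is fine, and as you note does not even need smoothing. The trouble is entirely in the rounding, and the resolutions you sketch do not work. Half-integrality confines the fractional part to vertex-disjoint odd cycles, but rounding an odd cycle to a matching can cost a single agent $\Omega(k)$ utility: on $C_{2k+1}$ with all $x_e=\tfrac12$, an agent who values only the even-indexed edges has fractional utility $(k{+}1)/2$ but gets $0$ from the matching $\{e_1,e_3,\ldots,e_{2k-1}\}$. With two equal-sized groups of agents preferring opposite parities, whichever matching you pick, half the agents drop to $0$ and can jointly deviate to a matching worth $k$ to each, so the rounded solution is not $(\delta,o(k))$-core. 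The appeal to ``charging against the first-order optimality inequality globally'' is exactly what the paper accomplishes via local-search optimality at an \emph{integral} point plus the augmentation decomposition; the first-order condition at a \emph{fractional} $\vec x^\star$ gives you nothing about an arbitrary rounded $\bc$, because $\bc$ is not a small perturbation of $\vec x^\star$. If you want to salvage a fractional-then-round argument you would need a matching analogue of the Shmoys--Tardos generalized-assignment rounding that controls \emph{all} agents' utilities up to an additive constant, and no such tool is cited or supplied.
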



\medskip
Our results in Section~\ref{sec:main} are for general packing constraints. Here, our guarantee depends on the width $\rho$ from Equation~(\ref{eq:width}), which captures the difficulty of satisfying the constraints. In particular, the guarantee improves if the constraints are easier to satisfy. This is the most technical result of the paper, and involves different techniques than those used in proving Theorems~\ref{thm:matroid} and~\ref{thm:matching}; we present an outline of the techniques in Section~\ref{sec:idea}. 

\begin{theorem}
\label{thm:packing}
For constant $\delta \in (0,1)$, given $K$ packing constraints $A \vec{x} \le \vec{b}$ with width $\rho$ and $b_k = \omega \left(\frac{\log K}{\delta^2}\right)$ for all $k \in [K]$,  there exists a polynomial time computable $(\delta,\alpha)$-core solution, where 
$$
\alpha = O\left( \frac{1}{\delta^4}  \cdot \log\left(\frac{\min(V_{\max},n,\rho) \cdot \log^* V_{\max}}{\delta} \right)\right).
$$
\end{theorem}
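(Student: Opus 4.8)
The plan is to start from the convex program that maximizes the Nash social welfare (NSW) over the fractional relaxation of the packing polytope $\{\vec x \in [0,1]^m : A\vec x \le \vec b\}$, and then round the fractional optimum to an integral solution while controlling the per-agent utility loss. Let $\vec x^*$ be an (approximate) maximizer of $\sum_i \log u_i(\vec x)$ over the relaxed polytope. The first step is to show that $\vec x^*$ itself is an \emph{exact} (fractional) core outcome, or very close to it; this is the standard consequence of first-order optimality of the Eisenberg--Gale-type program, giving supporting Lindahl-style prices and hence the property that no coalition $S$ can scale up by $n/|S|$ and Pareto-improve. So the entire difficulty is in the rounding: we must produce integral $\vec x$ with $u_i(\vec x) \ge \tfrac{1}{1+\delta} u_i(\vec x^*) - O(\alpha)$ simultaneously for all $i$, while keeping $A\vec x \le \vec b$ (this is where the width-$\rho$ and $b_k = \omega(\log K/\delta^2)$ hypotheses and the $\log(\cdot)$ term enter).

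The key steps, in order. (1) Solve the NSW convex program on the relaxed polytope to within a $(1+\epsilon)$ multiplicative factor in polynomial time; argue via KKT/Lindahl prices that $\vec x^*$ is fractionally in the core. (2) \emph{Reduce the effective utility range.} The $\log^* V_{\max}$ and $\min(V_{\max}, n, \rho)$ terms in $\alpha$ strongly suggest a preprocessing/bucketing step: for each agent, truncate or ignore contributions below some threshold so that the number of ``relevant'' utility scales per agent is $O(\log(\min(V_{\max},n,\rho)/\delta))$, and handle the tiny tail separately (costing only $O(1/\delta^{?})$ additively). This is what lets the final bound depend only logarithmically on the width rather than polynomially. (3) \emph{Randomized rounding with alteration.} Round $x_j^* \in [0,1]$ independently (or via pipage/dependent rounding to respect one constraint exactly if there is a dominant one), then use concentration: since each packing constraint has slack $b_k = \omega(\log K/\delta^2)$, a Chernoff bound gives $A\vec x \le (1+\delta)\vec b$ with high probability, and a second, deterministic ``fix-up'' pass removes a $\delta$-fraction of load to restore exact feasibility while losing only $O(1/\delta)$ utility per agent per scale. (4) For the utility lower bounds, apply Chernoff/Bernstein per agent per bucket: an agent whose fractional utility at a given scale is $\Omega(\log(\text{\#buckets}\cdot n)/\delta^2)$ keeps a $(1-\delta)$ fraction with high probability, and an agent below that threshold only loses $O(\alpha)$ additively \emph{at that scale}; summing the additive losses over the $O(\log(\min(V_{\max},n,\rho)/\delta))$ buckets and the extra $1/\delta$ factors from the fix-up gives the claimed $\alpha = O\!\big(\tfrac{1}{\delta^4}\log(\tfrac{\min(V_{\max},n,\rho)\log^* V_{\max}}{\delta})\big)$. (5) A union bound over the $O(n + K)$ bad events, together with $b_k = \omega(\log K/\delta^2)$, makes the whole construction succeed with positive probability, and it can be derandomized (e.g.\ by the method of conditional expectations on a pessimistic estimator) to get a deterministic polynomial-time algorithm.

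The main obstacle I expect is step (2)--(4) interacting correctly: controlling the additive core violation \emph{simultaneously across all coalition sizes} after rounding. A coalition $S$ of size $|S| = \gamma n$ needs $u_i(\vec x) \ge \gamma\, u_i(\vec x^{*}) - \alpha$ to avoid being blocked only in the direction that matters, so one must argue that the fractional core property of $\vec x^*$ degrades by at most a $(1+\delta)$ multiplicative and $\alpha$ additive amount under rounding \emph{uniformly in $\gamma$} --- small coalitions are the dangerous case because then $\gamma u_i(\vec x')$ is small and an $\alpha$ slack is relatively generous, but large coalitions approaching $n$ reduce to near-Pareto-optimality, which the rounded NSW solution must still approximately satisfy. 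Reconciling these two regimes --- essentially showing the single rounded solution is good against \emph{every} rescaling factor at once --- while keeping the loss only logarithmic in the width, is the crux; the bucketing of utility scales is the device that makes it go through, and getting the exponents of $1/\delta$ right in the concentration bounds and the fix-up pass is the fiddly part.
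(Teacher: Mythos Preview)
Your proposal has the right starting point (the fractional MNW solution is in the core, so the work is all in rounding), but there is a genuine gap in the rounding analysis that makes the approach as written fail.

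The problem is your step (4)--(5): you aim to preserve $u_i(\vec x) \ge \tfrac{1}{1+\delta} u_i(\vec x^*) - O(\alpha)$ \emph{for every agent simultaneously}, via per-agent Chernoff plus a union bound over $O(n+K)$ events. This cannot work in general. The paper gives a concrete instance (a unit-size knapsack with budget $m/2$ and one agent for every feasible integral outcome, so $n = \binom{m}{m/2}$) in which \emph{every} integral outcome gives some agent utility $0$ while $u_i(\vec x^*) = m/4$; hence no rounding can preserve all agents' utilities up to a small additive loss. A union bound over $n$ agents would force $\alpha = \Omega(\log n)$, which is $\Omega(m)$ here, not the claimed bound. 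Because your entire accounting (the bucketing, the additive loss summation, the derandomization via pessimistic estimators) is built on per-agent preservation, the whole scheme collapses once $n$ is large relative to $m$.

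The paper's fix involves two ideas you are missing. First, it \emph{mixes} the fractional MNW solution with a separate ``maximally proportionally fair'' (MPF) solution $\vec y$ satisfying $u_i(\vec y) \ge V_i/R - 1$ with $R \le \min(V_{\max}, n, \rho)$, and rounds $(1-\gamma)\vec x^* + \gamma \vec y$; the $\gamma V_i/R$ boost to expected utility is what makes Chernoff strong for agents with large $V_i$ even when $u_i(\vec x^*)$ is tiny, and is the real source of the $\min(V_{\max},n,\rho)$ term. Second, rather than union-bounding over agents, it groups agents by $V_i$ into $L = \Theta(\log^* V_{\max})$ ``heavy'' buckets (this is where $\log^*$ enters) plus one light bucket, and uses Markov to bound only the \emph{expected fraction} of violators $F_\ell$ in each bucket. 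The core argument then splits a putative deviating coalition $S$: if $S$ is mostly non-violators, the fractional core property handles it; if $S$ contains many violators from some bucket $G_\ell$, then $|S|$ is so small relative to $n$ that $\tfrac{|S|}{n} V_i \le O(1/\gamma)$ for those agents. Thus the proof never needs all agents to be preserved---it only needs violators to be rare enough that any coalition dominated by them has negligible scaled utility. Your proposal has no analogue of either ingredient.
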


Here, $\log^*$ is the iterated logarithm, which is the number of times the logarithm function must be iteratively applied before the result becomes less than or equal to $1$.  Recall that $V_{\max}$ is the maximum utility an agent can have for an outcome (thus $V_{\max} \le m$); our additive error bound is a vanishing fraction of this quantity. Our bound is also small if the number of agents $n$ is small. Finally, the guarantee improves for small $\rho$, i.e., as the packing constraints become easier to satisfy. For instance, in participatory budgeting, if the total cost of all projects is only a constant times more than the budget, then our additive guarantee is close to a constant.  

Note that $V_{\max}$ (which is bounded by $m$), $n$, and $\rho$ are all unrelated quantities --- either could be large with the other two being small. In fact, in Section~\ref{sec:main}, we state the bound more generally in terms of what we call the {\em maximally proportionally fair value} $R$, which informally captures the (existential) difficulty of finding a proportionally fair allocation. The quantity $\min(V_{\max},n,\rho)$ stems from three different bounds on the value of $R$.

In Example~\ref{eg:IS} (Appendix~\ref{sec:Examples}), we show that the lower bound on $\vec{b}$ in the above theorem is necessary: if $\vec{b} = O(1)$, then no non-trivial approximation to the core can be guaranteed, even when $\rho$ is a constant. 

\medskip
Finally, in Appendix~\ref{sec:prop}, we consider a different (and more classical) version of the core for general packing constraints, in which a deviating coalition gets a proportional share of resources rather than a proportional share of utility. 
We show that our techniques provide a similar approximation to this version of the core, although we do not provide an efficient algorithm in this model. 

\subsection{Related Work}
\paragraph{Core for Public Goods.}  The notion of core is borrowed from cooperative game theory and was first phrased in game theoretic terms by~\cite{scarfCore}. It has been extensively studied in public goods settings~\cite{lindahlCore,coreConjectureCounter,Fain2016}. Most literature so far has considered the core with {\em fractional allocations}. Our definition of core (Definition~\ref{def:core}) assumes the utility of a deviating coalition is scaled by the size of the coalition. For fractional allocations, one such core allocation coincides with the well-known notion of {\em proportional fairness}, the extension of the Nash bargaining solution~\cite{nashBargaining}. This solution maximizes the product of the utilities of the agents, and we present the folklore proof in Section~\ref{sec:fractional}.  Our main focus is on finding {\em integer} allocations that approximate the core, and to the best of our knowledge, this has not been studied previously.  

A simpler property than the core is proportionality, which like the core, is impossible to satisfy to any multiplicative approximation using integral allocations. To address this problem, \cite{FPDM} defined proportionality up to one issue in the public decision making framework, inspired by related notions for private goods. 
This guarantee is satisfied by the integral outcome maximizing the {\em Nash welfare} objective, which is the {\em geometric} mean of the utilities to the agents. For public goods, this objective is not only {\sc NP-Hard} to approximate to any multiplicative factor, but approximations to the objective also do not retain the individual fairness guarantees.

We extend the notion of additive approximate proportionality to additive approximate core outcomes, which provides meaningful guarantees even when there are fewer goods than agents. Unlike proportionality, we show in Section~\ref{sec:fnw} that the approach of computing the optimal integral solution to the Nash welfare objective fails to provide a reasonable approximation to the core.  Therefore, for our results about matroid constraints (Theorem~\ref{thm:matroid}) and matching constraints (Theorem~\ref{thm:matching}), we slightly modify the integer Nash welfare objective and add a suitable constant term to the utility of each agent. We show that maximizing this smooth objective function achieves a good approximation to the core.  However, maximizing this objective is still {\sc NP-hard}~\cite{FairKnapsack}, so we devise local search procedures that run in polynomial time and still give good approximations of the core. In effect, we make a novel connection between appropriate {\em local optima} of smooth Nash Welfare objectives and the core.

\paragraph{Fairness on Endowments.} Classically, the core is defined in terms of agent endowments, not scaled utilities. In more detail, in Definition~\ref{def:core}, we assumed that when a subset $S$ of agents deviates, they can choose any feasible outcome; however, their utility is reduced by a factor that depends on $|S|$. A different notion of core is based on endowments~\cite{scarfCore,lindahlCore} and has been considered in the context of participatory budgeting~\cite{Fain2016} and in {\em proportional representation} of voters in multi-winner elections with approval voting. In this notion, a deviating coalition gets a proportional share of resources rather than a proportional share of utility. For example, if the elements have different sizes, and we need to select a subset of them with total size at most $B$, then a deviating coalition $S$ would get to choose an outcome with total size at most $B|S|/n$ instead of $B$, but would not have its utility scaled down. This notion builds on the seminal work of Foley on the Lindahl equilibrium~\cite{lindahlCore}, from which it can be shown that such a core outcome always exists when fractional allocations are allowed. However, it is not known how to compute such a core outcome efficiently, and further, it is difficult to define such a notion of endowments in settings such as matroid or matching constraints. In the context of integer allocations with packing constraints, we extend our techniques to provide approximations to the notion of core with endowments in Appendix~\ref{sec:prop}, though this is not the main focus of our paper.

The notion of core with endowments logically implies a number of fairness notions considered in multi-winner election literature, such as justified representation, extended justified representation~\cite{Brill}, and proportional justified representation~\cite{Sanchez}. Approval-based multi-winner elections are a special case of packing constraints, in which voters (agents) have binary utilities over a pool of candidates (elements), and we must select a set of at most $B$ candidates. The idea behind proportional representation is to define a notion of large cohesive groups of agents with similar preferences, and ensure that such coalitions are proportionally represented. The core on endowments represents a more general condition that holds for all coalitions of agents, not just those that are large and cohesive. Nevertheless, our local search algorithms for Theorems~\ref{thm:matroid} and~\ref{thm:matching} are similar to local search algorithms for proportional approval voting (PAV)~\cite{Thiele,PJR2018} that achieve proportional representation. It would be interesting to explore the connection between these various notions in greater depth.

\paragraph{Private Goods and Envy-freeness.} Private goods are a special case of public goods with matroid constraints. Fair allocation of private goods is a widely studied topic~\cite{varian,drf,Parkes:2012:BDR:2229012.2229075,kellyTCP}. A common fairness criterion for private goods is envy-freeness: that no agent should (strongly) prefer the allocation of another agent. For fractional allocations, the classic context for envy-free allocation is cake cutting~\cite{cakeCutting, cakeCuttingProtocol}. For integral allocations, envy-free allocations or multiplicative approximations thereof may not exist in general. Recent work has introduced envy-freeness up to one good~\cite{LMMS04,budish2,envyFreeUpTo1,envyFreeUpToAny}, an additive approximation of envy-freeness. The notion of envy does not extend as is to public goods, and the core can be thought of as enforcing envy-freeness across demographics. 
We note that in addition to resource allocation, group based fairness is also appearing as a desideratum in machine learning. Specifically, related notions may provide a tool against gerrymandered classifiers that appear fair on small samples, but not on structured subsets~\cite{MLFairness}.

\paragraph{Strategyproofness.} In this work, we will not consider game-theoretic incentives for manipulation  for two reasons. First, even for the restricted case of private goods allocation, preventing manipulation leads to severely restricted mechanisms. For instance, \cite{Schum97} shows that the only strategyproof and Pareto efficient mechanisms are dictatorial, and thus highly unfair, even when there are only two agents with additive utilities over divisible goods. Second, our work is motivated by public goods settings with a large number of agents, such as participatory budgeting, wherein individual agents often have limited influence over the final outcome. It would be interesting to establish this formally, using notions like {\em strategyproofness in the large}~\cite{AB17}.

\section{Prelude: Nash Social Welfare}
Our approach to computing approximate core solutions revolves around the Nash social welfare, which is the product (or equivalently, the sum of logarithms) of agent utilities. This objective is commonly considered to be a natural tradeoff between the fairness-blind utilitarian social welfare objective (maximizing the sum of agent utilities) and the efficiency-blind egalitarian social welfare objective (maximizing the minimum agent utility). This function also has the advantage of being \textit{scale invariant} with respect to the utility function of each agent, and in general, preferring more equal distributions of utility. 

\subsection{Integer Nash Welfare and Smooth Variants}
\label{sec:fnw}
The integer {\em Max Nash Welfare} (MNW) solution~\cite{envyFreeUpTo1,FPDM} is an outcome $\bc$ that maximizes $\sum_{i \in N} \ln u_i(\bc) $.  More technically, if every integer allocation gives zero utility to at least one agent, the MNW solution first chooses a largest set $S$ of agents that can be given non-zero utility simultaneously, and maximizes the product of utilities to agents in $S$.


\cite{FPDM} argue that this allocation is reasonable by showing that it satisfies proportionality up to one issue for public decision making. A natural question is whether it also provides an approximation of the core. {\em We answer this question in the negative.} The example below shows that even for public decision making (a special case of matroid constraints), the integer MNW solution may fail to return a $(\delta,\alpha)$-core outcome, for any $\delta = o(m)$ and $\alpha = o(m)$.

\begin{example}
\label{eg:prop1}
\label{eg:mnw}

\normalfont{	Consider an instance of public decision making~\cite{FPDM} with $m$ issues and two alternatives per issue. Specifically, each issue $t$ has two alternatives $\{a_1^t, a_2^t\}$, and exactly one of them needs to be chosen. There are two sets of agents $X = \set{1,\ldots,m}$ and $Y = \set{m+1,\ldots,2m}$. Every agent $i \in X$ has $u_i^i(a_1^i) = 1$, and utility 0 for all other alternatives. Every agent $i \in Y$ has $u_i^t(a_2^t) = 1$ and $u_i^t(a_1^t) = 1/m$ for all issues $i \in \{1,2,\ldots,m\}$.  Visually, this is represented as follows.

\begin{center}
	\begin{tabular}{ c | c c | c c | c | c c}
		& $a_1^1$ & $a_2^1$ & $a_1^2$ & $a_2^2$ & $\hdots$ & $a_1^m$ & $a_2^m$ \\
		\hline
		$u_{1 \in X}$ & 1 & 0 & 0 & 0 & $\hdots$ & 0 & 0 \\
		$u_{2 \in X}$ & 0 & 0 & 1 & 0 & $\hdots$ & 0 & 0 \\
		\vdots & \vdots & \vdots & \vdots & \vdots &  & \vdots & \vdots \\
		$u_{m \in X}$ & 0 & 0 & 0 & 0 & $\hdots$ & 1 & 0 \\
		$u_{i \in Y}$ & $1/m$ & 1 & $1/m$ & 1 & $\hdots$ & $1/m$ & 1
	\end{tabular}
\end{center}

The integer MNW outcome is $\mathbf{c} = (a_1^1, a_1^2, \hdots, a_1^{m})$ because any other outcome gives zero utility to at least one agent. However, coalition $Y$ can deviate, choose outcome $\mathbf{c'} = (a_2^1, a_2^2, \hdots, a_2^{m})$, and achieve utility $m$ for each agent in $Y$. For $\mathbf{c}$ to be a $(\delta,\alpha)$-core outcome, we need
\begin{align*}
\exists i \in Y : (1+\delta) \cdot u_i(\mathbf{c}) + \alpha \ge \frac{|Y|}{|Y|+|X|} \cdot u_i(\mathbf{c'}) \quad \Rightarrow \quad 1+\delta + \alpha \ge \frac{m}{2}.
\end{align*}
Hence, $\mathbf{c}$ is not a $(\delta,\alpha)$-core outcome for any $\delta = o(m)$ and $\alpha = o(m)$. In contrast, it is not hard to see that $\mathbf{c'}$ is a $(0,1)$-core outcome because each agent in $X$ gets utility at most one in any outcome. 

Further, note that outcome $\mathbf{c}$ gives every agent utility $1$. Since $Prop_i \le 1$ for each agent $i$, $\mathbf{c}$ satisfies proportionality, and yet fails to provide a reasonable approximation to the core. One may argue that $\mathbf{c'}$, which is a $(0,1)$-core outcome, is indeed fairer because it respects the utility-maximizing choice of half of the population; the other half of the population cannot agree on what they want, so respecting their top choice is arguably a less fair outcome. Hence, the example also shows that outcomes satisfying proportionality (or proportionality up to one issue) can be very different from and less fair than approximate core outcomes.  }

\end{example}

\paragraph{Smooth Nash Welfare.}
One issue with the Nash welfare objective is that it is sensitive to agents receiving zero utility. We therefore consider the following smooth Nash welfare objective:
\begin{equation}
\label{eq:FK}F(\bc) :=  \sum_{i \in N}  \ln \left(\ell +u_i(\bc) \right)  
\end{equation}
where $\ell \ge 0$ is a parameter. Note that $\ell = 0$ coincides with the Nash welfare objective. The case of $\ell = 1$ was considered by~\cite{FairKnapsack}, who showed it is {\sc NP-Hard} to optimize. Recall that we normalized agent utilities so that each agent has a maximum utility of $1$ for any element, so when we add $\ell$ to the utility of agent $i$, it is equivalent to adding $\ell \max_j u_{ij}$ to the utility of agent $i$ when utilities are not normalized. 

We show that local search procedures for the smooth Nash welfare objective, for appropriate choices of $\ell$, yield a $(0,2)$-core outcome for matroid constraints (Section~\ref{sec:mnw}) and a $\left(\delta, O\left(\frac{1}{\delta}\right)\right)$-core outcome for matching constraints (Section~\ref{sec:matching}). In contrast, in Example~\ref{eg:knapsack} (Appendix~\ref{sec:Examples}) we show that optimizing any fixed smooth Nash welfare objective cannot guarantee a good approximation to the core, even with {\em a single} packing constraint, motivating the need for a different algorithm.

\renewcommand{\P}{\mathcal{P}}
\subsection{Fractional Max Nash Welfare Solution}
\label{sec:fractional}
For general packing constraints, we use a fractional relaxation of the Nash welfare program. A {\em fractional outcome} consists of a vector $\mathbf{w}$ such that $w_j \in [0,1]$ measures the fraction of element $j$ chosen. The utility of agent $i$ under this outcome is $u_i(\mathbf{w}) = \sum_{j=1}^m w_j u_{ij}$. 
The fractional {\em Max Nash Welfare} (MNW) solution is a fractional allocation that maximizes the Nash welfare objective (without any smoothing). Define the packing polytope as:  
$$\P = \left\{ \mathbf{w} \in [0,1]^m \ | \ \textstyle\sum_{j=1}^{m} a_{kj}  w_j  \leq   b_k,  \forall k \in [K] \right\}$$
Then the fractional MNW solution is $\argmax_{\bc \in \P} \sum_i \ln u_i(\bc)$. 

It is easy to show that the fractional MNW allocation lies in the core.  Let $\bc$ denote the optimal fractional allocation to the MNW program. By first order optimality, for any other allocation $\bd$, 
\begin{equation}
\nabla  \ln \vec{u(\bc)} \cdot \left(\vec{u(\bd)} - \vec{u(\bc)} \right) \le 0 \quad \Rightarrow \quad \sum_{i \in N} \frac{1}{u_i(\bc)} \left(u_i(\bd) - u_i(\bc) \right) \leq 0 \quad \Rightarrow \quad   \sum_{i \in N} \frac{u_i(\bd)}{u_i(\bc)} \leq n.
\label{eqn:prop-fairness}
\end{equation}

Suppose for contradiction that $\bc$ is not a core outcome.  Then there exists a set of agents $S \subseteq N$ and an outcome $\bd$ such that $u_i(\bd) \ge (n/|S|) \cdot u_i(\bc)$, and at least one inequality is tight. This implies $\sum_{i \in S} u_i(\bd)/u_i(\bc) > n$. However, this contradicts Equation~\eqref{eqn:prop-fairness}. Thus $\bc$, the optimal fractional solution to the MNW program, is a core solution.

\medskip
For the allocation of public goods, it can be shown that the fractional MNW outcome can be irrational despite rational inputs~\cite{AACK+17}, preventing an exact algorithm. For our approximation results, a fractional solution that approximately preserves the utility to each agent would suffice, and we prove the following theorem in Appendix~\ref{app:fractional}. 

\begin{theorem} 
\label{thm:fractional}
For any $\epsilon, \delta > 0$, we can compute  a fractional $(\delta,\epsilon)$-core outcome in time polynomial in the input size and $\log \frac{1}{\epsilon \delta}$.
\end{theorem}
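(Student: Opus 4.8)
The plan is to prove Theorem~\ref{thm:fractional} by combining the exact first-order optimality argument of Section~\ref{sec:fractional} with an approximate convex-optimization routine (e.g.\ the ellipsoid method or interior-point methods) applied to a mildly regularized version of the Nash welfare program. The key difficulty, as the excerpt already flags, is that the exact fractional MNW optimum may be irrational, so we cannot compute it exactly; moreover the objective $\sum_i \ln u_i(\mathbf w)$ is $-\infty$ on part of $\P$ and has unbounded gradient near the boundary, so off-the-shelf convergence bounds do not immediately apply. We handle both issues by shifting the utilities: instead of $\sum_i \ln u_i(\mathbf w)$ we maximize $G(\mathbf w) := \sum_{i \in N} \ln\!\left(\theta + u_i(\mathbf w)\right)$ over $\mathbf w \in \P$, where $\theta = \theta(\epsilon,\delta)$ is a tiny positive constant to be fixed at the end. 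This function is concave, finite and Lipschitz-with-bounded-Hessian on all of $\P$ (since $\theta + u_i(\mathbf w) \ge \theta > 0$), so we can compute in time polynomial in the input size and $\log\frac{1}{\eta}$ a point $\mathbf w^\sharp \in \P$ with $G(\mathbf w^\sharp) \ge \max_{\mathbf w \in \P} G(\mathbf w) - \eta$ for any target accuracy $\eta > 0$.

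Next I would mimic the derivation of Equation~\eqref{eqn:prop-fairness} for the approximate, shifted optimum. First-order optimality of an $\eta$-approximate maximizer $\mathbf w^\sharp$ of the concave function $G$ gives, for every $\mathbf d \in \P$,
\[
\nabla G(\mathbf w^\sharp) \cdot (\mathbf{u}(\mathbf d) - \mathbf{u}(\mathbf w^\sharp)) \le \gamma,
\]
where $\gamma$ is a slack term controlled by $\eta$ together with crude bounds on $\|\mathbf u(\mathbf d) - \mathbf u(\mathbf w^\sharp)\|$ and on the curvature of $G$ (here $V_{\max} \le m$ and $\theta$ enter). Writing out the gradient, this reads $\sum_{i \in N} \frac{u_i(\mathbf d) - u_i(\mathbf w^\sharp)}{\theta + u_i(\mathbf w^\sharp)} \le \gamma$, hence
\[
\sum_{i \in N} \frac{u_i(\mathbf d)}{\theta + u_i(\mathbf w^\sharp)} \le n + \gamma.
\]
Now suppose, for contradiction, that $\mathbf w^\sharp$ is not a fractional $(\delta,\epsilon)$-core outcome: there is a coalition $S$ and an outcome $\mathbf d \in \P$ with $\frac{|S|}{n} u_i(\mathbf d) \ge (1+\delta) u_i(\mathbf w^\sharp) + \epsilon$ for all $i \in S$. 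Then for each $i \in S$,
\[
\frac{u_i(\mathbf d)}{\theta + u_i(\mathbf w^\sharp)} \;\ge\; \frac{n}{|S|} \cdot \frac{(1+\delta) u_i(\mathbf w^\sharp) + \epsilon}{\theta + u_i(\mathbf w^\sharp)} \;\ge\; \frac{n}{|S|}\,(1+\delta)
\]
provided we choose $\theta \le \epsilon/(1+\delta)$, since then $(1+\delta)u_i(\mathbf w^\sharp) + \epsilon \ge (1+\delta)(u_i(\mathbf w^\sharp) + \theta)$. Summing over $i \in S$ gives $\sum_{i \in S} \frac{u_i(\mathbf d)}{\theta + u_i(\mathbf w^\sharp)} \ge (1+\delta)\, n$, and dropping the nonnegative terms for $i \notin S$ we get $(1+\delta)\,n \le n + \gamma$, i.e.\ $\gamma \ge \delta n$.

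Thus it suffices to drive the optimization slack below $\delta n$. The main obstacle is bookkeeping the slack term $\gamma$: I would bound it by noting $\nabla G(\mathbf w^\sharp) \cdot (\mathbf u(\mathbf d) - \mathbf u(\mathbf w^\sharp)) = G'$-type first-order terms are within $O(\eta \cdot L)$ of the true optimality gap, where $L$ is a polynomial bound (in $m$ and $1/\theta$) on the relevant norms over the compact polytope $\P$; more carefully, using concavity, an $\eta$-approximate maximizer satisfies $\langle \nabla G(\mathbf w^\sharp), \mathbf d - \mathbf w^\sharp\rangle \le \eta + O(\|\nabla^2 G\| \cdot \mathrm{diam}(\P)^2)$ along the segment, which is $\le \eta \cdot \mathrm{poly}(m, 1/\theta)$. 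Setting $\theta = \epsilon/(2(1+\delta))$ and then choosing $\eta$ so that this quantity is below $\delta n$ forces $\log\frac{1}{\eta} = O\!\left(\log\frac{m}{\epsilon\delta}\right)$, which is polynomial in the input size and $\log\frac{1}{\epsilon\delta}$ as claimed. (If one prefers to avoid the Hessian term entirely, an alternative is to take $\mathbf w^\sharp$ to be an \emph{exact} maximizer of a piecewise-linearized surrogate of $G$ on a fine grid, or to invoke that the ellipsoid method on a concave program returns a point whose \emph{subgradient inequality} slack is directly controllable; either way the conclusion is the same.) Finally, since a fractional $(\delta,\epsilon)$-core outcome exists (indeed $(0,0)$ is achieved by the exact MNW optimum) and we have shown the computed $\mathbf w^\sharp$ is one, the theorem follows; I would close by remarking that scale-invariance of $G$ is inherited from $\ln$, and that the $\log^* $ subtleties of Theorem~\ref{thm:packing} do not arise here because we are not rounding to an integer outcome.
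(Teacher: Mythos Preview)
Your overall architecture is sound and in fact shares the two key ingredients with the paper's proof: the smoothing shift $\theta = \epsilon' := \epsilon/(1+\delta)$, and the contradiction argument showing that any $\mathbf w^\sharp$ satisfying $\sum_{i} \frac{u_i(\mathbf d)+\epsilon'}{u_i(\mathbf w^\sharp)+\epsilon'} \le n(1+\delta)$ for all feasible $\mathbf d$ is a $(\delta,\epsilon)$-core outcome. Your inequality $(1+\delta)u_i(\mathbf w^\sharp)+\epsilon \ge (1+\delta)(\theta+u_i(\mathbf w^\sharp))$ when $\theta \le \epsilon/(1+\delta)$ is exactly the computation the paper performs.

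The gap is in the step you yourself flag as ``the main obstacle'': converting $\eta$-optimality in objective value into an approximate first-order condition. Your stated bound
\[
\langle \nabla G(\mathbf w^\sharp),\, \mathbf d - \mathbf w^\sharp\rangle \;\le\; \eta + O\!\bigl(\|\nabla^2 G\|\cdot \mathrm{diam}(\P)^2\bigr)
\]
is not correct: the right-hand side does not tend to $0$ as $\eta \to 0$, so it cannot imply the subsequent claim that the slack is $\le \eta \cdot \mathrm{poly}(m,1/\theta)$. The correct relationship for a concave $G$ with bounded Hessian is of the form $\langle \nabla G(\mathbf w^\sharp), \mathbf d - \mathbf w^\sharp\rangle = O\bigl(\sqrt{\eta \cdot \|\nabla^2 G\| \cdot \mathrm{diam}(\P)^2}\bigr)$ (from a second-order Taylor bound along the segment and the fact that $G$ nowhere exceeds $G(\mathbf w^\sharp)+\eta$). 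With that bound your argument does go through, since $\|\nabla^2 G\|$ and $\mathrm{diam}(\P)$ are polynomially bounded in $n,m,1/\theta$; but as written the step is a genuine error, not just missing bookkeeping.

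The paper sidesteps this difficulty entirely by a change of objective: rather than approximately maximizing $G(\mathbf w)=\sum_i \ln(\epsilon'+u_i(\mathbf w))$ and then arguing about its gradient, it \emph{directly} minimizes the convex function
\[
Q(\vec U) \;=\; \max_{\vec{U'}\in\mathcal U}\ \sum_{i\in N} \frac{U'_i+\epsilon'}{U_i+\epsilon'}
\]
over feasible utility vectors $\vec U$. This $Q$ \emph{is} the first-order quantity you need; its minimum is exactly $n$ (attained at the smoothed MNW optimum), it is convex as a pointwise maximum of convex functions, its subgradient is computable by a single LP, and the ellipsoid method therefore returns $\hat{\vec U}$ with $Q(\hat{\vec U}) \le n+\delta \le n(1+\delta)$ in the stated time. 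No Hessian bound or $\sqrt{\eta}$ translation is needed. Your parenthetical alternative --- ``invoke that the ellipsoid method \ldots\ returns a point whose subgradient inequality slack is directly controllable'' --- is in spirit exactly this, and had you made it your main line the proof would be complete. (A small aside: the shifted objective $G$ is \emph{not} scale-invariant, since adding $\theta$ breaks homogeneity; that remark at the end should be dropped.)
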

\section{Matroid Constraints}
\label{sec:mnw}
We now consider public goods allocation with matroid constraints. In particular, we show that when the feasibility constraints encode independent sets of a matroid $\M$, maximizing the smooth Nash welfare objective in Equation~\eqref{eq:FK} with $\ell = 1$ yields a $(0,2)$-core outcome. However, optimizing this objective is known to be {\sc NP-hard}~\cite{FairKnapsack}. We also show that given $\epsilon > 0$, a local search procedure for this objective function (given below) yields a $(0,2+\epsilon)$-core outcome in polynomial time, which proves Theorem~\ref{thm:matroid}.

\subsection{Algorithm}
Fix $\epsilon > 0$. Let $\gamma = \frac{\epsilon}{4m}$, where $m = |W|$ is the number of elements. Recall that there are $n$ agents. 
\begin{enumerate}
\item Start with an arbitrary basis  $\bc$ of $\M$.
\item Compute $F(\bc) = \sum_{i \in N} \ln (1+ u_i(\bc))$. 
\item Let a {\em swap} be a pair $(j,j')$ such that $j \in \bc$, $j' \notin \bc$, and $\bcp = \bc \setminus \{j\} \cup \{j'\}$ is also a basis of $\M$. 
\item Find a swap such that $F(\bcp)-F(\bc) \ge \frac{n \gamma}{m}$.
\begin{itemize}
\item If such a swap exists, then perform the swap, i.e., update $\bc \gets \bcp$, and go to Step (2).
\item If no such swap exists, then output $\bc$ as the final outcome.
\end{itemize}
\end{enumerate}

\subsection{Analysis}
First, we show that the local search algorithm runs in time polynomial in $n$, $m$, and $\sfrac{1}{\epsilon}$. Note that $F(\bc) = O(n \ln m)$ because in our normalization, each agent can have utility at most $m$. Thus, the number of iterations is $O\left(m^2 \log m/ \epsilon\right)$. Finally, each iteration can be implemented in $O(n \cdot m^2)$ time by iterating over all pairs and computing the change in the smooth Nash welfare objective.


Next, let $\bcstar$ denote the outcome maximizing the smooth Nash welfare objective with $\ell = 1$, and $\bchat$ denote the outcome returned by the local search algorithm. We show that $\bcstar$ is a $(0,2)$-core outcome, while $\bchat$ is a $(0,2+\epsilon)$-core outcome. 

For outcome $\bc$, define $ F_i(\bc) = \ln (1+ u_i(\bc))$. Fix an arbitrary outcome $\bc$. For an agent $i$ with $u_i(\bc) > 0$, we have that for every element $j \in \bc$:
$$
F_i(\bc) - F_i(\bc \setminus \{j\}) \le \frac{u_{ij}}{ u_i(\bc) + 1 - u_{ij}} \le \frac{u_{ij}}{ u_i(\bc)}.
$$
This holds because $\ln(x + h) - \ln x \le \frac{h}{x}$ for $x > 0$ and $h \ge 0$. Summing this over all $j \in \bc$ gives 
$$
\sum_{j \in \bc} F_i(\bc)-F_i(\bc \setminus \{j\}) \le \sum_{j \in \bc} \frac{u_{ij}}{u_i(\bc)} = \frac{u_i(\bc)}{u_i(\bc)} = 1.
$$
For an agent $i$ with $u_i(\bc) = 0$, we trivially have $\sum_{j \in \bc} F_i(\bc)-F_i(\bc \setminus \{j\}) = 0$. Summing over all agents, we have that for every outcome $\vec{c}$:
\begin{equation}
\label{eq1}\sum_{j \in \bc} F(\bc) - F(\bc \setminus \{j\}) = \sum_{i \in N} \sum_{j \in \bc} F_i(\bc) - F_i(\bc \setminus \{j\}) \le n.
\end{equation}

We now use the following result: 
\begin{lemma}[\cite{Kung}] For every pair of bases $\bc$ and $\bcp$ of a matroid $\M$, there is a bijection $f: \bc \rightarrow \bcp$ such that for every $j \in \bc$, $\bc \setminus \{j\} \cup \{f(j)\}$ is also a basis.
\end{lemma}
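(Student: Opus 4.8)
The plan is to prove this classical symmetric exchange property via Hall's marriage theorem applied to a bipartite \emph{exchange graph}. Since $\bc$ and $\bcp$ are both bases of $\M$ they have equal cardinality, so it suffices to build a bijection between $\bc \setminus \bcp$ and $\bcp \setminus \bc$ that swaps each $j \in \bc\setminus\bcp$ for some $f(j)\in\bcp\setminus\bc$ with $\bc\setminus\{j\}\cup\{f(j)\}$ a basis; extending $f$ by $f(j)=j$ on $\bc\cap\bcp$ (where $\bc\setminus\{j\}\cup\{f(j)\}=\bc$ is trivially a basis) then gives the claimed bijection on all of $\bc$. Let $G$ be the bipartite graph on parts $\bc\setminus\bcp$ and $\bcp\setminus\bc$ with an edge $\{j,j'\}$ whenever $\bc\setminus\{j\}\cup\{j'\}$ is a basis of $\M$. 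A bijection of the required form is exactly a perfect matching of $G$, so by Hall's theorem it is enough to check that $|N_G(A)|\ge|A|$ for every $A\subseteq\bc\setminus\bcp$, where $N_G(A)=\{j'\in\bcp\setminus\bc : \exists\, j\in A,\ \bc\setminus\{j\}\cup\{j'\}\text{ is a basis}\}$.

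To verify Hall's condition, fix $A\subseteq\bc\setminus\bcp$ and consider the independent set $I:=\bc\setminus A$. Because $\bcp$ is a basis, hence spanning, the standard augmentation property lets us extend $I$ to a basis by repeatedly adding elements of $\bcp$; this yields a set $C\subseteq\bcp$ with $I\cup C$ a basis and (since $C\cap I=\emptyset$) $|C|=|\bc|-|I|=|A|$. As $C$ is disjoint from $I=\bc\setminus A$ and, being a subset of $\bcp$, also disjoint from $A\subseteq\bc\setminus\bcp$, we get $C\cap\bc=\emptyset$, i.e. $C\subseteq\bcp\setminus\bc$. I will show $C\subseteq N_G(A)$, which gives $|N_G(A)|\ge|C|=|A|$ and finishes the proof.

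Fix $j'\in C$. In the matroid $\M$ the set $\bc\cup\{j'\}$ contains a unique circuit $C_{j'}$, the fundamental circuit of $j'$ with respect to $\bc$, and for every $j\in C_{j'}\setminus\{j'\}$ the set $\bc\setminus\{j\}\cup\{j'\}$ is again a basis; so it suffices to prove $C_{j'}\cap A\neq\emptyset$. If instead $C_{j'}\cap A=\emptyset$, then $C_{j'}\setminus\{j'\}\subseteq\bc\setminus A=I$, so $C_{j'}\subseteq I\cup\{j'\}\subseteq I\cup C$ (using $j'\in C$); but $I\cup C$ is a basis, hence independent, contradicting that $C_{j'}$ is a circuit. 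Hence some $j\in C_{j'}\cap A$ witnesses the edge $\{j,j'\}$ of $G$, so $j'\in N_G(A)$.

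The one step that takes real thought is the Hall-condition verification, and specifically the fundamental-circuit argument that converts ``$j'$ lies in some basis extending $\bc\setminus A$'' into ``$j'$ is swappable for an element of $A$.'' Everything else --- equicardinality of bases, extendability of an independent set to a basis using a spanning set, and uniqueness of the fundamental circuit together with the resulting single-element exchange --- are standard matroid facts that can be quoted directly from~\cite{Kung}.
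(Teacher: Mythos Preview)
Your proof is correct and follows the standard Hall's-theorem route to Brualdi's symmetric basis exchange property. Note, however, that the paper does not actually prove this lemma: it is simply quoted as a known matroid fact with a citation to~\cite{Kung}, so there is no ``paper's own proof'' to compare against. Your argument is a clean, self-contained proof of precisely the cited result; the fundamental-circuit step showing $C\subseteq N_G(A)$ is the crux and is handled correctly.
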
 

Using the above lemma, combined with the fact that $\ln(x+h) - \ln x \ge \frac{h}{x+h}$ for $x > 0$ and $h \ge 0$, we have that for all $\bc,\bcp$:
\begin{align}
\sum_{j \in \bc}  F(\bc \setminus \{j\} \cup \{f(j)\}) - F(\bc \setminus \{j\})  & \ge \sum_i \sum_{j \in \bc}  \frac{u_{if(j)}}{u_i(\bc) + 1 - u_{ij} + u_{if(j)}}\nonumber\\
& \ge \sum_{i \in S} \sum_{j' \in \bcp}  \frac{u_{ij'}}{u_i(\bc) + 2}  = \sum_{i \in S} \frac{u_i(\bcp)}{u_i(\bc) + 2}.\label{eq2}
\end{align}

We now provide almost similar proofs for the approximations achieved by the global optimum $\bcstar$ and the local optimum $\bchat$.

{\em Global optimum.} Suppose for contradiction that $\bcstar$ is not a $(0,2)$-core outcome. Then, there exist a subset $S$ of agents and an outcome $\bcp$ such that for all $i \in S$,
$$
\frac{|S|}{n} \cdot u_i(\bcp) \ge u_i(\bcstar) + 2,
$$
and at least one inequality is strict. Rearranging the terms and summing over all $i \in S$, we obtain:
$$
\sum_{i \in S} \frac{u_i(\bcp)}{u_i(\bcstar) + 2} > \sum_{i \in S} \frac{n}{|S|} = n.
\label{eqn:global-opt}
$$

Combining this with Equation~\eqref{eq2}, and subtracting Equation~\eqref{eq1} yields:
$$
\sum_{j \in \bcstar}  \left( F(\bcstar \setminus \{j\} \cup \{f(j)\}) - F(\bcstar)  \right) >0.
$$
This implies existence of a pair $(j,f(j))$ such that $F(\bcstar \setminus \{j\} \cup \{f(j)\}) - F(\bcstar) > 0$, which contradicts the optimality of $\bcstar$ because $\bcstar \setminus \{j\} \cup \{f(j)\}$ is also a basis of $\M$.

{\em Local optimum.} Similarly, suppose for contradiction that $\bchat$ is not a $(0,2+\epsilon)$-core outcome. Then, there exist a subset $S$ of agents and an outcome $\bcp$ such that for all $i \in S$,
$$
\frac{|S|}{n} \cdot u_i(\bcp) \ge u_i(\bchat) + 2 + \epsilon > (1+\gamma) \left(u_i(\bchat) + 2 \right).
$$
Here, the final transition holds because $\gamma < \epsilon/(m+2) \le \epsilon/(u_i(\bchat)+2)$. 
Again, rearranging and summing over all $i \in S$, we obtain:
$$
\sum_{i \in S} \frac{u_i(\bcp)}{u_i(\bchat) + 2} > (1+\gamma) \sum_{i \in S} \frac{n}{|S|} \ge n \cdot (1+\gamma).
$$

Once again, combining this with Equation~\eqref{eq2}, and subtracting Equation~\eqref{eq1} yields:
$$
\sum_{j \in \bchat}  \left( F(\bchat \setminus \{j\} \cup \{f(j)\}) - F(\bchat)  \right)> n \gamma.
$$
This implies existence of a pair $(j,f(j))$ such that $F(\bchat \setminus \{j\} \cup \{f(j)\}) - F(\bchat) >  n \gamma/m$, which violates local optimality of $\bchat$ because $\bchat \setminus \{j\} \cup \{f(j)\}$ is also a basis of $\M$.

\paragraph{Lower Bound.} While a $(0,2)$-core always outcome exists, we show in the following example that a $(0,1-\epsilon)$-core outcome is not guaranteed to exist for any $\epsilon > 0$. 
\begin{lemma}
\label{lem:noexist}
For $\epsilon > 0$ and matroid constraints, $(0, 1-\epsilon)$-core outcomes are not guaranteed to exist.
\end{lemma}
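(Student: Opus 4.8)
The plan is to construct an explicit instance with matroid (in fact partition matroid / public decision making) constraints in which every feasible outcome is blocked by some coalition with additive slack approaching $1$. A natural candidate mirrors the impossibility of proportionality: take a symmetric instance with $n$ agents and a ground set in which each agent has a unique favorite element worth $1$, the elements are mutually exclusive (so at most one can be chosen), but the coalition of all agents who are \emph{not} served can deviate. Concretely, I would try $n$ issues (or a single issue with $n$ alternatives $a_1,\dots,a_n$, i.e.\ a rank-$1$ uniform matroid), where agent $i$ has utility $1$ for $a_i$ and $0$ for everything else. In any feasible outcome $\bc$, exactly one $a_i$ is chosen, so $n-1$ agents get utility $0$. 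Let $S$ be the set of those $n-1$ unserved agents and consider the deviation $\bcp$ that picks $a_{i^*}$ for one particular $i^* \in S$; then one agent in $S$ gets $\frac{|S|}{n}\cdot 1 = \frac{n-1}{n}$ and the rest still get $0$. This shows $\bc$ is blocked whenever $\alpha < \frac{n-1}{n} - 0$, giving a $(0,1-\epsilon)$ lower bound by taking $n$ large enough that $\frac{n-1}{n} > 1-\epsilon$, i.e.\ $n > 1/\epsilon$.

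Carrying this out requires checking the quantifiers in Definition~\ref{def:approx} carefully: we need, for the chosen $\bc$, \emph{some} coalition $S$ and outcome $\bcp$ with $\frac{|S|}{n} u_i(\bcp) \ge (1+0)u_i(\bc) + \alpha$ for \emph{all} $i \in S$, with at least one strict inequality. With the construction above and $S$ the set of unserved agents, every $i \in S$ has $u_i(\bc) = 0$, and we want $\bcp$ to simultaneously give positive scaled utility to all of $S$ — but a single alternative can only serve one agent, so only one member of $S$ gets $\frac{n-1}{n}$ and the others get $0 \ge 0 + \alpha$, which fails for $\alpha > 0$. So the one-issue construction does \emph{not} immediately work; I would instead use $m \ge n-1$ parallel issues so that the coalition $S$ of unserved agents can be served \emph{simultaneously}: give agent $i$ utility $1$ for alternative $a_1^i$ on issue $i$ (issues $1,\dots,n$, two alternatives each, say), and $0$ elsewhere. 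Then for any feasible $\bc$, the set $S$ of agents not getting their favorite satisfies $|S| \ge n-1$ in the worst case — but actually $\bc$ can give every agent their favorite here, so this instance has a core outcome.

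The right construction must therefore force scarcity \emph{and} allow the deviating coalition to be served together. I would use a single issue with $n$ alternatives (rank-1 uniform matroid) but give the deviating structure to coalitions of size $1$: wait — that only recovers proportionality, which \emph{is} achievable additively. So the correct move, following the private-goods-style bad instance in Example~\ref{eg:prop1}'s spirit but pushing to additive tightness, is: $n$ agents, a uniform matroid of rank $1$ over $n$ elements $\{e_1,\dots,e_n\}$, agent $i$ values $e_i$ at $1$ and all else at $0$. Outcome picks one $e_j$; the singleton coalition $S=\{i\}$ for any $i \ne j$ can deviate to $\bcp = \{e_i\}$ getting $\frac{1}{n}\cdot 1 = \frac1n$ versus $u_i(\bc)+\alpha = \alpha$; this only forces $\alpha \ge \frac1n$, i.e.\ a $(0, o(1))$ bound, not $(0,1-\epsilon)$. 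To get the full $1-\epsilon$ I expect one needs coalitions whose deviation target has utility $\approx n$ for its members while $\bc$ gives them $\approx 0$, combined with the $|S|/n$ factor being $\approx 1$; this is exactly the $Y$-coalition phenomenon of Example~\ref{eg:mnw}. So I would build $m$ issues, two alternatives each, a group $Y$ of $\approx n$ like-minded agents who all value $a_2^t$ at $1$ on every issue (deviation gives them $\approx m$ each, scaled by $\approx 1$), opposed by a group $X$ forcing a symmetric conflict so that in \emph{every} feasible outcome some such large like-minded group is left with near-zero utility on its preferred alternatives.

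The main obstacle is engineering the instance so that \emph{no} feasible outcome escapes: we need a family of overlapping "like-minded large coalitions" such that any assignment of alternatives leaves at least one of them sufficiently underserved relative to what it could get by deviating, yet each coalition's deviation outcome is feasible (automatic here, since any assignment to issues is feasible under a partition matroid) and serves \emph{all} its members simultaneously with high utility. I would make the coalitions correspond to "the agents who prefer alternative $2$ on at least a $(1-\epsilon)$ fraction of issues" under a suitable combinatorial design (e.g.\ agents indexed by subsets or by a Latin-square-like structure), argue by counting/pigeonhole that for any outcome $\bc$ one such coalition $S$ has $|S|/n \ge 1-\epsilon'$ and every $i \in S$ has $u_i(\bc)$ small while $u_i(\bcp) = m$ for the coalition's natural target $\bcp$, and then verify the inequality $\frac{|S|}{n} u_i(\bcp) = (1-\epsilon')m \ge u_i(\bc) + (1-\epsilon)$ for all large $m$. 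Once the design is pinned down the arithmetic is routine; finding a design that is robust to \emph{all} outcomes is the crux, and I would expect the cleanest version to use just two agents-types scaled up (as in Example~\ref{eg:mnw}) together with a parity/antisymmetry trick to rule out the "compromise" outcomes, since with only the $X,Y$ structure the compromise $\bcp=(a_1^1,a_2^2,\dots)$ might already be a good core outcome and one must add structure that destroys every such compromise.
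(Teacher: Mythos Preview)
Your proposal never arrives at a construction; it is a sequence of attempts each of which you correctly diagnose as failing, followed by a vague plan involving unspecified ``combinatorial designs'' and a ``parity/antisymmetry trick.'' That is not a proof.

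More importantly, the direction you commit to at the end is backwards. You write that one ``needs coalitions whose deviation target has utility $\approx n$ for its members while $\bc$ gives them $\approx 0$, combined with the $|S|/n$ factor being $\approx 1$.'' The first clause is right, but insisting on $|S|/n \approx 1$ is exactly what makes the construction hard: a coalition containing almost everyone cannot be uniformly underserved in \emph{every} outcome, because compromise outcomes will give large groups something. You discovered this yourself when your parallel-issues instance turned out to have a core outcome.

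The paper's construction takes $|S|$ to be a small constant (namely $2$), not $\approx n$. With $n$ agents it uses $n-2$ ``private-good'' issues (each issue has $n$ alternatives, alternative $j$ gives utility $1$ to agent $j$ only) together with $n/2$ ``pair issues'' (each issue has $\binom{n}{2}$ alternatives, one per unordered pair, giving both members of that pair utility $1$). The point of the pair issues is that \emph{any} two agents can be served simultaneously on all of them. A counting argument shows that in every outcome some two agents receive low utility; those two deviate, take their joint alternative on every pair issue, and split the private goods, obtaining utility $n-1$ each. The scaled deviation utility is $(2/n)(n-1) \to 2$, against current utility at most $1$, forcing $\alpha \to 1$. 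The key idea you are missing is the pair-issue gadget: it decouples ``the coalition must be small enough that pigeonhole guarantees it is underserved'' from ``the coalition must be able to achieve high utility jointly,'' which your large-coalition approach cannot do.
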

\begin{proof} 
Consider the following instance of public decision making where we have several issues and must choose a single alternative for each issue, a special case of matroid constraints. There are $n$ agents, where $n$ is even. There are $m = (n-2)+n/2$ issues. The first $n-2$ issues correspond to unit-value private goods, \textit{i.e.}, each such issue has $n$ alternatives, and each alternative gives utility $1$ to a unique agent and utility $0$ to others. The remaining $n/2$ issues are ``pair issues''; each such issue has $\binom{n}{2}$ alternatives, one corresponding to every pair of agents that gives both agents in the pair utility $1$ and all other agents utility $0$.
	
It is easy to see that every integer allocation gives utility at most $1$ to at least two agents. Consider the deviating coalition consisting of these two agents. They can choose the alternative that gives them each utility $1$ on every pair issue, and split the $n-2$ private goods equally. Thus, they each get utility $n/2 + (n-2)/2 = n-1$. For the outcome to be a $(0,\alpha)$-core outcome, we need $1 + \alpha \ge (2/n) \cdot (n-1)$. As $n \to \infty$, this requires $\alpha \rightarrow 1$. Hence, for any $\epsilon > 0$, a $(0,1-\epsilon)$-core outcome is not guaranteed to exist. 
\end{proof}

Note that Theorem~\ref{thm:matroid} shows existence of a $(0,2)$-core outcome, which is therefore tight up to a unit additive relaxation. Whether a $(0,1)$-core outcome always exists under matroid constraints remains an important open question. Interestingly, we show that such an outcome always exists for the special case of private goods allocation, and, in fact, can be achieved by maximizing the smooth Nash welfare objective. 

\begin{lemma}
\label{lem:privatecore}
For private goods allocation, maximizing the smooth Nash welfare objective with $\ell = 1$ returns a $(0,1)$-core outcome.
\end{lemma}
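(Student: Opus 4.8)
The plan is to run essentially the argument of Theorem~\ref{thm:matroid} on the partition matroid that encodes private‑goods allocations, but to carry out the bookkeeping behind the analogue of Equation~\eqref{eq2} more carefully, so that the additive loss shrinks from $2$ to $1$. Let $\bcstar$ maximize $F(\bc)=\sum_{i\in N}\ln(1+u_i(\bc))$ over allocations, and let $\sigma,\tau$ denote the agent‑assignments of $\bcstar$ and of a hypothetical blocking outcome $\bcp$ for a coalition $S$. Two easy reductions come first: since $\bcstar$ maximizes $F$ it is Pareto optimal, which rules out $S=N$; and a good that $\bcp$ assigns to an agent outside $S$ contributes nothing to the deviation, so we may assume $\bcp$ assigns every good to a member of $S$. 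Call a good \emph{agreed} if $\sigma(j)=\tau(j)$ and \emph{disagreed} otherwise, and for each agent $i$ let $A_i,B_i,C_i$ be the total value to $i$ of, respectively, the agreed goods it holds, its $\bcstar$‑goods that $\bcp$ reassigns, and its $\bcp$‑goods it did not hold under $\bcstar$; thus $u_i(\bcstar)=A_i+B_i$ and $u_i(\bcp)=A_i+C_i$, with $A_i=C_i=0$ for $i\notin S$.

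For each disagreed good $j$, reassigning it from $\sigma(j)$ to $\tau(j)$ is a single basis exchange, so by optimality of $\bcstar$ it cannot increase $F$; writing $u_i=u_i(\bcstar)$ this gives $\ln(1+u_{\tau(j)}+u_{\tau(j)j})-\ln(1+u_{\tau(j)})\le \ln(1+u_{\sigma(j)})-\ln(1+u_{\sigma(j)}-u_{\sigma(j)j})$. Summing over all disagreed goods, then grouping the left side by the receiving agent and using $\prod_k(1+x_k)\ge 1+\sum_k x_k$, grouping the right side by the giving agent and using that $a\mapsto \ln a-\ln(a-h)$ is decreasing in $a$, and finally substituting $u_i=A_i+B_i$, yields the clean product inequality
\[
\prod_{i\in N}\frac{(1+A_i+B_i+C_i)(1+A_i)}{(1+A_i+B_i)^2}\;\le\;1 .
\]
The improvement over the proof of Theorem~\ref{thm:matroid} lies in the numerator factor $1+A_i$ coming from the agreed goods: it is precisely this factor that lets a bound of the shape ``$u_i(\bcstar)+2$'' be replaced by one of the shape ``$u_i(\bcstar)+1$''.

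It then remains to substitute the blocking inequalities $A_i+C_i\ge\frac{n}{|S|}(A_i+B_i+1)$ for $i\in S$, together with the fact that for $i\notin S$ the $i$‑th factor above equals $1/(1+u_i(\bcstar))$, and reach a contradiction; here one also uses the private‑goods identity $V_i=\sum_j u_{ij}$ and the normalization $u_{ij}\le 1$ to control how much value goods taken away from outside‑$S$ agents can be worth inside $S$. The case $|S|=1$ already falls out cleanly and is worth isolating: the per‑good optimality inequality $u_{ij}(1+u_{\sigma(j)}(\bcstar)-u_{\sigma(j)j})\le u_{\sigma(j)j}(1+u_i(\bcstar))$, combined with $u_{\sigma(j)j}\le 1$, gives $\sum_{j:\sigma(j)=k}u_{ij}\le 1+u_i(\bcstar)$ for every $k\ne i$, whence $V_i\le u_i(\bcstar)+(n-1)(1+u_i(\bcstar))<n(u_i(\bcstar)+1)$, i.e. $(0,1)$‑proportionality. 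The main obstacle is the general‑$S$ step: the coalition factor $\frac{(1+A_i+B_i+C_i)(1+A_i)}{(1+A_i+B_i)^2}$ can drop below $1$ when $B_i$ is large (an $S$‑agent that surrenders much value under $\bcp$), while the outside‑$S$ factors can be far below $1$, so the naive ``each coalition factor is $\ge 1$'' reasoning fails. One must instead argue globally --- charging the value surrendered by each agent (inside or outside $S$) against the value it can deliver to coalition members, via the local‑optimality inequalities for exactly those reassigned goods --- and making this accounting close with slack exactly $1$ rather than $2$ is the crux of the proof.
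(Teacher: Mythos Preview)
Your proposal is not a proof: you explicitly leave the general-$S$ case open. After deriving the product inequality
\[
\prod_{i\in N}\frac{(1+A_i+B_i+C_i)(1+A_i)}{(1+A_i+B_i)^2}\le 1,
\]
you correctly observe that individual coalition factors can fall below $1$, that the outside-$S$ factors $1/(1+u_i(\bcstar))$ can be arbitrarily small, and that therefore a term-by-term argument fails. You then say that ``one must instead argue globally'' by a charging scheme and call this ``the crux of the proof'' --- but you never carry it out. The $|S|=1$ case you do finish is just proportionality, not the core, so nothing beyond what Theorem~\ref{thm:matroid} already gives (indeed with a better constant) has been established for general coalitions.

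The paper's proof shows that the global charging you allude to becomes trivial once you introduce the right bookkeeping device: for each good $g$ held by agent $j$ in the optimal allocation $A$, set the \emph{price} $p_g=u_{jg}/u_j(A_j)$. The single-swap optimality condition then reads $(1+u_i(A_i))\,p_g\ge u_{ig}$ for every agent $i$ and every good $g$. Summing this over the goods in $B_i$ gives $u_i(B_i)/(1+u_i(A_i))\le p_{B_i}$, and summing over $i\in S$ gives $\sum_{i\in S}u_i(B_i)/(1+u_i(A_i))\le\sum_{g}p_g\le n$, contradicting the blocking inequality. The prices are precisely the ``charge'' you were looking for: they convert each agent's surrendered value into a budget of total mass at most $n$ that every other agent can draw on, and the additive $1$ (rather than $2$) comes for free because the denominator on the losing side is $u_j(A_j)$ rather than $u_j(A_j)+1$. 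Your product-inequality route obscures this additive structure by bundling the per-good inequalities multiplicatively; even if it could be pushed through, it would amount to rediscovering the price argument in a more cumbersome form.
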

\begin{proof}
There is a set of agents $N$ and a set of private goods $M$. Each agent $i \in N$ has a utility function $u_i : 2^M \to \bbR_{\ge 0}$. Utilities are additive, so $u_i(S) = \sum_{g \in S} u_i(\set{g})$ for all $S \subseteq M$. For simplicity, we denote $u_{ig} \triangleq u_i(\set{g})$. Without loss of generality, we normalize the utility of each agent such that $\max_{g \in M} u_{ig} = 1$ for each $i$. An allocation $A$ is a {\em partition} of the set of goods among the agents; let $A_i$ denote the bundle of goods received by agent $i$. We want to show that an allocation maximizing the objective $\prod_{i \in N} (1+u_i(A_i))$ is a $(0,1)$-core outcome. 

Let $A$ denote an allocation maximizing the smooth Nash welfare objective with $\ell = 1$. We assume without loss of generality that every good is positively valued by at least one agent. Hence, $u_j(A_j) = 0$ must imply $A_j = \emptyset$. 

For agents $i,j \in N$ with $A_j \neq \emptyset$ (hence $u_j(A_j) > 0$), and good $g \in A_j$, moving $g$ to $A_i$ should not increase the objective function. Hence, for each $g \in A_j$, we have 
$$
\big(1+u_i(A_i \cup \set{g})\big) \cdot \big(1+u_j(A_j \setminus \set{g})\big) \le \big(1+u_i(A_i)\big) \cdot \big(1+u_j(A_j)\big).
$$

Using additivity of utilities, this simplifies to
\begin{equation}
\frac{u_{ig}}{1 + u_i(A_i)} \le \frac{u_{jg}}{1+u_j(A_j)-u_{jg}} \le \frac{u_{jg}}{u_j(A_j)}.
\label{eqn:privatecore1}
\end{equation}

For every agent $j \in N$ with $A_j \neq \emptyset$ and good $g \in A_j$, define $p_g = u_{jg}/u_j(A_j)$. Abusing the notation a little, for a set $T \subseteq M$ define $p_T = \sum_{g \in T} p_g$. Then, from Equation~\eqref{eqn:privatecore1}, we have that for all players $i \in N$ and goods $g \in M$, 
\begin{equation}
(1+u_i(A_i)) \cdot p_g \ge u_{ig}.
\label{eqn:privatecore2}
\end{equation}

Suppose for contradiction that $A$ is not a $(0,1)$-core outcome. Then, there exists a set of agents $S \subseteq N$ and an allocation $B$ of the set of all goods to agents in $S$ such that $(|S|/n) \cdot u_i(B_i) \ge 1+u_i(A_i)$ for every agent $i \in S$, and at least one inequality is strict. Rearranging the terms and summing over $i \in S$, we have
\begin{equation}
\sum_{i \in S} \frac{u_i(B_i)}{1+u_i(A_i)} > \sum_{i \in S} \frac{n}{|S|} = n.
\label{eqn:privatecore3}
\end{equation}

We now derive a contradiction. For agent $i \in S$, summing Equation~\eqref{eqn:privatecore2} over $g \in B_i$, we get
$$
(1+u_i(A_i)) \cdot p_{B_i} \ge u_i(B_i) \Rightarrow \frac{u_i(B_i)}{1+u_i(A_i)} \le p_{B_i}.
$$
Summing this over $i \in S$, we get 
$$
\sum_{i \in S} \frac{u_i(B_i)}{1+u_i(A_i)} \le \sum_{i \in S} p_{B_i} = \sum_{g \in M} p_g = \sum_{\substack{j \in N \text{ s.t.}\\A_j \neq \emptyset}} \sum_{g \in A_j} \frac{u_{jg}}{u_j(A_j)} = \sum_{\substack{j \in N \text{ s.t.}\\A_j \neq \emptyset}} \frac{u_j(A_j)}{u_j(A_j)} \le n.
$$
However, this contradicts Equation~\eqref{eqn:privatecore3}.	
\end{proof}

\section{Matching Constraints}
\label{sec:matching}
We now present the algorithm proving Theorem~\ref{thm:matching}.  We show that if the elements are edges of an undirected graph $G(V,E)$, and the feasibility constraints encode a matching, then for constant $\delta \in (0,1]$,  a $(\delta, 8 + \frac{6}{\delta})$-core always exists and is efficiently computable. The idea is to again run a {\em local search} on the smooth Nash welfare objective in Equation~(\ref{eq:FK}), but this time with $\ell \approx 1+\frac{4}{\delta}$.

\paragraph{Algorithm.} Recall that there are $n$ agents. Let $|V| = r$ and $|E| = m$. Let $\kappa = \frac{2}{\delta}$. For simplicity, assume $\kappa \in \mathbb{N}$. Our algorithm is inspired by the PRAM algorithm for approximate maximum weight matchings due to~\cite{HV}, and we follow their terminology. Given a matching $\bc$, an {\em augmentation} with respect to $\bc$ is a matching $T \subseteq E \setminus \bc$. The {\em size} of the augmentation is $|T|$. Let $M(T)$ denote the subset of edges of $\bc$ that have a vertex which is matched under $T$. Then, the matching $\left(\bc\setminus M(T)\right) \cup T$ is called the {\em augmentation} of $\bc$ using $T$.

\begin{enumerate}
\item Start with an arbitrary matching $\bc$ of $G$.
\item Compute $F(\bc) = \sum_i \ln \left(1 + 2 \kappa + u_i(\bc) \right)$. 
\item Let $\calC$ be the set of all augmentations with respect to $\bc$ of size at most $\kappa$. 
\begin{itemize}
\item If there exists $T \in \calC$ such that $F(\left(\bc\setminus M(T)\right) \cup T) - F(\bc) \ge \frac{n}{\kappa r}$, perform this augmentation (i.e., let $\bc \gets \left(\bc\setminus M(T)\right) \cup T$) and go to Step (2).
\item Otherwise, output $\bc$ as the final outcome.
\end{itemize}
\end{enumerate}

\paragraph{Analysis.} 


The outline of the analysis is similar to the analysis for matroid constraints. First, we show that the algorithm 
runs in polynomial time. Again, recall that each agent has utility at most $m$. Thus, $F(\bc) = O(n \cdot \ln m)$. Because each improvement increases the objective value by at least $n/(\kappa r)$, the number of iterations is $O(\kappa r \ln m) = O(m^2/\delta)$. Each iteration can be implemented by na\"{\i}vely going over all $O(m^{\kappa})$ subsets of edges of size at most $\kappa$, checking if they are valid augmentations with respect to $\bc$, and whether they improve the objective function by more than $n/(\kappa r)$. The local search therefore runs in polynomial time for constant $\delta > 0$.

Let $\bc$ denote the outcome returned by the algorithm. We next show that $\bc$ is indeed a $(\delta,8+3\kappa)$-core outcome. Suppose for contradiction that this is not true. Then, there exist a subset of agents $S$ and a matching $\bcp$ such that for all $i \in S$,
$$
\frac{|S|}{n} \cdot u_i(\bcp) \ge (1+\delta) \cdot u_i(\bc) + 8 + 3 \kappa \ge (1+ \delta) \cdot \left(u_i(\bc) + 3 \kappa + 1 \right),
$$
and at least one inequality is strict (the last inequality is because $\delta \in (0,1]$). Rearranging and summing over all $i \in S$, we obtain
\begin{equation}
\sum_{i \in S} \frac{u_i(\bcp)}{u_i(\bc) + 3 \kappa + 1} > (1+\delta) \cdot \sum_{i \in S} \frac{n}{|S|} = n \cdot (1+\delta).
\label{eqn:matching-inequality1}
\end{equation}

For $j \in E$, define  $w_j = \sum_{i \in N} \frac{u_{ij}}{u_i(\bc) + 1}$ and  $w'_j =  \sum_{i \in N} \frac{u_{ij}}{u_i(\bc) + 3 \kappa + 1}$. Let $W = \sum_{j \in \bc} w_j$, and $W' = \sum_{j \in \bcp} w'_j$.  It is easy to check that 
\begin{equation}
W \le n \quad \text{and} \quad W' \ge n \cdot (1+\delta),
\label{eqn:matching-inequality2}
\end{equation}
where the latter follows from Equation~\eqref{eqn:matching-inequality1}.  Further note that $w_j \ge w'_j$ for all $j$.

For an augmentation $T$ with respect to $\bc$, define $\gain(T) = \sum_{j \in T} w'_j - \sum_{j \in M(T)} w_j$.  The next lemma is a simple generalization of the analysis in~\cite{HV}; we give the adaptation here for completeness.

\begin{lemma}  Assuming weights $w_j \ge w'_j$ for all edges $j$, for any integer $\kappa \ge 1$ and matchings $\bc$ and $\bcp$, there exists a multiset $OPT$ of augmentations with respect to $\bc$ such that:
\begin{itemize}
\item For each $T \in OPT$, $T \subseteq \bcp$ and $|T| \le \kappa$; 
\item $|OPT| \le \kappa r$; and
\item $\sum_{T \in OPT} \gain(T) \ge \kappa \cdot W' - (\kappa + 1) \cdot W$.
\end{itemize}
\label{lem:hv}
\end{lemma}
\begin{proof}
We follow~\cite{HV} in the construction the multiset $OPT$ of augmentations with respect to $\bc$ out of edges in $\bcp$. Let $\bc \triangle \bcp$ be the symmetric difference of matchings $\bc$ and $\bcp$ consisting of alternating paths and cycles. For every cycle or path $\bd \in \bc \triangle \bcp$, let $T_{\bd}$ be be set of edges $\bd \cap \bcp$. For all $T_{\bd}$ with $|T_{\bd}| \leq \kappa$, just add $T_S$ to OPT $\kappa$ times (note that $OPT$ is a multiset, not a set). For $T_{\bd}$ with $|T_{\bd}| > \kappa$, we break up $T_{\bd}$ into multiple smaller augmentations. To do so, index the edges in $T_{\bd}$ from $1$ to $|T_{\bd}|$ and add $|T_{\bd}|$ different augmentations to $OPT$ by considering starting at every index in $T_{\bd}$ and including the next $\kappa$ edges in $T_{\bd}$ with wrap-around from $|T_{\bd}|$ to $1$. 
	
Now we must argue that $OPT$ as we have constructed it satisfies the conditions of the lemma. The first point, that $\forall T \in OPT, T \subseteq \bcp$ and $|T| \leq \kappa$, follows trivially from the construction. The second point also follows easily from observing that we add $\kappa$ augmentations to $OPT$ for every $\bd \in \bc \cap \bcp$, and graph $G$ has $r$ vertices. 
	
To see the third point, note that every edge in  $\bcp \backslash \bc$ is contained in at least $\kappa$ augmentations in $OPT$. On the other hand, for every edge $e \in \bc \backslash \bcp$, there are no more than $\kappa + 1$ augmentations $T \in OPT$ such that $e \in M(T)$ (recall $M(T)$ are the edges of $\bc$ with a vertex matched under $T$). This can happen, for example, if $T_S$ happens to be a path of length $\kappa + 1$. Finally, for the edges $j \in \bcp \cap \bc$, the weight $w'_j \le w_j$.  Putting these facts together, the third point of the lemma follows. 
\end{proof}

Consider the set of augmentations $OPT$ from Lemma~\ref{lem:hv}. For augmentation $T \in OPT$, we have:

\begin{align*}
 F(\left(\bc\setminus M(T)\right) \cup T) - F(\bc) & = \Big( F(\left(\bc\setminus M(T)\right) \cup T) - F(\bc \setminus M(T)) \Big) - \Big(F(\bc) - F(\bc \setminus M(T)) \Big)\\
 & \ge \sum_{i \in N} \left(  \frac{ \sum_{T \in S} u_{ij}}{u_i(\bc) + 2 \kappa + 1 + \sum_{j \in T} u_{ij}} -  \frac{ \sum_{j \in M(T)} u_{ij}}{u_i(\bc) + 2 \kappa + 1 - \sum_{j \in M(T)} u_{ij}} \right) \\
 & \ge \sum_{i\in N} \left(  \frac{ \sum_{j \in T} u_{ij}}{u_i(\bc) + 3 \kappa + 1} -  \frac{ \sum_{j \in M(T)} u_{ij}}{u_i(\bc)+ 1} \right) \\
 & = \sum_{j \in T} w'_j - \sum_{j \in M(T)} w_j  = \gain(T).
 \end{align*}
Here, the second transition holds because $h/(x+h) \le \ln(x+h)-\ln x \le h/x$ for all $x \ge 1$ and $h \ge 0$, and the third transition holds due to $|T| \le \kappa$ and $|M(T)| \le 2|T| \le 2\kappa$. Therefore, we have:
\begin{align*}
\sum_{T \in OPT}  F(\left(\bc\setminus M(T)\right) \cup T) - F(\bc)  
\ge \sum_{T \in OPT} \gain(T) 
&\ge \kappa \cdot W' - (\kappa + 1) \cdot W\\
&\ge \kappa \cdot n \cdot (1+\delta) - (\kappa + 1) \cdot n 
= n,
\end{align*}
where the second transition follows from Lemma~\ref{lem:hv}, and the third transition follows from Equation~\eqref{eqn:matching-inequality2}. Since $|OPT| \le \kappa r$, there exists an augmentation $T \in OPT$ with $F(\left(\bc\setminus M(T)\right) \cup T) - F(\bc) \ge \sfrac{n}{\kappa r}$, which violates local optimality of $\bc$. This completes the proof of Theorem~\ref{thm:matching}.

\paragraph{Lower Bound.} We give a stronger lower bound for matchings than the lower bound for matroids in Lemma~\ref{lem:noexist}.

\begin{lemma}
A $(\delta, \alpha)$-core outcome is not guaranteed to exist for matching constraints, for any $\delta \ge 0$ and $\alpha < 1$.
\end{lemma}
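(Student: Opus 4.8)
The plan is to exhibit, for each even $n$, a single matching instance on $n$ agents in which \emph{every} matching leaves at least two agents with utility exactly $0$, while those two agents can jointly deviate to a matching giving each of them utility $n/2-1$. Scaling by $|S|/n = 2/n$ yields $1-2/n$, which strictly exceeds any fixed $\alpha<1$ once $n$ is large; and because the deviating agents get utility $0$ in the original outcome, the multiplicative slack $(1+\delta)$ is irrelevant, so the same instance rules out all $\delta\ge 0$ at once. Concretely, I would take the graph $G$ to be a disjoint union of $q := n/2-1$ stars, each a copy of $K_{1,\binom{n}{2}}$ whose leaves are indexed by the $\binom{n}{2}$ unordered pairs of agents; in every star, the edge joining the center to the leaf indexed by $\{i,j\}$ has utility $1$ for agents $i$ and $j$ and $0$ for all others (so $\max_j u_{ij}=1$ for every agent). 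Feasible outcomes are the matchings of $G$.

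\textbf{Step 1 (counting).} Since the edges within a star all meet its center and the stars are vertex-disjoint, any matching of $G$ selects at most $q = n/2-1$ edges, and each selected edge gives positive utility to exactly two agents. Hence at most $2q = n-2$ agents receive positive utility, so for every matching $\bc$ there is a set of at least two agents with $u_i(\bc)=0$.

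\textbf{Step 2 (the deviation).} Fix an arbitrary matching $\bc$, pick two agents $i,j$ with $u_i(\bc)=u_j(\bc)=0$, and set $S=\{i,j\}$. Let $\bcp$ be the matching that in every one of the $q$ stars selects the edge to the leaf indexed by $\{i,j\}$; this is a valid matching because the stars are vertex-disjoint, and $u_i(\bcp)=u_j(\bcp)=q=n/2-1$.

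\textbf{Step 3 (verifying the blocking condition).} For each $k\in S$ we have $\tfrac{|S|}{n}u_k(\bcp) = \tfrac{2}{n}(n/2-1) = 1-\tfrac{2}{n}$, while $(1+\delta)u_k(\bc)+\alpha = \alpha$. Given any $\delta\ge 0$ and any $\alpha<1$, choose $n$ even with $n > \tfrac{2}{1-\alpha}$, so that $1-\tfrac{2}{n} > \alpha$; then $\tfrac{|S|}{n}u_k(\bcp) \ge (1+\delta)u_k(\bc)+\alpha$ holds (strictly) for both $k\in S$, so $S$ is a blocking coalition in the sense of Definition~\ref{def:approx} and $\bc$ is not a $(\delta,\alpha)$-core outcome. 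As $\bc$ was arbitrary, this instance has no $(\delta,\alpha)$-core outcome, proving the lemma. The only point needing care --- and the reason the bound is independent of $\delta$ --- is Step 1 together with the choice of coalition size: making the scaled deviation value approach $1$ (rather than $1/2$) forces $|S|=2$, and killing the $(1+\delta)$ factor forces \emph{both} members of $S$ to be among the agents that every matching necessarily leaves at $0$, which is exactly what the star counting in Step 1 guarantees.
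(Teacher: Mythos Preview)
Your proof is correct, but the paper takes a much simpler route. The paper uses a single fixed instance: the graph $K_{2,2}$ with two agents, where agent~$1$ has unit utility for the edges of one perfect matching and agent~$2$ for the other. Any matching gives some agent utility $0$; that agent deviates alone ($|S|=1$) to her preferred matching, obtaining scaled utility $\tfrac{1}{2}\cdot 2 = 1$, which forces $\alpha \ge 1$ exactly, for every $\delta$, with no limit needed. Your construction instead builds a family of instances indexed by $n$, using $n/2-1$ disjoint stars with pair-indexed leaves, and extracts a coalition of size~$2$ with zero utility; the scaled deviation value is $1-2/n$, so you must choose $n$ large as a function of $\alpha$. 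Both arguments exploit that the blocking agents receive utility~$0$ (making $\delta$ irrelevant), but your approach is essentially the matroid lower bound of Lemma~\ref{lem:noexist} transported to the matching setting, whereas the paper observes that matchings already permit a two-agent, four-edge obstruction that hits the bound of~$1$ on the nose.
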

\begin{proof}
This example shows that  Consider the graph $K_{2,2}$ (the complete bipartite graph with two vertices on each side). This graph has four edges, and two disjoint perfect matchings. 
	
Let there be two agents. Agent $1$ has unit utility for the edges of one matching, while agent $2$ has unit utility for the edges of the other matching. Any integer outcome gives zero utility to one of these agents. This agent can deviate and obtain utility $2$. Hence, for an outcome to be a $(\delta,\alpha)$-core outcome, we need $(1+\delta)\cdot 0 + \alpha \ge (1/2) \cdot 2$, which is impossible for any $\delta \ge 0$ and $\alpha < 1$.
\end{proof}
\section{General Packing Constraints}
\label{sec:main}
In this section, we study approximation to the core under general packing constraints of the form $A \vec{x} \le \vec{b}$. Recall that there are $m$ elements, $V_i$ is the maximum possible utility that agent $i$ can receive from a feasible outcome, and $V_{\max} = \max_{i \in N} V_i$.  We prove a statement slightly more general than Theorem~\ref{thm:packing}. We first need the following concept.

\subsection{Maximal Proportionally Fair Outcome}
\label{sec:mmf}
Given an instance of public goods allocation subject to packing constraints, we define the notions of an $r$-proportionally fair ($r$-PF) outcome, a maximally proportionally fair (MPF) outcome, and the MPF value of the instance.

\begin{definition}[MPF Outcome]
	\label{def:rmmf}
	For $r > 0$, we say that a fractional outcome $\vec{w}$ is $r$-\emph{proportionally fair} ($r$-PF) if it satisfies:
	$$
	u_i(\mathbf{w}) \ge \frac{V_i}{r} - 1,  \ \ \forall i \in N.
	$$
	The {\em maximally proportionally fair} (MPF) value $R$ of an instance is the least value $r$ such that there exists an $r$-PF outcome. For simplicity, we say that an $R$-PF outcome is a {\em maximally proportionally fair} (MPF) outcome.
\end{definition}

This concept is crucial to stating and deriving our approximation results. In words, an $r$-PF outcome gives each agent an $r$ fraction of its maximum possible utility $V_i$ (which can be thought of as the fair share guarantee of the agent), if the agent is given $1$ unit of utility for free. Thus, a smaller value of $r$ indicates a better solution. The MPF value $R$ denotes the best possible guarantee. 
The additive $1$ in Def.~\ref{def:rmmf} can be replaced by any positive constant; we choose $1$ for simplicity.


We now show an upper bound for $R$ that holds for all instances. Recall from Equation~(\ref{eq:width}) that $\rho$ is the {\em width} of the instance.
\begin{lemma}
$R \le \min(V_{\max},n,\rho)$, and an MPF outcome is computable in polynomial time.
\label{lem:R-MMF}
\end{lemma}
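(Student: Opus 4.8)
I need to establish two things: the bound $R \le \min(V_{\max}, n, \rho)$, and that an MPF outcome can be computed in polynomial time. For the bound, it suffices to exhibit, for each of the three quantities $V_{\max}$, $n$, and $\rho$, a fractional feasible outcome $\vec{w}$ that is $r$-PF for $r$ equal to that quantity; since $R$ is the least such $r$, taking the best of the three gives the result.

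\textbf{The three bounds.} For $r = V_{\max}$: take $\vec{w} = \vec{0}$ (or any feasible outcome). Then for each agent $i$, we need $u_i(\vec{0}) = 0 \ge V_i/V_{\max} - 1$, which holds since $V_i \le V_{\max}$. For $r = \rho$: the natural candidate is the uniform fractional outcome $w_j = 1/\rho$ for all $j$ (assuming $\rho \ge 1$; if $\rho < 1$ then the all-ones vector is feasible and the claim is immediate). Feasibility follows from the definition of width: $\sum_j a_{kj} w_j = (1/\rho)\sum_j a_{kj} \le (1/\rho)(\rho \, b_k) = b_k$ for every constraint $k$. Then $u_i(\vec{w}) = (1/\rho) \sum_j u_{ij} \ge (1/\rho)\, u_i(\bc^i_{\max}) = V_i/\rho \ge V_i/\rho - 1$, where $\bc^i_{\max}$ is the utility-maximizing outcome for agent $i$ (using $u_{ij} \ge 0$). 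For $r = n$: this is the most interesting case — I would use the fractional Max Nash Welfare solution, or more precisely invoke the analysis from Section~\ref{sec:fractional}. The fractional MNW outcome $\bc$ satisfies the proportional fairness inequality $\sum_{i} u_i(\bd)/u_i(\bc) \le n$ for every feasible $\bd$ (Equation~\eqref{eqn:prop-fairness}); taking $\bd$ to be agent $i$'s utility-maximizing outcome gives $u_i(\bc) \ge u_i(\bd)/n = V_i/n \ge V_i/n - 1$ for each $i$. (One must handle the possibility that some agents have $V_i = 0$, in which case they impose no constraint, and the degenerate case where every outcome zeroes out some agent, using the convention from Section~\ref{sec:fnw} that MNW first maximizes the support; agents outside the support have the guarantee trivially via the $-1$ slack, which is where the additive relaxation in Def.~\ref{def:rmmf} earns its keep.)

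\textbf{Polynomial-time computability.} Since being $r$-PF for $r = \min(V_{\max}, n, \rho)$ is witnessed by one of the three explicit constructions above, and among those the all-zero vector and the uniform $1/\rho$ vector are trivially computable, the only work is the $r = n$ case. For that I would appeal to Theorem~\ref{thm:fractional}: we can compute a fractional $(\delta, \epsilon)$-core outcome in polynomial time, and a $(0,\epsilon)$-core (or $(\delta,\epsilon)$-core for tiny $\delta$) outcome $\bc$ satisfies $u_i(\bc) \ge V_i/n - \epsilon'$ for a suitable small $\epsilon'$, hence is $n$-PF after absorbing the additive slack into the $-1$ term (choosing $\epsilon$ a constant, say $\epsilon = 1$, suffices and the running time is polynomial). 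Alternatively — and perhaps cleaner — note that we do not actually need the MNW solution: we only need \emph{some} $n$-PF outcome, and any outcome satisfying $u_i(\bc) \ge V_i/n - 1$ for all $i$ will do, which can be found by a direct linear/convex feasibility program since each $V_i$ is computable (it is a linear optimization over the packing polytope $\P$). Then output whichever of the three candidate outcomes corresponds to the smallest of $V_{\max}$, $n$, $\rho$.

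\textbf{Main obstacle.} The only delicate point is the $r = n$ bound and its algorithmic version: handling agents who can be given zero utility in every feasible integral/fractional outcome, and making sure the additive $-1$ in the definition of $r$-PF correctly absorbs the approximation error from Theorem~\ref{thm:fractional} (or from solving the feasibility LP to finite precision). The other two bounds are immediate. I expect the writeup to be short, with the bulk of the care going into the $n$-PF construction and a sentence reconciling it with the degenerate-support convention.
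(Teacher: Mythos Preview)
Your bounds $R \le V_{\max}$ and $R \le \rho$ match the paper's. For $R \le n$, the paper uses a more elementary construction than your MNW route: let $\bc^i$ be a feasible (fractional) outcome maximizing agent $i$'s utility, and set $\vec{w} = \frac{1}{n}\sum_{i \in N} \bc^i$. This is a convex combination of points in the packing polytope, hence feasible, and $u_i(\vec{w}) \ge \frac{1}{n}u_i(\bc^i) = V_i/n$ for each $i$. Your MNW argument is correct too, but it drags in the first-order optimality condition and the degenerate-support bookkeeping you flagged, none of which is actually needed here.

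There is a real gap in the computability part. The lemma asks for an \emph{MPF} outcome, i.e., an $R$-PF outcome for the optimal value $R$. Your plan only outputs one of three explicit witnesses, which is $r^*$-PF for $r^* = \min(V_{\max}, n, \rho)$; in general $R$ can be strictly smaller than $r^*$, so your output need not be MPF. (The distinction matters because the refined Theorem~\ref{thm:main} is stated in terms of $R$, not $r^*$.) The fix is simple and you are one step away from it: you already note that each $V_i$ is an LP over the packing polytope and that finding an $n$-PF outcome is an LP feasibility question. The paper just turns that feasibility test into an optimization: after computing all $V_i$, solve
\[
\max\ \hat r \quad\text{s.t.}\quad \textstyle\sum_{j} u_{ij} w_j \ge V_i\,\hat r - 1\ \ \forall i,\quad A\vec w \le \vec b,\quad w_j \in [0,1]\ \ \forall j,
\]
which is linear in $(\hat r, \vec w)$ and directly returns $R = 1/\hat r$ together with an MPF outcome.
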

\begin{proof}
To show that $R$ is well-defined, note that for $r = V_{\max}$, an $r$-PF outcome $\vec{w}$ simply requires $u_i(\vec{w}) \ge 0$, which is trivially achieved by every outcome. Therefore, $R$ is well-defined, and $R \le V_{\max}$.  Next, $R \le n$ follows from the fact that there exist fractional outcomes satisfying proportionality (e.g., the outcome $\vec{w}$ obtained by taking the uniform convex combination of the $n$ outcomes that give optimal to each individual agent). Finally, to show $R \le \rho$, consider the outcome $\vec{w}$ in which $w_j = \frac{1}{\rho}$ for each element $j$. Clearly, $u_i(\vec{w}) \ge \frac{V_i}{\rho}$ for all $i$.  Further, $A \vec{w} \le \vec{b}$ is satisfied trivially due to the fact that $\rho$ is the width of the packing constraints.

To compute the value of $R$ as well as an MPF outcome, we first note that the value of $V_i$ for each agent $i$ can be computed by solving a separate LP. Then, we consider the following LP:
\begin{equation}
\label{eq:compute-R} 
\mbox{Maximize} \ \ \hat{r}
\end{equation}
\[ \begin{array}{rcll}
\sum_{j \in W} u_{ij} w_j  & \ge & V_i \cdot \hat{r} - 1 & \forall i \in [n] \\
A \vec{w} & \leq  & \vec{b} & \\
w_j & \in &  [0,1] &  \forall j \in W
\end{array}\]
Here, $A \vec{w} \le \vec{b}$ are the packing constraints of the instance, and $\hat{r}$ is a variable representing $1/r$. Thus, maximizing $\hat{r}$ minimizes $r$, which yields an MPF outcome. This can be accomplished by solving $n+1$ linear programs, which can be done in polynomial time.
\end{proof}

Our main result in this section uses any $r$-PF outcome, and provides a guarantee in terms of $\log r$. Thus, we do not need to necessarily compute an exact MPF outcome. We note that an MPF outcome can be very different from a core outcome. Yet, an MPF outcome gives each agent a large fraction of its maximum possible utility, subject to a small additive relaxation. As we show below, this helps us find integral outcomes that provide good approximations of the core.

\subsection{Result and Proof Idea}
\label{sec:idea}
Our main result for this section (Theorem~\ref{thm:packing}) can be stated in a refined way as follows. Recall that $\log^*$ is the iterated logarithm, which is the number of times the logarithm function must be iteratively applied before the result becomes less than or equal to $1$.

\begin{theorem}
\label{thm:main}
 Fix constant $\delta \in (0,1)$. Suppose we are given a set of $K$ packing constraints $A \vec{x} \le \vec{b}$  such that $b_k = \omega\left(\frac{\log K}{\delta^2} \right)$ for all $k \in [K]$. Let $R$ be the MPF value of this instance. Then there exists a polynomial time computable $(\delta,\alpha)$-core outcome, where 
$$
\alpha = O\left( \frac{1}{\delta^4}  \cdot \log\left(\frac{R \cdot \log^* V_{\max}}{\delta} \right) \right).
$$
\end{theorem}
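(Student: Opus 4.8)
The plan is to start from any $r$-PF fractional outcome $\vec{w}$ (with $r = R$, available in polynomial time by Lemma~\ref{lem:R-MMF}) and round it to an integral outcome $\bc$ while (i)~only mildly violating the packing constraints, then repairing them, and (ii)~controlling, for every coalition $S$, the quantity $\frac{|S|}{n}\sum_{i\in S}\frac{u_i(\bcp)}{(1+\delta)u_i(\bc)+\alpha}$ in the style of the matroid/matching proofs, but now via a probabilistic/LP-duality argument rather than a local search. The key structural fact to exploit is the one already used for the fractional MNW solution in Equation~\eqref{eqn:prop-fairness}: a suitable fractional solution satisfies $\sum_i u_i(\bd)/u_i(\vec w)\le n$ for all feasible $\bd$, which certifies the (fractional) core; the task is to transfer this to the integral $\bc$ with an additive slack of $\alpha$. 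So first I would set up a ``scaled'' fractional target: combine the $r$-PF outcome with the fractional MNW-type solution so that every agent is guaranteed utility $\Omega((V_i/R) - 1)$ while the prices $p_j = \sum_i u_{ij}/(\ell + u_i(\vec w))$ satisfy $\sum_j w_j p_j \le n$; here $\ell$ is a smoothing constant of order $\mathrm{poly}(1/\delta)\cdot\log(\cdot)$ chosen at the end.

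Second, the rounding. Because $b_k = \omega(\log K/\delta^2)$, a dependent-rounding / randomized-rounding scheme (e.g., pipage or the Srinivasan-style rounding that preserves the marginals $w_j$ and keeps linear functions concentrated) produces an integral $\vec x$ with $A\vec x \le (1+\delta)\vec b$ with high probability by a Chernoff bound, and with each agent's utility concentrated around $u_i(\vec w)$ up to a multiplicative $(1\pm\delta)$ and an additive $O(\log(1/\delta))$ term coming from the union bound over the $n$ agents and $K$ constraints — this is where the $\log(\min(V_{\max},n,\rho)\cdot\log^* V_{\max})$ ultimately enters, since we only need concentration relative to each agent's own scale $V_i\le V_{\max}$ and the number of ``scales'' to union-bound over can be bucketed into $O(\log^* V_{\max})$ geometric classes. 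Third, repair: scale the width back down by discarding or down-weighting a $\delta$-fraction of mass to restore $A\vec x\le \vec b$, losing only a further multiplicative $(1-\delta)$ factor in utilities; iterate the whole bucketing/rounding $O(\log^* V_{\max})$ times so that the additive error telescopes (each round reduces the relevant ``gap parameter'' by a logarithm), which is the source of the iterated-logarithm dependence.

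Fourth, the core guarantee. Given the integral $\bc$ with prices $p_j$, for any coalition $S$ and any deviation $\bcp$, sum the per-agent inequality $u_{ij}\le (\ell + u_i(\bc))\,p_j$ over $j\in\bcp$ and $i\in S$, using $\sum_{j\in\bcp}p_j \le \rho\cdot(\text{something})$ or more precisely $\sum_j p_j \le n$ from the price construction, to conclude $\sum_{i\in S}\frac{u_i(\bcp)}{\ell + u_i(\bc)}\le n$; this contradicts the hypothetical $\frac{|S|}{n}u_i(\bcp)\ge (1+\delta)u_i(\bc)+\alpha$ once $\ell$ and $\alpha$ are chosen as $\Theta(\tfrac{1}{\delta}\cdot(\ell + \text{rounding error}))$, i.e. $\alpha = O(\delta^{-4}\log(R\log^* V_{\max}/\delta))$ after absorbing the $\mathrm{poly}(1/\delta)$ factors accumulated across the $O(\log(1/\delta))$-many rounding/repair phases. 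The reason $\alpha$ scales with $\log R$ rather than $R$ is that we never need the prices $p_j$ themselves to be small — only that $u_i(\bc)\ge \Omega(V_i/R)-1$ so that the additive error $\alpha$ is dwarfed by the multiplicative term $\delta\cdot u_i(\bc)$ except when $u_i(\bc) = O(\alpha)$, and there the $r$-PF property bounds $V_i = O(R\cdot\alpha)$, making the $(|S|/n)u_i(\bcp)\le u_i(\bcp)\le V_i$ side at most $O(R\alpha)$, which the additive slack $\alpha = \Omega(\log R)$ can... — more carefully, the logarithm appears because we only need $u_i(\bc) + \ell \ge \frac{1}{1+\delta}\cdot\frac{|S|}{n}u_i(\bcp)$ and a dyadic argument over the $O(\log R)$ possible ratios $V_i/u_i(\bc)$ suffices.

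\textbf{Main obstacle.} The hard part will be the rounding step: standard randomized rounding preserves marginals but an agent with many small utilities needs an additive error bounded in terms of her own $V_i$, not the global scale, so a naive union bound over agents and packing constraints costs $\log(nK)$ additively, which is too large when $V_{\max}$ is small. Overcoming this requires the bucketing of agents (and elements) by utility scale together with the iterated rounding that shrinks the error geometrically — achieving the $\log^* V_{\max}$ (rather than $\log V_{\max}$) factor — and simultaneously keeping the packing violation at $(1+\delta)\vec b$ throughout, which is exactly why the theorem needs $b_k = \omega(\log K/\delta^2)$ and is, I expect, the most delicate part of the argument.
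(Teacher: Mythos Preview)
Your proposal has the right ingredients but misidentifies the mechanism behind both the $\log^*V_{\max}$ factor and the $\log R$ factor, and misses the key structural idea.

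First, there is no iterated rounding. The paper does a \emph{single} round of independent randomized rounding applied to $\vec z=(1-\gamma)\vec x+\gamma\vec y$, where $\vec x$ is the (approximate) fractional MNW solution and $\vec y$ is the MPF solution. The $\log^*V_{\max}$ comes from the number of \emph{buckets} of agents, not from repeating the procedure: agents are grouped by $V_i$ using the recursion $Q_0=\log V_{\max}$, $Q_{\ell+1}=2\log Q_\ell$, which terminates after $L=\Theta(\log^* V_{\max})$ steps at $Q_L=\Theta(\log(R\cdot\log^*V_{\max}/\delta))$. Your ``iterated rounding that shrinks the error geometrically'' is not the argument and I do not see how it would work.

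Second, and more fundamentally, you are still trying to obtain per-agent concentration (``each agent's utility concentrated around $u_i(\vec w)$ up to \ldots'') and then union-bound. The paper explicitly explains why this fails: it gives an instance (Section~\ref{sec:idea}) where any integral outcome gives utility $0$ to some agent whose fractional utility was $\Theta(m)$. The actual idea is to \emph{give up} on some agents. For each bucket $G_\ell$, one bounds only the \emph{expected number} of violators $F_\ell=\{i\in G_\ell:\widehat U_i<(1-3\gamma)U_i^*\}$ via Chernoff and Markov, obtaining $|F_\ell|\le |G_\ell|/(2Le^{Q_\ell})$. The core argument is then a case split on a deviating coalition $S$: either $S$ contains mostly non-violators, in which case the fractional-MNW inequality $\sum_i u_i(\bd)/U_i^*\le n$ transfers to $\widehat U_i$; or $S$ has $\ge\gamma|S|/(2L)$ members in some $F_\ell$, which forces $|S|\le (2L/\gamma)|F_\ell|\le n/(\gamma e^{Q_\ell})$, so the scaled deviation utility $(|S|/n)u_i(\vec h)\le (|S|/n)V_i\le 1/\gamma$ is already small. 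This coalition-size argument is the heart of the proof and is absent from your sketch.

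Finally, the role of the MPF mixture is precisely to make the Chernoff bound strong enough for heavy agents: because $\vec y$ guarantees $u_i(\vec y)\ge V_i/(2R)$ when $V_i>2R$, the expected utility under $\vec z$ has a $\gamma V_i/(2R)$ term even if $U_i^*$ is tiny, giving failure probability $e^{-\Omega(\gamma^3 V_i/R)}$; requiring this to be at most $e^{-Q_\ell}/L^2$ is what forces the cutoff $Q_L=\Theta(\log(R\log^*V_{\max}/\gamma))$ and hence the $\log R$ in $\alpha$. Your price-based explanation for the $\log R$ does not match this mechanism.
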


We first note that the above result cannot be obtained by maximizing the smooth Nash welfare objective; we present Example~\ref{eg:knapsack} in Appendix~\ref{sec:Examples}, which demonstrates this using only one packing constraint. To be precise, the example shows that no single value of parameter $\ell$ in the smooth Nash welfare objective can provide a polylog additive guarantee for all instances. While it may be possible to choose the value of $\ell$ based on the instance, it does not seem trivial. We take a different approach. Our idea is to start with a fractional core solution $\vec{x}$. Suppose it assigns utility $U^*_i$ to agent $i$. Fix $\delta > 0$, and consider the following program.

\begin{equation}
\label{eq:round} 
\mbox{Minimize} \ \ \alpha
\end{equation}
\[ \begin{array}{rcll}
\alpha + (1+\delta) \cdot \sum_{j \in W} u_{ij} w_j  & \ge & U^*_i  & \forall i \in [n] \\
A \vec{w} & \leq  & \vec{b} & \\
 w_j & \in &  \{0,1\} &  \forall j \in W  \\
 \alpha & \ge & 0
\end{array}\]

For the optimum value $\alpha^*$, we obtain an outcome that is $(\delta,\alpha')$-core for every $\alpha' > \alpha$. To see this, take a subset of agents $S$ and a feasible utility vector $\vec{U'}$ under any other (even fractional) outcome. Because $\vec{x}$ is a core outcome, there exists $i \in S$ such that $U^*_i \ge (|S|/n) \cdot U'_i$. For $\alpha' > \alpha$, the ILP solution implies 
$$
\alpha' + (1+\delta) \cdot \sum_{j \in W} u_{ij} w_j  >  U^*_i  \ge  \frac{|S|}{n} \cdot U'_i,
$$ 
which implies that the solution is $(\delta,\alpha')$-core according to Definition~\ref{def:approx}. 

However, $\alpha^*$ obtained from this program can be rather large, as illustrated in the following example. Consider the {\sc Knapsack} setting with $m$ unit-size projects. There is an overall budget $B = m/2$. For every feasible integral outcome $\bc$, let there be an agent with utility $1$ for every project in $\bc$ and $0$ for all other projects. Thus, there are $\binom{m}{m/2}$ agents. The fractional core outcome gives weight $1/2$ to each project, thus giving utility $V_i / 2 = m / 4$ to each agent $i$. However, every integral outcome gives utility $0$ to at least one agent, which implies $\alpha^* = \Omega(m)$.

This example shows that when there are a large number of agents, we cannot achieve Theorem~\ref{thm:main} by hoping to approximately preserve the utilities to {\em all} agents with respect the fractional core solution. 
However, note that in the above example, though there is one agent who gets very little utility, this agent has no incentive to deviate if she is given one unit of utility for free. This insight leads us to our analysis below, which is based on rounding the fractional core solution $\vec{x}$. 

Let us apply randomized rounding to $\vec{x}$. Instead of using Chernoff bounds to ensure that there are no ``violations'' (i.e., that no agent receives utility that is too far from its utility under the core outcome $\vec{x}$), we hope to bound the expected number of such violations. If there are few such agents, we still have an approximate core outcome because if this small coalition of agents deviates, its utility under a new outcome will be scaled down by a large factor. Unfortunately, it can be shown that bounding the expected number of deviations by a sufficiently small number forces $\alpha = \Omega(\log V_{\max})$. This is better than $\alpha = \Omega(m)$ from our previous approach, but still {\em much} larger than the bound we want to achieve in Theorem~\ref{thm:main} when the width $\rho$ is small.

This brings up the main technical idea. We observe that an MPF outcome, though not in the core, provides a reasonably large utility to each agent. We add a small amount of this outcome to the fractional core before applying randomized rounding. We are now ready to present our algorithm.

\subsection{Algorithm}
Fix $\delta \in \left(0,1\right)$, and let  $\gamma = \frac{\delta}{8}$. 

\begin{enumerate}
\item Compute the (approximate) fractional core solution $\vec{x}$ as in Theorem~\ref{thm:fractional}, where $x_j$ is the fraction of element $j$ chosen. 
\item Let $\vec{y}$ be an MPF outcome as in Definition~\ref{def:rmmf}. 
\item Let $\vec{z} =  (1-\gamma) \vec{x} + \gamma \vec{y}$.
\item For each $j \in W$, choose $j$ to be in the outcome $\bc$ independently with probability $\hat{z_j} = (1-\gamma) z_j$.
\end{enumerate}


\newcommand{\hl}{L}

\subsection{Analysis}
We show that this algorithm yields, with at least a constant probability, a feasible outcome that satisfies the guarantee in Theorem~\ref{thm:main}. This directly shows the existence of such an outcome. Note that the fractional Max Nash Welfare solution $\vec{x}$ can be irrational, but we can compute an approximation in polynomial time (see Theorem~\ref{thm:fractional} for details), which does not change our guarantee asymptotically. Further, $\vec{y}$ can be computed in polynomial time (Lemma~\ref{lem:R-MMF}). Hence, the algorithm runs in expected polynomial time. 
 
We first show that the packing constraints are satisfied. Since we scale down $\vec{z}$ by a factor $(1-\gamma)$ before rounding, we have $A \hat{z} \le  (1-\gamma) \vec{b}$. Since $\vec{b} = \omega\left(\frac{\log K}{\delta^2}\right)$, a simple application of Chernoff bounds shows that with probability at least $0.99$, the rounded solution $\bc$ satisfies $A \vec{\bc} \le \vec{b}$. Therefore, if we show that the algorithm also yields the desired approximation of the core with at least a constant probability ($1/6$ to be precise), we will be done by applying union bound to the two events, feasibility and good approximation to the core.

For the ease of presentation, we suppress constants throughout the proof and use the asymptotic notation liberally.  We also assume that $V_{\max} = \omega(1)$ since otherwise there is a trivial $(0,O(1))$-core outcome that chooses a null outcome, giving zero utility to each agent. 

\subsubsection{Grouping Agents} 
In order to analyze our algorithm, we partition the agents into groups with exponentially decreasing values of $V_i$. Recall that $V_i$ is the maximum utility that agent $i$ can get from any outcome. Set $Q_0 =  \log V_{\max}$, and for $\ell = 0,1,\ldots,\hl-1$, define group $G_{\ell}$ as:
$$
G_{\ell} = \left\{ i \in N \ | \ Q_{\ell} \ge \log V_i \ge Q_{\ell+1} \right\}.
$$
Here, for $\ell = 0,1,\ldots,\hl-1$, we define: $ Q_{\ell+1}  = 2 \log Q_{\ell}.$

We call $G_0,\ldots,G_{\hl-1}$ the {\em heavy groups}. We choose $\hl$ so that $Q_{\hl} = \Theta\left( \log \frac{R  \log^* V_{\max}}{\gamma^3} \right)$. This implies $L = \Omega(\log^* V_{\max}) = \omega(1)$, since $V_{\max} = \omega(1)$. For agent $i$ in a heavy group, $V_i \geq e^{Q_{\hl}} \ge \frac{2 R L}{\gamma^3} > 2 R$. Thus, the utility that the MPF solution provides to agent $i$ is at least $\frac{V_i}{R}-1 \ge \frac{V_i}{2 R}$.

Finally, we put the remaining agents (with a small $V_i$) in a {\em light group} defined as follows:
$$
G_{\hl} = \left\{ i \in N \ | \  \log V_i \le Q_{\hl}  \right\}.
$$
The MPF solution may not provide any guarantee for the utility of agents in this group. 

\subsubsection{Bounding Violations of Utility Preservation}
\label{sec:light}
We want to bound the number of agents whose utilities are far from those under the core outcome. First, we need a specialized Chernoff bound.
\begin{lemma} (Proved in Appendix~\ref{proof:main})
\label{lem:main}
Let $X_1, X_2, \ldots, X_q$ be independent  random variables in $[0,1]$, and let $X = \sum_{j=1}^q X_j$. For $\gamma \in (0,1/2)$, suppose $\E[X] = (1-\gamma) \cdot A + \gamma \cdot B$ for $A, B \geq 0 $. Then
$$ \Pr[X < (1-2\gamma) \cdot A] \le  e^{- \frac{\gamma^3}{2} \max (B,A/2)}$$
\end{lemma}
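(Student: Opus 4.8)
The plan is to prove Lemma~\ref{lem:main} by a standard exponential moment (Chernoff) argument, but applied to the \emph{lower tail} and carefully tracking how the split $\E[X] = (1-\gamma)A + \gamma B$ enters the bound. First I would recall the classical multiplicative Chernoff lower-tail inequality: for independent $X_j \in [0,1]$ with $\mu = \E[X]$ and any $\theta \in (0,1)$, $\Pr[X < (1-\theta)\mu] \le \exp(-\theta^2 \mu / 2)$. The deviation we must control is $\Pr[X < (1-2\gamma)A]$, so I would write $(1-2\gamma)A = (1-\theta)\mu$ with $\mu = (1-\gamma)A + \gamma B$ and solve for the effective deviation parameter $\theta = 1 - \frac{(1-2\gamma)A}{(1-\gamma)A + \gamma B}$. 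Then $\theta^2 \mu/2$ is the exponent, and the whole proof reduces to showing $\theta^2 \mu \ge \gamma^3 \max(B, A/2)$.

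The main work is thus the elementary inequality $\theta^2\mu \ge \gamma^3\max(B,A/2)$, and I would split into two regimes. Write $\theta\mu = \mu - (1-2\gamma)A = ((1-\gamma)A+\gamma B) - (1-2\gamma)A = \gamma A + \gamma B = \gamma(A+B)$, a clean identity. Hence $\theta = \gamma(A+B)/\mu$ and $\theta^2\mu = \gamma^2(A+B)^2/\mu$. Since $\mu = (1-\gamma)A + \gamma B \le A + B$, we get $\theta^2 \mu \ge \gamma^2 (A+B)$. That already gives $\theta^2\mu \ge \gamma^2 B$ and $\theta^2 \mu \ge \gamma^2 A$; I need a factor $\gamma^3$ (smaller, so this is in the right direction) and the $\max$, including the $A/2$ term — so $\gamma^2(A+B) \ge \gamma^3 \max(B, A/2)$ holds as long as $\gamma \le 1$, which is given since $\gamma \in (0,1/2)$. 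The one subtlety: the classical Chernoff lower-tail bound requires $\theta \in [0,1]$, i.e. $(1-2\gamma)A \le \mu$, equivalently $\gamma A + \gamma B \ge 0$, which is automatic; and if $(1-2\gamma)A \le 0$ or more generally the target is below $0$ the probability is zero and there is nothing to prove. I would also handle the edge case $\mu = 0$ (then $A = B = 0$ and the event is impossible) separately.

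One point to be careful about is that the standard statement of the Chernoff lower tail is usually phrased for $X_j \in \{0,1\}$ or for $X_j \in [0,1]$ with $X = \sum X_j$; since Lemma~\ref{lem:main} allows $X_j \in [0,1]$, I would either cite a version valid for bounded (not just Bernoulli) variables — e.g.\ via Hoeffding-type or the generic bound $\E[e^{-tX_j}] \le e^{(e^{-t}-1)\E[X_j]}$ which holds for any $[0,1]$-valued $X_j$ by convexity of $e^{-tx}$ in $x$ on $[0,1]$ — and then optimize over $t > 0$ to recover exactly $\exp(-\theta^2\mu/2)$ for the lower tail. This convexity step ($e^{-tx} \le 1 - (1-e^{-t})x$ for $x \in [0,1]$, then take expectations and multiply) is the only place where boundedness rather than $\{0,1\}$-valuedness matters, and it goes through verbatim.

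I do not expect any genuine obstacle here; the ``hard part'' is purely bookkeeping — choosing the deviation parameter $\theta$ so that the identity $\theta\mu = \gamma(A+B)$ falls out, and then checking the two-regime inequality $\gamma^2(A+B) \ge \gamma^3\max(B,A/2)$ under $\gamma < 1/2$. I would present the convexity/MGF computation and the optimization over $t$ in one or two lines, state the resulting $\exp(-\theta^2\mu/2)$ bound, substitute $\theta\mu = \gamma(A+B)$ and $\mu \le A+B$, and conclude $\theta^2\mu/2 = \gamma^2(A+B)^2/(2\mu) \ge \gamma^2(A+B)/2 \ge \frac{\gamma^3}{2}\max(B, A/2)$, which is exactly the claimed bound.
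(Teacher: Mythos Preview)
Your proposal is correct and in fact cleaner than the paper's argument. The paper proves the lemma by a two-case analysis: when $(1-\gamma)A \ge B$ it applies the multiplicative Chernoff bound with deviation parameter $\epsilon = \gamma$ and uses $\E[X] \ge \max(\gamma B,(1-\gamma)A)$; when $(1-\gamma)A < B$ it applies the additive form with $\eta = \gamma B$ and uses $\E[X] \le (1+\gamma)B$. You bypass the split entirely via the identity $\theta\mu = \mu - (1-2\gamma)A = \gamma(A+B)$, then $\theta^2\mu = \gamma^2(A+B)^2/\mu \ge \gamma^2(A+B)$ from $\mu \le A+B$, which already dominates $\gamma^3\max(B,A/2)$. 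Your route yields the intermediate bound $e^{-\gamma^2(A+B)/2}$, strictly sharper than the stated conclusion, and makes transparent why the hypothesis $\gamma < 1/2$ enters: it is precisely what guarantees $\theta \le 1$ (since $(1-2\gamma)A \ge 0$), the one condition you left implicit. Both proofs rest on the same Chernoff inequality for $[0,1]$-valued summands; the difference is purely in how the exponent is lower-bounded.
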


Recall that $\vec{x}$ is the fractional MNW solution, $\vec{y}$ is the fractional MPF solution, and our algorithm applies randomized rounding to their scaled down mixture $(1-\gamma) \vec{z} = (1-\gamma)^2 \vec{x} + \gamma(1-\gamma) \vec{y}$. Let $\widehat{U_i}$ denote the utility of agent $i$ under the final integral outcome obtained by randomly rounding $(1-\gamma) \vec{z}$. Recall that $U_i^*$ is the utility of agent $i$ under the core outcome $\vec{x}$. We want to show that $\widehat{U_i}$ is either multiplicatively or additively close to $U_i^*$ for most agents. For a heavy group $G_{\ell}$, where $\ell \in \set{0,1,\ldots,\hl-1}$, define 
$$
F_{\ell} = \set{i \in G_{\ell}\ \left|\ \widehat{U_i} < (1-3 \gamma) U^*_i \right.}.
$$

Simiarly, for the light group $G_{\hl}$, define 
$$
F_{\hl} = \set{i \in G_{\hl} \ \left| \ \widehat{U_i} < \min\left((1-3\gamma) U^*_i, U_i^* -  \frac{4 Q_{\hl}}{\gamma^4} \right)\right.}.
$$ 

We will use Lemma~\ref{lem:main} to bound the sizes of $F_{\ell}$ for $\ell \in \set{0,1,\ldots,\hl}$ as follows.
\begin{theorem}
\label{thm:main2}
We have that:
\begin{enumerate}
	\item With probability at least $2/3$, we have $|F_{\ell}| \le \frac{1}{2 L e^{Q_{\ell}}} \cdot |G_{\ell}|, \quad \forall \ell \in \set{0,1,\ldots,\hl-1}$.
	\item With probability at least $1/2$, we have $|F_{\hl}| \le \frac{1}{2 e^{Q_{\hl}}} \cdot |G_{\hl}|$.
\end{enumerate}
Thus, with probability at least $1/6$,  both the above inequalities hold simultaneously.
\end{theorem}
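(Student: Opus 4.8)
The plan is to prove each of the two items separately by applying Lemma~\ref{lem:main} to each agent and then using Markov's inequality on the relevant random variable, finally combining the two events by a union bound. Throughout, fix an agent $i$ and apply Lemma~\ref{lem:main} with the random variables being the (scaled) contributions $u_{ij}$ of the chosen elements $j$ to agent $i$'s utility under the randomized rounding of $(1-\gamma)\vec{z}$. Since $u_{ij}\in[0,1]$ and each element $j$ is chosen independently with probability $(1-\gamma)z_j = (1-\gamma)^2 x_j + \gamma(1-\gamma)y_j$, we have $\E[\widehat{U_i}] = (1-\gamma)^2 U^*_i + \gamma(1-\gamma)\,u_i(\vec y)$. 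Writing this in the form $(1-\gamma)A + \gamma B$ with $A = (1-\gamma)U^*_i$ and $B = (1-\gamma)u_i(\vec y)$, Lemma~\ref{lem:main} gives
\[
\Pr\!\left[\widehat{U_i} < (1-2\gamma)(1-\gamma)U^*_i\right] \le e^{-\frac{\gamma^3}{2}\max\!\left(B,\,\tfrac{A}{2}\right)}.
\]
Since $(1-2\gamma)(1-\gamma) \ge 1-3\gamma$, the event $\{\widehat{U_i} < (1-3\gamma)U^*_i\}$ is contained in the event bounded above.

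For the heavy groups, the key is that for $i\in G_\ell$ with $\ell\le L-1$, the MPF solution gives $u_i(\vec y) \ge V_i/(2R)$, and $V_i \ge e^{Q_{\ell+1}}$, while $U^*_i$ could be small. Using $B \ge (1-\gamma)V_i/(2R) \gtrsim e^{Q_{\ell+1}}/R$, one checks that $\frac{\gamma^3}{2}B$ is large enough that $e^{-\gamma^3 B/2} \le \frac{1}{4L^2 e^{2Q_\ell}}$ (this is where the choice $Q_{\ell+1} = 2\log Q_\ell$ and the stopping condition $Q_L = \Theta(\log(RL/\gamma^3))$ are used: we need $e^{Q_{\ell+1}}/R = Q_\ell^2/R$ to dominate $\log(L e^{Q_\ell}) = \log L + Q_\ell$, roughly, after multiplying by $\gamma^3$). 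Hence $\Pr[i\in F_\ell] \le \frac{1}{4L^2 e^{2Q_\ell}}$, so $\E[|F_\ell|] \le \frac{|G_\ell|}{4L^2 e^{2Q_\ell}}$. By Markov, $\Pr[|F_\ell| > \frac{|G_\ell|}{2Le^{Q_\ell}}] \le \frac{1}{2L e^{Q_\ell}} \le \frac{1}{2L}$; a union bound over $\ell\in\{0,\ldots,L-1\}$ gives failure probability at most $L\cdot\frac{1}{2L} = \frac12$... which is not quite $2/3$-good, so I would instead be slightly more careful and push the per-agent bound to $\frac{1}{cL^2 e^{2Q_\ell}}$ for a suitable constant $c$ (absorbed into the $\omega(1)$/constant slack), making the union bound give $\le 1/3$ and hence probability $\ge 2/3$ for item~1.

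For the light group, I cannot rely on an MPF guarantee, so instead I use the additive slack: for $i\in G_L$, either $\widehat{U_i} \ge (1-3\gamma)U^*_i$, or the deficit $U^*_i - \widehat{U_i}$ is bounded in additive terms. Here I apply Lemma~\ref{lem:main} again but exploit the $\max(B, A/2)$ term through $A/2 = (1-\gamma)U^*_i/2$: when $U^*_i \gtrsim Q_L/\gamma^4$, the exponent $\frac{\gamma^3}{2}\cdot\frac{A}{2} \gtrsim \gamma^3 U^*_i/\gamma^4 \gtrsim Q_L/\gamma$ is large enough that $e^{-\gamma^3 A/4} \le \frac{1}{2e^{2Q_L}}$ (using $Q_L = \Theta(\log(R\log^* V_{\max}/\gamma^3)) = \omega(1)$, so $e^{2Q_L}$ is polynomially large in the relevant parameters); when $U^*_i$ is small, the additive bound $U^*_i - 4Q_L/\gamma^4 < 0 \le \widehat{U_i}$ is automatic. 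Actually to make the exponential event match the definition of $F_L$ I would split on the threshold $U^*_i \gtrless 4Q_L/\gamma^4$: above it, $\Pr[\widehat{U_i} < \min(\ldots)] \le \Pr[\widehat{U_i} < (1-3\gamma)U^*_i]\le e^{-\gamma^3 A/4} \le \frac{1}{4e^{2Q_L}}$; below it the event is impossible. So $\E[|F_L|] \le \frac{|G_L|}{4e^{2Q_L}}$, and Markov gives $\Pr[|F_L| > \frac{|G_L|}{2e^{Q_L}}] \le \frac{1}{2e^{Q_L}} \le \frac12$, establishing item~2. Finally, the two events hold simultaneously with probability at least $1 - 1/3 - 1/2 = 1/6$ by a union bound.

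The main obstacle I anticipate is the bookkeeping in item~1: I must verify that the doubly-exponentially-shrinking thresholds $Q_\ell$ and the stopping level $Q_L = \Theta(\log(R\log^* V_{\max}/\gamma^3))$ are calibrated so that, \emph{for every} heavy level $\ell$, the MPF-derived quantity $\frac{\gamma^3}{2}B \gtrsim \frac{\gamma^3 e^{Q_{\ell+1}}}{4R} = \frac{\gamma^3 Q_\ell^2}{4R}$ exceeds $2Q_\ell + 2\log L + O(1)$ (the log of the target reciprocal probability $4L^2 e^{2Q_\ell}$). This requires $Q_\ell^2 \gtrsim R(Q_\ell + \log L)/\gamma^3$ uniformly, i.e.\ $Q_\ell \gtrsim R(1 + \log L / Q_\ell)/\gamma^3$; since $Q_\ell \ge Q_L = \Theta(\log(RL/\gamma^3))$ is the smallest level, checking it there (where it is tightest in the ratio sense) and noting $Q_L$ grows faster than $R\log L/(\gamma^3 Q_L)$ by the logarithmic-vs-linear gap is the crux. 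All other steps — the conditional expectation computation, the decomposition into the $(1-\gamma)A + \gamma B$ form, and the Markov/union-bound combination — are routine.
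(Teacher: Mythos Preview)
Your approach is essentially the paper's: apply Lemma~\ref{lem:main} per agent (using the MPF contribution for heavy agents, and the $A/2$ term together with a threshold split on $U^*_i$ for light agents), then Markov's inequality and a union bound. Two remarks on the bookkeeping, though.

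First, the paper targets a per-agent bound of $\eta_\ell = \tfrac{1}{6L^2 e^{Q_\ell}}$ (not $e^{2Q_\ell}$), and then applies Markov with threshold $3L\cdot \eta_\ell\cdot |G_\ell| = \tfrac{|G_\ell|}{2L e^{Q_\ell}}$, so the per-level failure probability is $\le 1/(3L)$ and the union bound over $L$ levels gives $\le 1/3$ directly. This avoids your ``not quite $2/3$-good'' detour; you do not need an $e^{2Q_\ell}$ denominator.

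Second, and more importantly, your final verification paragraph contains a slip. The heavy groups are $G_0,\dots,G_{L-1}$, so the smallest $Q_\ell$ you ever need there is $Q_{L-1}$, not $Q_L$. Since $Q_L = 2\log Q_{L-1}$, one has $Q_{L-1} = e^{Q_L/2}$, and the choice $Q_L = \Theta\!\big(\log\tfrac{RL}{\gamma^3}\big)$ (with a sufficiently large implied constant) gives $Q_{L-1} \ge c\,RL/\gamma^3$. That is exactly what makes the required inequality $\tfrac{\gamma^3}{8R}Q_\ell^2 \gtrsim Q_\ell + \log L$ hold for every heavy $\ell$: you need $Q_\ell$ at least linear in $R/\gamma^3$, which $Q_{L-1}$ is. Your attempt to check the inequality at $Q_\ell = Q_L$ would fail, since $Q_L$ is only logarithmic in $R$; your claim that ``$Q_L$ grows faster than $R\log L/(\gamma^3 Q_L)$'' is false when $R$ is large. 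Fortunately you never actually need it, because the heavy groups stop at level $L-1$.
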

\begin{proof}
We prove the first and the second part of Theorem~\ref{thm:main2} separately by considering the heavy groups and the light group in turn. The combined result follows from the union bound.

\paragraph{Case 1: Heavy Groups} Consider a heavy group $G_{\ell}$ for $0 \le \ell < \hl$. 
Recall that the MPF solution provides utility at least $V_i/(2R)$ to each agent in a heavy group. Hence, we have: 
\begin{equation}
\E \left[ \widehat{U_i} / (1-\gamma) \right] = u_i(\vec{z}) \ge (1-\gamma) \cdot U^*_i + \gamma \cdot \frac{V_i}{2 R}.
\label{eqn:heavy-utility}
\end{equation}

The key point is that even if $U^*_i$ is small, the expected utility is at least a term that is proportional to $V_i$. This will strengthen our application of Chernoff bounds. Using Lemma~\ref{lem:main} with $A = U_i^*$ and $B = V_i/(2R)$, we have:
\begin{align}
\Pr\left[ \widehat{U_i} < (1- 3 \gamma) \cdot U^*_i \right] & \leq \Pr\left[ \frac{\widehat{U_i}}{1-\gamma} < (1- 2 \gamma) \cdot U^*_i \right] \nonumber\\ & \le e^{- \frac{\gamma^3}{4} \frac{V_i}{2R}}    \le  e^{- \frac{\gamma^3}{8R}  Q_{\ell}^2}\nonumber\\
& \le e^{- Q_{\ell} \cdot \log L}  \le e^{ - \left( Q_{\ell} + 2 \log L + \log 6 \right)}  \leq  \frac{1}{6 L^2 e^{Q_{\ell}}},\label{eqn:heavy-prob}
\end{align}
where the second inequality holds because $\log V_i \ge 2 \log Q_{\ell} $, the third holds because $Q_{\ell} = \Omega \left(\frac{R  L}{\gamma^3} \right)$, and the fourth holds because $L = \omega(1)$.

We are now ready to prove the first part of Theorem~\ref{thm:main2}. Let $\eta_{\ell} =   \frac{1}{6 L^2 e^{Q_{\ell}}}$. Recall that $F_{\ell}$ consists of agents in $G_{\ell}$ for which $\widehat{U_i}< (1-3\gamma) \cdot U^*_i$. Using the linearity of expectation in Equation~\eqref{eqn:heavy-prob}, we have $\E[F_{\ell}] \le  \eta_{\ell} \cdot |G_{\ell}|$. By Markov's inequality,  $
\Pr\left[ |F_{\ell}| > 3 L \cdot \eta_{\ell} \cdot |G_{\ell}| \right] \le \frac{1}{3L}.$
Applying the union bound over the $L$ heavy groups, we have that with probability at least $2/3$,
$$
|F_{\ell}| \le 3 L \cdot \eta_{\ell} \cdot |G_{\ell}| =  \frac{1}{ 2 L e^{Q_{\ell}}} \cdot |G_{\ell}|,\quad \forall \ell \in \{0,1,\ldots,L-1\},
$$
which proves the first part of Theorem~\ref{thm:main2}.

\paragraph{Case 2: Light Group}
For the light group, note that  $\log V_i \le Q_{\hl}$. For this group, the MPF solution may not provide any non-trivial guarantee on the utility to the agents. Since the expected utility can now be small, we have to allow additive approximation as well. Recall that $F_{\hl}$ consists of agents in $G_{\hl}$ for whom $\widehat{U_i} < (1-3\gamma) \cdot U^*_i$ {\em as well as} $\widehat{U_i} < U_i^* -  4Q_{\hl}/\gamma^4$. We again consider two cases. 

\medskip
\noindent {\bf Case 1.} If $U^*_i \le \frac{4}{\gamma^4} Q_{\hl}$, then  $\widehat{U}_i \ge U^*_i -  \frac{4 Q_{\hl}}{\gamma^4}$ trivially. 

\medskip
\noindent {\bf Case 2.} Otherwise, $U^*_i \ge \frac{4 Q_{\hl}}{\gamma^4}  $, and using Lemma~\ref{lem:main}, we have:
$$
\Pr\left[ \widehat{U}_i < (1- 3 \gamma) U^*_i \right]  \le \Pr\left[ \frac{\widehat{U}_i}{1-\gamma} < (1- 2 \gamma) U^*_i \right] \le e^{- \frac{\gamma^3}{4} U^*_i}     \le  e^{- \frac{\gamma^3}{4}  \cdot \frac{4 Q_{\hl}}{\gamma^4} }  \le  \frac{1}{4 e^{Q_{\hl}}}.
$$
It is easy to check that the final transition holds because $\gamma < 1$ is a constant and $Q_L = \omega(1)$. 

Note that none of the agents in $F_{\hl}$ are in Case 1. Hence, by Markov's inequality, we again have:
$$
\Pr\left[|F_{\hl}| \ge \frac{1}{2e^{Q_{\hl}}} \cdot |G_{\hl}| \right] \le \frac{1}{2},
$$
which proves the second part of Theorem~\ref{thm:main2}. 

\end{proof}

\subsubsection{Approximate Core}
We showed that with probability at least $1/6$, our algorithm returns a solution that satisfies conditions in both parts of Theorem~\ref{thm:main2}. We now show that such a solution is the desired approximate core solution. The main idea is that when a set of agents deviate, the fraction of agents in a group $G_{\ell}$ that are in $F_{\ell}$ is small enough such that even if they receive their maximum possible utility, which is $e^{Q_{\ell}}$, their scaled down utility is at most a constant.

\begin{theorem} 
\label{thm:main3}
For every coalition $S$ and every possible outcome $\vec{h}$, there exists an agent $i \in S$ s.t. 
$$
\frac{|S|}{n} \cdot u_i(\vec{h}) \le (1+8\gamma) \cdot \widehat{U_i} + \frac{5 Q_{\hl}}{\gamma^4}.
$$
\end{theorem}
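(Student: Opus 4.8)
The plan is to combine the utility-preservation guarantees from Theorem~\ref{thm:main2} with the fact that $\vec{x}$ is (approximately) a fractional core outcome. Fix a coalition $S$ and an arbitrary outcome $\vec{h}$. Since $\vec{x}$ is a core outcome and $\vec{h}$ gives utility $u_i(\vec{h})$ to each $i$, there must exist some agent $i^\star \in S$ with $U^\star_{i^\star} \ge (|S|/n)\cdot u_{i^\star}(\vec{h})$ --- otherwise $S$ would be a blocking coalition against $\vec{x}$ (with a small slack absorbed by the $\delta$ in the approximate fractional core from Theorem~\ref{thm:fractional}; I would just carry this $(1+O(\gamma))$ factor through). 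The difficulty is that this agent $i^\star$ might lie in one of the ``bad'' sets $F_\ell$, so $\widehat{U_{i^\star}}$ need not be close to $U^\star_{i^\star}$, and we cannot directly conclude. The key idea is therefore to argue that we can always find \emph{some} agent in $S$ that is simultaneously (i) roughly as well-off under $\vec{x}$ as the blocking agent needs to be, and (ii) not in a bad set, by a counting argument over the groups.

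First I would set up the counting. Let $S_\ell = S \cap G_\ell$ for $\ell \in \{0,\dots,L\}$. The core property applied to $S$ and $\vec{h}$ gives an agent whose $U^\star$ value is large; more usefully, I would apply the core property not to all of $S$ but restrict attention to the ``good'' agents $S \setminus \bigcup_\ell F_\ell$. The point is that $\left|\bigcup_\ell F_\ell\right|$ is tiny: from Theorem~\ref{thm:main2}, $|F_\ell| \le \frac{1}{2L e^{Q_\ell}} |G_\ell|$ for heavy groups and $|F_L| \le \frac{1}{2 e^{Q_L}}|G_L|$. So the agents in bad sets form a negligible sub-coalition. Concretely, consider the coalition $S' = S \setminus \bigcup_{\ell} F_\ell$. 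If $S'$ were empty or small then $S \subseteq \bigcup F_\ell$ up to lower-order terms, but since each $|F_\ell|$ is an exponentially small fraction of $|G_\ell| \le n$, we get $|S \setminus S'| \le \sum_\ell |F_\ell|$ which is $o(n)$; I would handle the edge case where $S'$ is empty separately (then $|S|$ itself is $o(n)$ and the scaling factor $|S|/n$ kills everything, so the null-outcome-type bound suffices).

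Now apply the fractional core property to $S'$: since $\vec{x}$ admits no blocking coalition, there is $i \in S'$ with $(|S'|/n) \cdot u_i(\vec{h}) \le (1+O(\gamma)) U^\star_i$, and because $i \in S'$ it is not in any $F_\ell$, so if $i \in G_\ell$ with $\ell < L$ we have $\widehat{U_i} \ge (1-3\gamma) U^\star_i$, giving $U^\star_i \le \frac{1}{1-3\gamma}\widehat{U_i} \le (1+O(\gamma))\widehat{U_i}$; and if $i \in G_L$ we have $\widehat{U_i} \ge \min((1-3\gamma)U^\star_i,\, U^\star_i - 4Q_L/\gamma^4)$, so in either sub-case $U^\star_i \le (1+O(\gamma))\widehat{U_i} + 4Q_L/\gamma^4$. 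It remains to convert the bound in terms of $|S'|$ into one in terms of $|S|$: write $\frac{|S|}{n} u_i(\vec{h}) = \frac{|S|}{|S'|}\cdot\frac{|S'|}{n} u_i(\vec{h})$ and bound $\frac{|S|}{|S'|} = \frac{|S'| + |S\setminus S'|}{|S'|} \le 1 + \frac{\sum_\ell |F_\ell|}{|S'|}$. The cleanest route is probably to argue that $\frac{|S\setminus S'|}{n}\cdot u_i(\vec{h})$ is itself small: since $u_i(\vec{h}) \le V_i = e^{Q_\ell}$ for $i \in G_\ell$, and the number of bad agents in $G_\ell$ is at most $\frac{1}{2Le^{Q_\ell}}|G_\ell|$, the ``leakage'' term $\sum_\ell \frac{|F_\ell|}{n} e^{Q_\ell}$ telescopes to something like $\frac{1}{2L}\sum_\ell 1 = \frac12 = O(1)$ --- this is exactly the mechanism the theorem's preamble hints at (``even if they receive their maximum possible utility $e^{Q_\ell}$, their scaled down utility is at most a constant'').

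I expect the main obstacle to be this last bookkeeping step: one must be careful that the blocking agent $i$ selected by the core property lies in a specific group $G_\ell$, yet the ``max utility $e^{Q_\ell}$'' bound that controls the leakage term must be applied group-by-group to \emph{all} of $S \setminus S'$, not just to $i$'s group, so the argument is really: $\frac{|S|}{n} u_i(\vec{h}) \le (1+O(\gamma))\big(U^\star_i + \text{leakage}\big) \le (1+O(\gamma))\widehat{U_i} + \frac{4Q_L}{\gamma^4} + O(1)$, and then absorbing $O(1)$ and tuning constants ($8\gamma$, $5Q_L/\gamma^4$) against the $\gamma = \delta/8$ choice. A secondary subtlety is whether to apply the core property to $S'$ directly or to re-derive an analogue of the first-order-optimality inequality $\sum_{i\in S'} u_i(\vec{h})/U^\star_i \le \ldots$ restricted to $S'$; I would use the former since Definition~\ref{def:core} already gives existence of a non-blocked agent for every coalition, and the approximate version from Theorem~\ref{thm:fractional} only weakens this by the controllable $\delta$-factor. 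Once Theorem~\ref{thm:main3} is in hand, Theorem~\ref{thm:main} follows by noting $\widehat{U_i} = u_i(\bc)$ and reading off $\alpha = O(Q_L/\gamma^4) = O\big(\frac{1}{\delta^4}\log(R\log^* V_{\max}/\delta)\big)$.
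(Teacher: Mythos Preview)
Your setup is right --- restrict to the good set $S' = S \setminus \bigcup_\ell F_\ell$, apply the core property there, then translate back to $S$ --- and this is exactly the paper's Case~1. But your leakage/edge-case argument has a real gap. The sum $\sum_\ell \frac{|F_\ell|}{n} e^{Q_\ell} = O(1)$ you compute is not the quantity you need: your witness $i$ is a single fixed agent in $S'$, so the leftover term is $\frac{|S \setminus S'|}{n}\, u_i(\vec{h}) \le \frac{|S \setminus S'|}{n}\, V_i$, with $V_i$ determined solely by $i$'s own group. If $i \in G_0$ this is of order $e^{Q_0}\sum_\ell |F_\ell|/n \approx e^{Q_0 - Q_L}$, which is enormous, not $O(1)$. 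Your edge-case handling fails in the same way: when $S' = \emptyset$, saying ``$|S|/n$ kills everything'' is false, since $|S|/n$ is only of order $e^{-Q_L}$ while the witness utility can be as large as $e^{Q_0}$.

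The paper fixes this by an explicit dichotomy on whether $|S'| \ge (1-\gamma)|S|$. If so, $|S|/|S'| \le 1/(1-\gamma)$ and your argument works verbatim. If not, the paper \emph{abandons the good agents entirely}: by pigeonhole some level $\ell$ has $|S \cap F_\ell| \ge \frac{\gamma}{2L}|S|$ (or $\frac{\gamma}{2}|S|$ for the light group), and the witness is taken from $S \cap F_\ell$. For such an agent $V_i \le e^{Q_\ell}$, and Theorem~\ref{thm:main2} gives $|S| \le \frac{2L}{\gamma}|F_\ell| \le \frac{n}{\gamma e^{Q_\ell}}$, so $\frac{|S|}{n} u_i(\vec{h}) \le \frac{1}{\gamma}$, which the additive term absorbs. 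The cancellation you intuited --- $e^{Q_\ell}$ in the utility against $e^{-Q_\ell}$ in the $|F_\ell|$ bound --- is exactly the mechanism, but forcing it requires choosing a \emph{bad} agent as the witness in this second case, not a good one.
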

\begin{proof}
Let $W = N \setminus \cup_{\ell = 0}^{L} F_{\ell}$. In other words, $W$ is the set of agents who either receive a good multiplicative approximation to their expected utility in the core (for the heavy groups), or a good additive approximation to their expected utility in the core (for the light group). In particular, for every $i \in W$, we have $\widehat{U_i} \ge \min\left((1-3 \gamma) \cdot U^*_i, U_i^* -  \frac{4 Q_{L}}{\gamma^4} \right)$, which implies
\begin{equation}
U^*_i \le \frac{1}{1-3 \gamma} \cdot \widehat{U_i} + \frac{4 Q_{L}}{\gamma^4}.\label{eqn:Ustar-U}
\end{equation}

Consider a set of agents $S$ that may want to deviate, and let $\vec{h}$ be any (even fractional) outcome. There are two cases:

\paragraph{Case 1.} Suppose $|S \cap W| \ge (1-\gamma) \cdot |S|$. Then, due to the fractional core optimality condition (see Section~\ref{sec:fractional}), we have: 
$$\sum_{i \in S \cap W} \frac{U_i(\vec{h})}{U^*_i} \le n.$$
Note that in polynomial time, Theorem~\ref{thm:fractional} only finds an approximate solution whose utilities $\{\tilde{U}_i\}$ with $\sum_{i \in S \cap W} \frac{U_i(\vec{h})}{\tilde{U}_i} \le n (1+\eta)$ for small $\eta > 0$. It is easy to check this does not alter the rest of the proof and adds a small multiplicative factor of $(1+\eta)$ to the final approximation bound. We ignore this factor for simplicity and simply assume $\{U_i^*\}$ are the optimal MNW utilities.
The above implies $
\frac{|S|}{n} \cdot  \sum_{i \in S \cap W}  \frac{u_i(\vec{h})}{U^*_i} \le |S| \le \frac{1}{1-\gamma} \cdot |S \cap W|$.

Therefore, there exists an agent $i \in S \cap W$ such that 
$$
\frac{|S|}{n} \cdot u_i(\vec{h}) \le \frac{1}{1-\gamma} \cdot U^*_i \le \frac{1}{1-\gamma} \cdot \left( \frac{1}{1-3\gamma} \cdot \widehat{U_i} + \frac{4 Q_{L}}{\gamma^4} \right),
$$
where the last transition is due to Equation~\eqref{eqn:Ustar-U} and the fact that $i \in W$. Finally, it is easy to check that for $\gamma = \delta/8 \le 1/8$, we have $\frac{1}{(1-\gamma)\cdot (1-3\gamma)} \le 1+8\gamma$ and $4/(1-\gamma) \le 5$, which yields:
\begin{equation}
\frac{|S|}{n} \cdot u_i(\vec{h}) \le (1 + 8 \gamma) \cdot \widehat{U_i} + \frac{5 Q_{\hl}}{\gamma^4}.
\label{eqn:main2-bound}
\end{equation}

\paragraph{Case 2.} Otherwise, $|S \setminus W| \ge \gamma |S|$. In this case, we want to show that there exists an agent $i \in S\setminus W$ such that $(|S|/n) \cdot u_i(\vec{h}) \le 1/\gamma$. Because $\widehat{U_i} \ge 0$ and $Q_{\hl} = \omega(1)$, such an agent will also satisfy Equation~\eqref{eqn:main2-bound}. We show this by taking two sub-cases. 

First, suppose the light group satisfies $|S \cap F_{\hl}| \ge \frac{\gamma}{2} |S|$. Then: $ |S| \le \frac{2}{\gamma} \cdot |S \cap F_{\hl}| \le \frac{2}{\gamma} \cdot |F_{\hl}|.$
Thus, for any agent $i \in F_L$, we have
$$
\frac{|S|}{n} \cdot u_i(\vec{h}) \le \frac{2}{\gamma n} \cdot |F_{\hl}| \cdot V_i \le \frac{2}{\gamma n} \cdot \frac{|G_{\hl}|}{2 e^{Q_{\hl}}} \cdot V_i \le \frac{1}{\gamma}.
$$
Here, the second transition follows from Theorem~\ref{thm:main2}. To see why the third transition holds, note that $|G_{\hl}| \le n$, and that $\log V_i \le Q_{\hl}$ because $i \in G_{\hl}$.
 
Similarly, in the other sub-case, suppose $|S \cap F_{\hl}| \le  \frac{\gamma}{2} |S|$. Then, there exists a heavy group $\ell \in \{0,1,\ldots,\hl-1\}$ such that $|S \cap F_{\ell}| \ge \frac{\gamma}{2L} |S|$. This means $
 |S| \le \frac{2L}{\gamma} \cdot |S \cap F_{\ell}| \le  \frac{2L}{\gamma} \cdot |F_{\ell}|.$

Again, for an arbitrary agent $i \in F_{\ell}$, we have:
$$
\frac{|S|}{n} \cdot u_i(\vec{h}) 
\le \frac{2 L}{\gamma n} \cdot |F_{\ell}| \cdot V_i 
\le \frac{2L}{\gamma n} \cdot \frac{|G_{\ell}|}{2 L e^{Q_{\ell}}} \cdot V_i \le \frac{1}{\gamma}.
$$
Once again, the third transition follows from Theorem~\ref{thm:main2}, and the fourth transition holds because $|G_{\ell}| \le n$ and $\log V_i \le Q_{\ell}$ as $i \in G_{\ell}$. Putting everything together, the theorem follows.
\end{proof}

Since $\gamma = \frac{\delta}{8}$ and $Q_{\hl} =   \Theta\left( \log \frac{R \log^* V_{max}}{\gamma} \right)$, Theorem~\ref{thm:main3} implies $
\frac{|S|}{n} \cdot u_i(\vec{h}) \le (1+\delta) \cdot \widehat{U_i} + \alpha^*$, where  $
\alpha^* = O\left(\frac{1}{\delta^4} \cdot \log\left(\frac{R \cdot \log^*V_{\max}}{\delta} \right) \right)$. 
The existence of such an agent implies that a solution satisfying Theorem~\ref{thm:main2} is a $(\delta,\alpha)$-core solution for every $\alpha > \alpha^*$, which completes the proof of Theorem~\ref{thm:main}.

\section{Conclusion}
We considered the problem of fairly allocating public goods. We argued that the {\em core}, which is a generalization of proportionality and Pareto efficiency, and approximations of the core provide reasonable fairness guarantees in this context. Given that no integral outcome may be in the core, we presented efficient algorithms to produce integral outcomes that are constant or near-constant approximations of the core, thereby also establishing the non-trivial existence of such outcomes. Note that our algorithms for matroid and matching constraints that globally optimize the smooth Nash welfare objective achieve {\em exact} rather than approximate Pareto efficiency, in addition to an approximation of the core. An interesting question is whether the same guarantee can be provided (regardless of computation time) for general packing constraints.

Another natural question following our work is to tighten our upper bounds, or to establish matching lower bounds. For instance, we show the existence of a $(0,2)$-core outcome for matroid constraints (Theorem~\ref{thm:matroid}), but our lower bound only shows that a $(0,1-\epsilon)$-core outcome may not exist. This leaves open the question of whether a $(0,1)$-core outcome always exists. Existence of $(0,1)$-core outcome is also an open question for matching constraints. For packing constraints, it is unknown if even a $(\delta,\alpha)$-core outcome exists for constant $\delta > 0$ and $\alpha = O(1)$. This also remains an open question for the endowment-based notion of core we consider in Appendix~\ref{sec:prop}.




At a higher level, we established connections between approximating the core in our multi-agent environment and the problem of finding the optimal (i.e., utility-maximizing) outcome for a {\em single} agent. For instance, given matching constraints, our algorithm uses the idea of {\em short} augmenting paths from fast PRAM algorithms. 
This hints at the possibility of a deeper connection between efficiency results and existence results.


\paragraph{Acknowledgment.} We thank Vincent Conitzer, Rupert Freeman, Ashish Goel, and Paul G\"{o}lz for helpful comments. Brandon Fain is supported by NSF grants CCF-1637397 and IIS-1447554. Kamesh Munagala is supported by NSF grants CCF-1408784, CCF-1637397, and IIS-1447554.




\bibliographystyle{acm}
\bibliography{abb,ultimate,refs}

\begin{thebibliography}{10}

\bibitem{AACK+17}
{\sc Airiau, S., Aziz, H., Caragiannis, I., Kruger, J., and Lang, J.}
\newblock Positional social decision schemes: Fair and efficient randomized
  voting.
\newblock Manuscript, 2017.

\bibitem{AB17}
{\sc Azevedo, E.~M., and Budish, E.}
\newblock Strategy-proofness in the large.
\newblock Tech. rep., National Bureau of Economic Research, 2017.

\bibitem{Brill}
{\sc Aziz, H., Brill, M., Conitzer, V., Elkind, E., Freeman, R., and Walsh, T.}
\newblock Justified representation in approval-based committee voting.
\newblock {\em Social Choice and Welfare 48}, 2 (2017), 461--485.

\bibitem{PJR2018}
{\sc Aziz, H., Elkind, E., Huang, S., Lackner, M., Fernandez, L.~S., and
  Skowron, P.}
\newblock On the complexity of extended and proportional justified
  representation.
\newblock In {\em AAAI Conference on Artificial Intelligence\/} (2018).

\bibitem{cakeCuttingProtocol}
{\sc Aziz, H., and Mackenzie, S.}
\newblock A discrete and bounded envy-free cake cutting protocol for any number
  of agents.
\newblock In {\em Proceedings of the 57th Symposium on Foundations of Computer
  Science (FOCS)\/} (2016), pp.~416--427.

\bibitem{BNPS17}
{\sc Benade, G., Nath, S., Procaccia, A.~D., and Shah, N.}
\newblock Preference elicitation for participatory budgeting.
\newblock In {\em Proceedings of the 31st AAAI Conference on Artificial
  Intelligence (AAAI)\/} (2017), pp.~376--382.

\bibitem{Bubeck}
{\sc Bubeck, S.}
\newblock Convex optimization: Algorithms and complexity.
\newblock {\em Found. Trends Mach. Learn. 8}, 3-4 (2015), 231--357.

\bibitem{budish2}
{\sc Budish, E.}
\newblock The combinatorial assignment problem: Approximate competitive
  equilibrium from equal incomes.
\newblock {\em Journal of Political Economy 119}, 6 (2011), 1061--1103.

\bibitem{envyFreeUpTo1}
{\sc Caragiannis, I., Kurokawa, D., Moulin, H., Procaccia, A.~D., Shah, N., and
  Wang, J.}
\newblock The unreasonable fairness of maximum {N}ash welfare.
\newblock In {\em Proceedings of the 17th ACM Conference on Economics and
  Computation (EC)\/} (2016), pp.~305--322.

\bibitem{FPDM}
{\sc Conitzer, V., Freeman, R., and Shah, N.}
\newblock Fair public decision making.
\newblock In {\em Proceedings of the 18th ACM Conference on Economics and
  Computation (EC)\/} (2017), pp.~629--646.

\bibitem{Fain2016}
{\sc Fain, B., Goel, A., and Munagala, K.}
\newblock The core of the participatory budgeting problem.
\newblock In {\em Proceedings of the 12th Conference on Web and Internet
  Economics (WINE)\/} (2016), pp.~384--399.

\bibitem{FairKnapsack}
{\sc Fluschnik, T., Skowron, P., Triphaus, M., and Wilker, K.}
\newblock Fair knapsack.
\newblock arXiv:1711.04520, 2017.

\bibitem{lindahlCore}
{\sc Foley, D.~K.}
\newblock {L}indahl's solution and the core of an economy with public goods.
\newblock {\em Econometrica 38}, 1 (1970), 66--72.

\bibitem{GandhiKS01}
{\sc Gandhi, R., Khuller, S., Parthasarathy, S., and Srinivasan, A.}
\newblock Dependent rounding and its applications to approximation algorithms.
\newblock {\em Journal of the ACM 53}, 3 (2006), 324--360.

\bibitem{GargKGMM17}
{\sc Garg, N., Kamble, V., Goel, A., Marn, D., and Munagala, K.}
\newblock Collaborative optimization for collective decision-making in
  continuous spaces.
\newblock In {\em Proceedings of the 26th International World Wide Web
  Conference (WWW)\/} (2017), pp.~617--626.

\bibitem{drf}
{\sc Ghodsi, A., Zaharia, M., Hindman, B., Konwinski, A., Shenker, S., and
  Stoica, I.}
\newblock Dominant resource fairness: Fair allocation of multiple resource
  types.
\newblock In {\em Proceedings of the 8th USENIX Conference on Networked Systems
  Design and Implementation (NSDI)\/} (2011), pp.~24--37.

\bibitem{knapsack2}
{\sc Goel, A., Krishnaswamy, A.~K., and Sakshuwong, S.}
\newblock Budget aggregation via knapsack voting: Welfare-maximization and
  strategy-proofness.
\newblock {\em Collective Intelligence\/} (2016).

\bibitem{knapsack1}
{\sc Goel, A., Krishnaswamy, A.~K., Sakshuwong, S., and Aitamurto, T.}
\newblock Knapsack voting.
\newblock {\em Collective Intelligence\/} (2015).

\bibitem{HV}
{\sc Hougardy, S., and Vinkemeier, D.~E.}
\newblock Approximating weighted matchings in parallel.
\newblock {\em Information Processing Letters 99}, 3 (2006), 119 -- 123.

\bibitem{MLFairness}
{\sc {Kearns}, M., {Neel}, S., {Roth}, A., and {Wu}, Z.~S.}
\newblock {Preventing Fairness Gerrymandering: Auditing and Learning for
  Subgroup Fairness}.
\newblock {\em ArXiv e-prints\/} (Nov. 2017).

\bibitem{kellyTCP}
{\sc Kelly, F.~P., Maulloo, A.~K., and Tan, D.~K.}
\newblock Rate control for communication networks: shadow prices, proportional
  fairness and stability.
\newblock {\em Journal of the Operational Research society\/} (1998), 237--252.

\bibitem{Kung}
{\sc Kung, J. P.~S.}
\newblock Basis-exchange properties.
\newblock In {\em Theory of Matroids}, W.~N., Ed. Cambridge University Press,
  1986.

\bibitem{ROBUS}
{\sc Kunjir, M., Fain, B., K., and Babu, S.}
\newblock {ROBUS:} fair cache allocation for data-parallel workloads.
\newblock In {\em Proceedings of the 2017 {ACM} International Conference on
  Management of Data ({SIGMOD})\/} (2017), pp.~219--234.

\bibitem{LMMS04}
{\sc Lipton, R.~J., Markakis, E., Mossel, E., and Saberi, A.}
\newblock On approximately fair allocations of indivisible goods.
\newblock In {\em Proceedings of the 6th ACM Conference on Economics and
  Computation (EC)\/} (2004), pp.~125--131.

\bibitem{DurhamElection}
{\sc Luzum, N.}
\newblock Voters to decide fate of durham bond referenda.
\newblock {\em The Chronicle\/} (2016).

\bibitem{coreConjectureCounter}
{\sc Muench, T.~J.}
\newblock The core and the lindahl equilibrium of an economy with a public
  good: An example.
\newblock {\em Journal of Economic Theory 4}, 2 (1972), 241 -- 255.

\bibitem{nashBargaining}
{\sc Nash, J.~F.}
\newblock The bargaining problem.
\newblock {\em Econometrica 18}, 2 (1950), pp. 155--162.

\bibitem{Parkes:2012:BDR:2229012.2229075}
{\sc Parkes, D.~C., Procaccia, A.~D., and Shah, N.}
\newblock Beyond dominant resource fairness: Extensions, limitations, and
  indivisibilities.
\newblock In {\em Proceedings of the 13th ACM Conference on Electronic
  Commerce\/} (2012), EC '12, pp.~808--825.

\bibitem{PBP}
{\sc PBP}.
\newblock Participatory budgeting project.
\newblock Website, 2015.

\bibitem{envyFreeUpToAny}
{\sc Plaut, B., and Roughgarden, T.}
\newblock Almost envy-freeness with general valuations.
\newblock In {\em Proceedings of the 29th Annual ACM-SIAM Symposium on Discrete
  Algorithms (SODA)\/} (2018), pp.~2584--2603.

\bibitem{PST}
{\sc Plotkin, S.~A., Shmoys, D.~B., and Tardos, {\'{E}}.}
\newblock Fast approximation algorithms for fractional packing and covering
  problems.
\newblock {\em Math. Oper. Res. 20}, 2 (1995), 257--301.

\bibitem{cakeCutting}
{\sc Procaccia, A.~D.}
\newblock Cake cutting: Not just child's play.
\newblock {\em Commun. ACM 56}, 7 (July 2013), 78--87.

\bibitem{Sanchez}
{\sc S{\'a}nchez-Fern{\'a}ndez, L., Elkind, E., Lackner, M., Fern{\'a}ndez, N.,
  Fisteus, J.~A., Basanta~Val, P., and Skowron, P.}
\newblock Proportional justified representation.
\newblock In {\em Proceedings of the 31st AAAI Conference on Artificial
  Intelligence (AAAI)\/} (2017), pp.~670--676.

\bibitem{scarfCore}
{\sc Scarf, H.~E.}
\newblock The core of an n person game.
\newblock {\em Econometrica 35}, 1 (1967), pp. 50--69.

\bibitem{Schum97}
{\sc Schummer, J.}
\newblock Strategy-proofness versus efficiency on restricted domains of
  exchange economies.
\newblock {\em Social Choice and Welfare 14\/} (1997), 47--56.

\bibitem{Thiele}
{\sc Thiele, T.~N.}
\newblock Om flerfoldsvalg.
\newblock In {\em Oversigt over det Kongelige Danske Videnskabernes Selskabs
  Forhandlinger\/} (1895), pp.~415--441.

\bibitem{varian}
{\sc Varian, H.~R.}
\newblock {Two problems in the theory of fairness}.
\newblock {\em Journal of Public Economics 5}, 3-4 (1976), 249--260.

\end{thebibliography}
\newpage
\appendix
\section*{Appendix}
\section{Impossibility Examples}
\label{sec:Examples}

\begin{example}
\label{eg:IS}
\normalfont{
The following example shows the necessity of assuming a large $\vec{b}$ in Theorem~\ref{thm:packing} for general packing constraints. Specifically, the example uses packing constraints $A \vec{x} \le \vec{b}$ where $\vec{b} = \vec{1}$ (and the width is $\rho = 2$), and does not admit a $(\delta, m/4)$-core outcome for any $\delta > 0$. 
	
Consider a complete bipartite graph $G(L,R,E)$, where $|L| = |R| = m/2$. The vertices are the elements of the ground set $W$, and the constraints ensure that feasible outcomes are independent sets. There are two agents.  Agent $1$ has unit utility for each vertex in $L$, and zero utility for each vertex in $R$, while agent $2$ has unit utility for each vertex in $R$. A feasible outcome is forced to choose either vertices from $L$ or vertices from $R$, and hence gives zero utility to at least one agent. But this agent can deviate and choose an outcome with utility $m/2$, which is then scaled down to $m/4$. Hence, no feasible outcome is $(\delta,m/4)$-core for any $\delta > 0$. 
	
Note that in this example, the welfare-maximizing outcome for a {\em single} agent is simple to compute, which shows that the non-existence of a good approximation to the core is orthogonal to the computational difficulty of the single-agent welfare maximization problem.}
\end{example}

\begin{example}
\label{eg:knapsack}
\normalfont{ 
Recall that in the {\sc Knapsack} setting, we are given a set of elements of different sizes, and our goal is to select a subset of elements with total size at most a given budget $B$. We show that for any $\ell > 0$, there exists a {\sc Knapsack} instance in which maximizing the smooth Nash welfare objective $F(\bc) = \sum_{i \in N} \ln{(\ell + u_i(\bc))}$ returns an outcome that is not a $(O(m^{1/2-\epsilon}),O(m^{3/4-\epsilon}))$-core outcome. This is in contrast to Theorem~\ref{thm:main}, which provides a $(\delta,\alpha)$-core guarantee where $\delta$ is constant and $\alpha$ is logarithmic in the number of elements. 
	
Fix $\ell > 0$. Set a large budget $B \geq \ell^4$. There are $m = B^{1/4}+B$ elements, of which $B^{1/4}$ are {\em large} elements of size $B^{3/4}$ and the remaining $B$ are {\em small} elements of unit size. There are $n \ge 4 B^{1/4} \log(2B)$ agents. Each agent has unit utility for each large element. A subset of $\alpha n$ agents are {\em special} (we determine $\alpha$ later). These special agents have unit utility for each small element, while the remaining agents have zero utility for the small elements. 
	
The idea is to show that when $\alpha$ is sufficiently small, the smooth Nash welfare objective will choose only the large elements. However, $\alpha$ can still be large enough so that the special agents can deviate, and get a large amount of utility. 
	
Note that maximizing the smooth Nash welfare objective returns a Pareto efficient solution, and hence can be one of two types: it either chooses all large elements (which gives utility $B^{1/4}$ to each agent), or it chooses $B^{1/4}-r$ large elements and $r B^{3/4}$ small elements. For the former to have a larger smooth Nash welfare objective value, we need that for each $1 \le r \le B^{1/4}$,
	$$
	\ln (B^{1/4} + \ell) > \alpha \ln \left(r B^{3/4} + (B^{1/4}-r) + \ell \right) + (1-\alpha) \ln(B^{1/4} - r + \ell).
	$$
	This holds true if 
	$$
	\ln\left(\frac{B^{1/4}+\ell}{B^{1/4}+\ell-r}\right) > \alpha \ln \left(r B^{3/4} + (B^{1/4}-r) + \ell \right).
	$$
	Since $0 < \ell \leq B^{1/4}$, the above is true for each $1 \le r \le B^{1/4}$ if
	$$
	\ln \left(\frac{2B^{1/4}}{2B^{1/4}-1}\right) > \alpha \ln \left(B^{1/4} B^{3/4} + B^{1/4} \right).
	$$
	This is true when 
	$$
	\ln \left(1 + \frac{1}{2B^{1/4}}\right) \ge \alpha \ln (2B).
	$$
	
	Since $\ln(1+x) \ge x/2$ for $x \in [0,1]$, the above holds when:
	$$
	\alpha \le \frac{1}{4B^{1/4} \ln (2B)}.
	$$
	
	Let us set $\alpha = \frac{1}{4B^{1/4} \ln (2B)}$. Choosing all large elements maximizes the smooth Nash welfare objective. Since $n \ge 1/\alpha$, there is at least one special agent. The special agents get utility $B^{1/4}$ each. If they deviate and choose all the small elements, they get (scaled down) utility 
	$$
	\alpha B = \frac{B}{4B^{1/4}\ln (2B)} = \frac{B^{3/4}}{4\ln(2B)}.
	$$
	
	Hence, for the solution to be a $(\delta,\alpha)$-core outcome, we need $(1+\delta) \cdot B^{1/4} + \alpha \ge B^{3/4}/(4\ln (2B))$. Since $m = \Theta(B)$, this shows that the outcome is not a $(O(m^{1/2-\epsilon}),O(m^{3/4-\epsilon}))$-core outcome for any constant $\epsilon > 0$, as required. }
\end{example}

\section{Omitted Proofs}
\label{sec:omittedProof}

\subsection{Proof of Theorem~\ref{thm:fractional}}
\label{app:fractional}
The fractional outcome maximizing the Nash welfare objective is the solution of the following program. For simplicity of presentation, we absorb the constraint that $w_j \le 1$ for each $j \in W$ into the packing constraints.

\begin{equation}
\label{prog:2}
\mbox{Maximize } \ \sum_{i \in N} \ln U_i 
\end{equation}
\[ \begin{array}{rcll}
\sum_{j=1}^{m} a_{kj}  w_j  & \leq  & b_k & \forall k \in [K] \\
U_i & = &   \sum_{j =1}^m  w_j u_{ij}  & \forall i \in N \\
w_j  & \ge &  0 &  \forall j \in W 
\end{array}\]

\renewcommand{\P}{\mathcal{U}}

Denote a vector of utilities by $\vec{U} = \langle U_1,U_2, \ldots, U_n \rangle$, and the polytope of feasible utility vectors by $\P$. Then, the fractional MNW outcome is obtained by the following maximization. 
\begin{equation}
\label{prog:3}
 \max_{\vec{U} \in \P} \ \sum_{i \in N} \ln U_i 
 \end{equation}

We want to compute a fractional $(\delta,\epsilon)$-approximate core outcome in time polynomial in $n, V_{\max},$ and $\log \frac{1}{\delta\epsilon}$. 
Assume that $\P$ is a convex polytope of feasible utility vectors. For any $\delta \geq 0, \epsilon > 0$, let $\epsilon' = \epsilon/(1+\delta)$. Define the following objective function. Note that in the absence of the $\epsilon'$ term, it would mimic the derivative of the Nash social welfare objective from Program~\eqref{prog:3}.
\begin{equation}
\label{prog:4}
\min_{\vec{U} \in \P}  Q(\vec{U}), \text{ where } Q(\vec{U}) = \max_{\vec{U'} \in \P}  \ \sum_{i \in N} \frac{U'_i + \epsilon'}{U_i + \epsilon'}. 
\end{equation}

Clearly, $Q(\vec{U}) \ge n$ for every $\vec{U}$. Thus, the objective value in Program~\eqref{prog:4} is at least $n$. In Section~\ref{sec:fractional}, we presented an argument showing that the fractional MNW outcome is in the core. A similar argument using the first order optimality condition shows that if $\vec{U^*} \in \argmax_{\vec{U} \in \P} \sum_{i \in N} \ln (U_i+\epsilon')$, then 
$$
\sum_{i \in N} \frac{U_i + \epsilon'}{U^*_i + \epsilon'} \leq n.
$$
This implies the optimum of Program~\eqref{prog:4} is achieved at the fractional outcome maximizing the smooth Nash welfare objective $\sum_{i \in N} \ln (U_i+\epsilon')$, and this optimal value is exactly $n$.

Next, we turn to efficiently approximating Program~\eqref{prog:4}, and show that if $Q(\vec{U}) \le n(1+\delta)$, then $\vec{U}$ is a $(\delta,\epsilon)$-core outcome. 

We want to use the Ellipsoid algorithm to approximately minimize the objective function $Q(\vec{U})$ over $\vec{U} \in \P$ in polynomial time. For this, all we need is that $Q$ is a convex function, its subgradient is efficiently computable, the range of $Q$ and the diameter of $\P$ are exponentially bounded, and polytope $\P$ is efficiently separable~\cite{Bubeck}. 

First, we claim that $Q(\vec{U})$ is a convex function of $\vec{U}$. To see this, note that for any fixed $\vec{U'}$,  $U'_i/U_i$ is convex in $U_i$. Since the sum and maximum of convex functions is convex, we conclude that $Q(\vec{U})$ is also convex. 

Second, the subgradient of $Q(\vec{U})$ is efficiently computable for every $\vec{U} \in \P$. First, we find the $\vec{U'} \in \P$ that maximizes $\sum_{i\in N} \frac{U'_i + \epsilon'}{U_i + \epsilon'} $ by solving a linear program. Then, we fix $\vec{U'}$ and take the gradient of $\frac{U'_i}{U_i} $ with respect to $U_i$ to obtain a subgradient of $Q(\vec{U})$. 

Third, note that $U_i \in [0,V_{\max}]$ for each $i$. Hence, $Q(\vec{U}) \le \frac{n \cdot (V_{\max}+\epsilon')}{\epsilon'}$, which is exponentially bounded in the input size. It is easy to see that the same holds for the diameter of the polytop $\P$. 

Finally, polytope $\P$ is efficiently separable because it is a set of polynomially many linear inequalities.

Hence, we can efficiently obtain a solution $\vec{\hat{U}} \in \P$ that satisfies
$$
\max_{\vec{U'} \in \P} \sum_{i \in N} \frac{U'_i + \epsilon'}{\hat{U}_i + \epsilon'} \leq n + \delta \le n (1+ \delta).
$$

Finally, we show that $\vec{\hat{U}}$ must be a $(\delta,\epsilon)$-core outcome. Suppose for contradiction that it is not. Then, there exists a subset $S$ of agents and an outcome $\vec{U'}$ such that 
$$
(1+\delta) \cdot \hat{U}_i + \epsilon \le \frac{|S|}{n} \cdot U'_i
$$
for all $i \in S$, and at least one inequality is strict. Rearranging the terms and summing over $i \in S$, we obtain
$$
\sum_{i \in S} \frac{U'_i}{(1+\delta) \cdot \hat{U}_i + \epsilon} > |S| \cdot \frac{n}{|S|} = n.
$$

However, we also have 
$$
\sum_{i \in S} \frac{U'_i}{(1+\delta) \cdot \hat{U}_i + \epsilon} \le \sum_{i \in S} \frac{U'_i+\epsilon'}{(1+\delta) \cdot (\hat{U}_i + \epsilon')} = \frac{1}{1+\delta} \sum_{i \in S} \frac{U'_i+\epsilon'}{\hat{U}_i + \epsilon'} \le n,
$$
where the last inequality is due to approximate optimality of $\vec{\hat{U}}$. This is a contradiction. Hence, $\vec{\hat{U}}$ is a $(\delta,\epsilon)$-core outcome.

\subsection{Proof of Lemma~\ref{lem:main}}
\label{proof:main}
We first state the standard theorem for Chernoff bounds.
\begin{theorem}
\label{thm:chernoff}
Let $X_1, X_2, \ldots, X_q$ be independent  random variables in $[0,1]$, and let $X = \sum_{j=1}^q X_j$. For any $\epsilon \in (0,1)$, we have:
$$ \Pr\left[ X < (1-\epsilon) \E[X] \right] \le e^{-\frac{\epsilon^2}{2} \E[X]}$$
Equivalently, for any $\eta < \E[X]$, 
$$ \Pr\left[ X < \E[X] - \eta \right] \le e^{-\frac{\eta^2}{2 \E[X]}}$$ 
 \end{theorem}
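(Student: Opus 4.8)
The plan is to prove the lower-tail bound via the standard exponential moment (Chernoff) method, which works for arbitrary independent $[0,1]$-valued summands and does not require them to be $\{0,1\}$-valued. Write $\mu = \E[X] = \sum_{j=1}^q \mu_j$ with $\mu_j = \E[X_j]$. For any $t > 0$, since $x \mapsto e^{-tx}$ is decreasing, Markov's inequality applied to the nonnegative random variable $e^{-tX}$ gives
$$\Pr[X < (1-\epsilon)\mu] = \Pr\!\left[e^{-tX} > e^{-t(1-\epsilon)\mu}\right] \le e^{t(1-\epsilon)\mu}\,\E[e^{-tX}].$$
By independence, $\E[e^{-tX}] = \prod_{j=1}^q \E[e^{-tX_j}]$, so the problem reduces to bounding the moment generating function of a single $[0,1]$-valued variable.

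The key step is the per-variable bound. Since $e^{-tx}$ is convex in $x$, on $[0,1]$ it lies below the chord joining $(0,1)$ and $(1,e^{-t})$; that is, $e^{-tx} \le 1 - (1-e^{-t})x$ for all $x \in [0,1]$. Taking expectations,
$$\E[e^{-tX_j}] \le 1 - (1-e^{-t})\mu_j \le \exp\!\big(-(1-e^{-t})\mu_j\big),$$
where the last step uses $1+y \le e^y$. Multiplying over $j$ yields $\E[e^{-tX}] \le \exp(-(1-e^{-t})\mu)$. I emphasize that this is the only place the hypothesis $X_j \in [0,1]$ enters, via the chord bound; nothing here assumes the $X_j$ are Bernoulli.

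Combining the two displays gives, for every $t > 0$,
$$\Pr[X < (1-\epsilon)\mu] \le \exp\!\Big(\mu\big[t(1-\epsilon) - (1-e^{-t})\big]\Big).$$
I would then optimize the exponent over $t$: the minimizer is $e^{-t} = 1-\epsilon$, which is valid and positive precisely because $\epsilon \in (0,1)$, and substituting back produces the classical form
$$\Pr[X < (1-\epsilon)\mu] \le \left(\frac{e^{-\epsilon}}{(1-\epsilon)^{1-\epsilon}}\right)^{\mu}.$$

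Finally I would simplify this to the stated clean bound $e^{-\epsilon^2\mu/2}$ by verifying the scalar inequality $-\epsilon - (1-\epsilon)\ln(1-\epsilon) \le -\epsilon^2/2$ for $\epsilon \in (0,1)$. Setting $f(\epsilon) = -\epsilon - (1-\epsilon)\ln(1-\epsilon) + \epsilon^2/2$, one has $f(0) = 0$ and $f'(\epsilon) = \ln(1-\epsilon) + \epsilon \le 0$ since $\ln(1-\epsilon) \le -\epsilon$, so $f \le 0$ throughout. The ``equivalently'' reformulation is just the substitution $\eta = \epsilon\mu$, under which $-\epsilon^2\mu/2$ becomes $-\eta^2/(2\mu)$ and the constraint $\epsilon < 1$ becomes $\eta < \mu$. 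The main obstacle here is nothing deep but rather getting the per-variable chord bound and the final scalar calculus inequality exactly right, since these two elementary facts carry the entire argument.
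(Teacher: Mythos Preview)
Your proof is correct and is exactly the standard exponential-moment derivation of the lower-tail Chernoff bound. The paper, however, does not prove this statement at all: it simply introduces it with ``We first state the standard theorem for Chernoff bounds'' and then uses it as a black box to derive Lemma~\ref{lem:main}. So there is nothing to compare against; you have supplied a full proof where the paper supplies none.
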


Lemma~\ref{lem:main} follows from considering two cases. First, suppose $(1-\gamma)  A \ge  B$.
\begin{align*}
 \Pr[ X < (1-\gamma)^2 A ] & \le \Pr[ X < (1-\gamma) \E[X] ]   \le e^{- \frac{\gamma^2}{2}  \E[X] }    \\
    &  \le e^{- \frac{\gamma^2}{2} \max(\gamma B, (1-\gamma)A)} \le   e^{- \frac{\gamma^3}{2} \max(B,A)}
\end{align*}

In the other case, if $(1-\gamma)  A < B$, then $\gamma B \le \E[X] \le (1+\gamma)B$. Then
\begin{eqnarray*}
 \Pr[X < (1-\gamma) A ] & \le & \Pr[  X \le \E[X]  - \gamma B ] \\
  & \le & e^{- \gamma^2 \frac{ B ^2}{2 \E[X]}} \le  e^{- \gamma^2 \frac{ B ^2}{2(1+\gamma) B} }\\
 & \le & e^{- \frac{\gamma^3}{2} B}  \le  e^{- \frac{\gamma^3}{2} (1-\gamma)A} \le e^{- \frac{\gamma^3}{4} A}  \\
\end{eqnarray*}

\renewcommand{\O}{\mathcal{O}}

 \section{Endowments-Based Core}
 \label{sec:prop}
In this section, we show that our randomized rounding approach to packing problems extends to a slightly different definition of the core, and yields a similar approximation result. 

This alternate definition of the core only applies to packing constraints. For simplicity of presentation, we focus on the {\sc Approval Voting} setting, where $n$ agents have binary additive utilities over $m$ unit-size elements, and feasible outcomes are subsets of elements of size at most $B$. So far, we considered a notion of core in which a subset $S$ of agents can deviate and choose a feasible outcome using the entire budget $B$; however, their utility is scaled down by $|S|/n$. 

A different notion of core is based on scaling the {\em endowment}. Under this notion, when $S$ deviates, it can choose an outcome with a scaled down budget of $B \cdot |S|/n$, but then its utility is not scaled down. This notion of core has been considered in the context of participatory budgeting~\cite{Fain2016} and logically implies {\em proportional representation} of voters in multi-winner elections with approval voting. This notion builds on the seminal work of \cite{lindahlCore} on Lindahl equilibrium and its connection to the core. 


For $P \le B$, let $\O(P)$ denote the set of outcomes consisting of at most $P$ elements.  

\begin{definition}
\label{def:core2}
We say that an outcome $\bc$ is a $(\delta,\alpha)$-core outcome if for every subset $S$ of agents and every outcome $\bcp \in \O(B \cdot (1-\delta) \cdot \frac{|S|}{n})$, it is {\em not} the case that $u_i(\bcp) \geq (1+\delta) \cdot u_i(\bc) + \alpha$ for all $i \in S$ and at least one inequality is strict. We refer to a $(0,0)$-core outcome simply as a core outcome. 
\end{definition}  

As shown by~\cite{Fain2016}, it follows directly from the work of \cite{lindahlCore} that there always exists a fractional core outcome in this setting due to a fixed point argument.\footnote{Note that Theorem~\ref{thm:fractional} does not apply in this setting, since it requires scaling the utility by the size of the coalition.} However, it is not known if a fractional core or approximate core outcome can be computed in polynomial time. 

More interestingly, it is also an open question whether an integral core outcome always exists for {\sc Approval Voting}. It is easy to show that an integral core outcome does not always exist in a slightly more general setting of participatory budgeting, in which non-binary utilities and different sized elements are allowed. Consider an example with three elements $\{a,b,c\}$ of size $2$ each, a budget of $B = 3$, and three agents with {\em cyclic} preferences over the elements as follows.
\begin{center}
	\begin{tabular}{c | c c c}
		& $a$ & $b$ & $c$\\
		\hline
		$u_1$ & 1 & 0.5 & 0\\
		$u_2$ & 0 & 1 & 0.5\\
		$u_3$ & 0.5 & 0 & 1
	\end{tabular}
\end{center}

An integral outcome $\bc$ can only choose a single element. Without loss of generality, suppose $\bc = \set{a}$. Then, the set of agents $S = \set{2,3}$ and outcome $\bcp = \set{c}$ show a violation of the core. 

We now show the existence of an approximate core solution. We begin with the {\em fractional core} outcome $\vec{x}$ that can be computed using fixed point methods~\cite{lindahlCore,Fain2016}. We use dependent rounding~\cite{GandhiKS01} to round $\vec{x}$ to $\vec{X}$ so that (i) $x_j = \E[X_j]$ for each element $j$; (ii) the constraint $\sum_j X_j \le B$ is preserved; and (iii) $\{X_j\}$ are negatively correlated. Since we do not know if the fractional core outcome can be computed in polynomial time, this algorithm is not necessarily polynomial time, but it yields the following approximation result.

\begin{theorem}
\label{thm:main4}
For $\delta \in (0,1]$, there is a $(\delta,\alpha)$-core for {\sc Approval Voting}, where $ \alpha = O\left( \frac{1}{\delta^4}  \log \frac{B}{\delta}  \right)$.
\end{theorem}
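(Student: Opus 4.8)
The plan is to mimic the proof of Theorem~\ref{thm:main}, exploiting three features specific to the endowment-based model. First, by Foley's fixed-point argument on the Lindahl equilibrium~\cite{lindahlCore}, as invoked in~\cite{Fain2016}, a fractional core outcome $\vec{x}$ in the sense of Definition~\ref{def:core2} with $\delta=\alpha=0$ (and stable against \emph{fractional} coalitional deviations) exists; write $U_i^\ast = u_i(\vec{x})$. The algorithm rounds $\vec{x}$ to an integral outcome $\vec{X}$ by dependent rounding~\cite{GandhiKS01}, which guarantees $\E[X_j]=x_j$ for every $j$, $\sum_j X_j \le \lceil \sum_j x_j\rceil \le B$ \emph{deterministically}, and negative correlation among the $X_j$. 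The deterministic feasibility is exactly why we cannot just scale down and round independently as in Section~\ref{sec:main}: here $B$ may be a small constant, so a Chernoff argument for $\sum_j X_j \le B$ is unavailable; negative correlation, however, is all we need, since the standard Chernoff lower-tail bound still holds for negatively correlated $[0,1]$ variables. Set $\widehat{U_i}=u_i(\vec{X})$, so $\E[\widehat{U_i}]=U_i^\ast$, fix $\gamma=\delta/8$, and take $\alpha=\Theta(\tfrac{1}{\delta^4}\log\tfrac{B}{\delta})$ with a sufficiently large hidden constant.

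\textbf{Bounding under-served agents.} Call agent $i$ \emph{under-served} if $\widehat{U_i} < \min((1-3\gamma)U_i^\ast,\ U_i^\ast-\alpha)$, and let $F$ be the set of such agents. Any $i$ with $U_i^\ast \le \alpha$ satisfies $U_i^\ast-\alpha \le 0 \le \widehat{U_i}$ and is never under-served; and any $i\in F$ has $\widehat{U_i} < (1-3\gamma)U_i^\ast$ with $U_i^\ast>\alpha$, so by the (negatively-correlated) multiplicative Chernoff lower-tail bound $\Pr[i\in F] \le e^{-(3\gamma)^2 U_i^\ast/2} \le e^{-\Omega(\gamma^2\alpha)} = (\delta/B)^{\Omega(1)}$. (Unlike Theorem~\ref{thm:main}, nothing like an MPF outcome is mixed in here, so the exponent is $\Theta(\gamma^2 U_i^\ast)$ rather than $\Theta(\gamma^3 U_i^\ast)$; in fact $\alpha=O(\delta^{-2}\log(B/\delta))$ already suffices, comfortably within the stated bound.) By linearity and Markov's inequality, with probability at least $1/2$ we have $|F| \le \tfrac{\gamma\alpha}{B}\,n$; condition on this event together with the deterministic feasibility of $\vec{X}$.

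\textbf{Ruling out blocking coalitions.} Suppose $\vec{X}$ is not a $(\delta,\alpha)$-core outcome: there are $S\subseteq N$ and $\bcp\in\O(B(1-\delta)\tfrac{|S|}{n})$ with $u_i(\bcp)\ge(1+\delta)\widehat{U_i}+\alpha$ for all $i\in S$ and at least one strict. If $|S\cap F|\le\gamma|S|$, put $S'=S\setminus F$, so $|S'|\ge(1-\gamma)|S|$ and hence $|\bcp| \le B(1-\delta)\tfrac{|S|}{n} \le \tfrac{1-\delta}{1-\gamma}\cdot B\tfrac{|S'|}{n} < B\tfrac{|S'|}{n}$ since $\gamma<\delta$; moreover, for each $i\in S'$ one verifies from $\widehat{U_i}\ge\min((1-3\gamma)U_i^\ast,U_i^\ast-\alpha)$ (or $\widehat{U_i}\ge0$ when $U_i^\ast\le\alpha$) together with $(1+8\gamma)(1-3\gamma)>1$ that $u_i(\bcp)\ge(1+\delta)\widehat{U_i}+\alpha > U_i^\ast = u_i(\vec{x})$. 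Thus $(S',\bcp)$ is a fractional blocking coalition against $\vec{x}$ at budget $B|S'|/n$, contradicting that $\vec{x}$ is a fractional core outcome. Otherwise $|S\cap F|>\gamma|S|$, so $|S| < \tfrac{1}{\gamma}|S\cap F| \le \tfrac{1}{\gamma}|F| \le \tfrac{\alpha}{B}\,n$, whence for every $i\in S$, $u_i(\bcp)\le|\bcp| \le B(1-\delta)\tfrac{|S|}{n} < (1-\delta)\alpha < \alpha \le (1+\delta)\widehat{U_i}+\alpha$, so the required inequality fails for this $i$ --- again a contradiction. Hence $\vec{X}$ is a $(\delta,\alpha)$-core outcome with probability at least $1/2$, which proves Theorem~\ref{thm:main4}.

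\textbf{Main obstacle.} The crux --- and the step that departs most from Section~\ref{sec:main} --- is that no useful analogue of a maximally proportionally fair outcome is available: the natural candidate $w_j=B/m$ has width $\rho=m/B$, which can be enormous, so mixing it into $\vec{x}$ before rounding (the device that, in Theorem~\ref{thm:main}, boosts the expected utility of agents with large $V_i$ but small $U_i^\ast$) would contaminate $\alpha$ with a $\log m$ term. The resolution is the defining asymmetry of the endowment-based core: a deviating coalition of size $|S|$ may use an outcome with only $B(1-\delta)|S|/n$ elements, so the utility any of its members gets there is at most $B(1-\delta)|S|/n$, \emph{independently of their $V_i$}. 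Consequently we never need ``heavy'' agents to realize large utility --- it is enough that the under-served agents form a small enough fraction of $N$ --- which both obviates the MPF boost and collapses the $\Theta(\log^\ast V_{\max})$ tower of utility groups from Section~\ref{sec:main} down to a single threshold. What remains --- invoking the Chernoff bound in the negatively-correlated regime produced by dependent rounding, and the constant bookkeeping to land $\alpha$ within $O(\tfrac{1}{\delta^4}\log\tfrac{B}{\delta})$ --- is routine.
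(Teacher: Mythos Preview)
Your proposal is correct and follows essentially the same approach as the paper: start from the fractional Lindahl core outcome, apply dependent rounding to preserve the budget deterministically, bound the set $F$ of under-served agents via a Chernoff/Markov argument at a single utility threshold (no MPF boost or tower of groups needed, exactly for the reason you articulate), and then split into two cases depending on whether $|S\cap F|$ is small or large relative to $|S|$. The constants differ slightly (the paper takes $\gamma=\delta/5$, uses the threshold $(1-2\gamma)$, and bounds $|F|\le n/B$ via Lemma~\ref{lem:main} rather than direct Chernoff), and your remark that $\alpha=O(\delta^{-2}\log(B/\delta))$ already suffices is a correct sharpening, but the skeleton is identical.
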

\begin{proof}
We only sketch the proof of the upper bound, since it is similar to the argument in Section~\ref{sec:light}. Let $\gamma = \frac{\delta}{5}$. Let $U^*_i$ denote the utility agent $i$ receives in the fractional core outcome. We have $U^*_i \in [0,B]$. Let $\hat{U}_i$ be the random variable denoting the utility agent $i$ obtains in the rounded allocation.

Let $L =  \frac{2}{\gamma^4} \log \frac{4 B}{\gamma}$. First, if $U^*_i \le L$, then  $\hat{U}_i \ge U^*_i - L$ trivially. Otherwise, $U^*_i \ge L$, and using Lemma~\ref{lem:main}, we have:
$$ \Pr\left[ \hat{U}_i < (1- 2 \gamma) U^*_i \right]  \le  e^{- \frac{\gamma^3}{4} U^*_i}     \le e^{- \frac{\gamma^3}{4} L}  \le  \frac{1}{2 B}.$$

Let $F$ denote the subset of agents  with $\hat{U}_i < \min\left((1-2 \gamma) U^*_i, U_i^* -  L \right)$. By Markov's inequality:
$$ \Pr\left[|F| \ge \frac{n}{B}   \right] \le \frac{1}{2}.$$

Let $W = N \setminus F$. Suppose set $S$ of agents deviate. We consider two cases:

\paragraph{Case 1.} Suppose $|W \cap S| \ge (1-\gamma) |S|$. Then, consider the agents in $W \cap S$, $P = \frac{|W \cap S|}{n} B \ge (1-\gamma) \frac{|S|}{n} B \ge (1-\delta) \frac{|S|}{n} B$ and any allocation $\vec{h} \in \O(P)$. By the core condition, there exists $i \in W \cap S$ with $U^*_i \ge U_i(\vec{h})$.   Since $i \in W$, we have 
$$\hat{U}_i \ge \min\left((1-2 \gamma) U^*_i, U_i^* -  L \right).$$ 
This implies $ U_i(\vec{h}) \le \hat{U}_i (1 + 5 \gamma) + L$.
 
\paragraph{Case 2.} Otherwise, $|S \setminus W| \ge \gamma |S|$.  Then  
$$|S| \le \frac{1}{\gamma} |S \cap F| \le  \frac{1}{\gamma} |F| \le \frac{1}{\gamma}  \frac{ n}{B}.$$
Thus, if $S$ deviates, their scaled down budget is at most $1/\gamma$. Using this budget, an agent in $S$ can derive utility at most $1/\gamma < \alpha$. Since we give $\alpha$ utility for free to each agent in $S$ under our additive approximation, the approximate core condition is satisfied.
\end{proof}

The above proof generalizes to arbitrary packing constraints $A \vec{x} \le \vec{b}$. In this case, let $\Delta = \max_k \max_j \frac{b_i}{a_{kj}}$. For $P \le 1$, let $\O(P)$ denote the set of outcomes satisfying $A \vec{x} \le P \vec{b}$. 

Then, for $\delta \ge 0$ and $\alpha \ge 0$, we say that an outcome $\bc$ is a $(\delta, \alpha$)-core outcome if for any $S \subseteq N$ and outcome $\bcp \in \O(\frac{t (1 - \delta)}{n})$, it is not the case that $u_i(\bcp) \ge (1+\delta) \cdot u_i(\bc) + \alpha $ and at least one inequality is strict. Generalizing the above proof, it is easy to show the following theorem.

\begin{theorem}
For $\delta \in (0,1]$, there is a $(\delta,\alpha)$-core outcome for general packing problems, where $ \alpha = O\left( \frac{1}{\delta^4}  \log \frac{\Delta}{\delta}  \right)$.
\end{theorem}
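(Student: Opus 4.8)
The plan is to transcribe the proof of Theorem~\ref{thm:main4} for {\sc Approval Voting}, replacing the budget $B$ by the quantity $\Delta = \max_k \max_j \frac{b_k}{a_{kj}}$ throughout, and set $\gamma = \delta/5$. First I would invoke the existence of a \emph{fractional} endowment-based core outcome $\vec{x}$: as observed in~\cite{Fain2016}, this follows from Foley's Lindahl-equilibrium fixed-point argument~\cite{lindahlCore}, and $\vec{x}$ has the property that for every coalition $S'$ and every fractional $\vec{w} \in \O(|S'|/n)$ there exists $i \in S'$ with $u_i(\vec{x}) \ge u_i(\vec{w})$. Write $U^*_i = u_i(\vec{x})$. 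Note $U^*_i \le \Delta$: any $\vec{w}\in\O(1)$ satisfies, for each constraint $k$, $\sum_j w_j \le \tfrac{1}{\min_j a_{kj}}\sum_j a_{kj} w_j \le \tfrac{b_k}{\min_j a_{kj}} \le \Delta$, and utilities lie in $[0,1]$ per element.

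Second, I would round $\vec{x}$ to an integral $\bc = \vec{X}$ using a dependent-rounding scheme~\cite{GandhiKS01} (or, if necessary, first scale $\vec{x}$ down by $1-\gamma$ and round coordinates independently as in Section~\ref{sec:main}) so that (i) $\E[X_j]=x_j$, (ii) the packing constraints $A\vec{X}\le\vec{b}$ are preserved — deterministically, or with probability at least $1/2$ via Chernoff — and (iii) the $X_j$ are negatively correlated, so that the lower-tail bound of Lemma~\ref{lem:main} applies to $\hat U_i := u_i(\vec{X})$ (the proof of that lemma only needs a Chernoff lower tail, which holds under negative correlation). Set $L := \tfrac{2}{\gamma^4}\log\tfrac{4\Delta}{\gamma}$. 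Agents with $U^*_i \le L$ trivially satisfy $\hat U_i \ge U^*_i - L$. For $U^*_i > L$, Lemma~\ref{lem:main} (with $A=B=U^*_i$) gives $\Pr[\hat U_i < (1-2\gamma)U^*_i] \le e^{-\gamma^3 U^*_i/4} \le e^{-\gamma^3 L/4} \le \tfrac{1}{2\Delta}$. Let $F = \{i : \hat U_i < \min((1-2\gamma)U^*_i,\ U^*_i - L)\}$; by linearity $\E[|F|]\le n/(2\Delta)$, so by Markov $\Pr[|F|\ge n/\Delta]\le 1/2$. Union-bounding with feasibility, with constant probability both $A\vec{X}\le\vec{b}$ and $|F|<n/\Delta$ hold, which suffices for existence (we do not claim polynomial time, since computing $\vec{x}$ is not known to be efficient).

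Third, conditioning on that event, let $W = N\setminus F$ and fix a deviating coalition $S$ together with an outcome $\bcp \in \O\big(\tfrac{|S|(1-\delta)}{n}\big)$. If $|W\cap S| \ge (1-\gamma)|S|$, then $\tfrac{|W\cap S|}{n} \ge (1-\gamma)\tfrac{|S|}{n} \ge (1-\delta)\tfrac{|S|}{n}$, so $\bcp\in\O(|W\cap S|/n)$; the fractional core property applied to $W\cap S$ yields $i\in W\cap S$ with $u_i(\bcp)\le U^*_i$, and since $i\in W$, $U^*_i \le \max\big(\tfrac{1}{1-2\gamma}\hat U_i,\ \hat U_i+L\big) \le (1+5\gamma)\hat U_i + L = (1+\delta)u_i(\bc)+L$, so this $i$ witnesses that the core condition is not violated with $\alpha=L$. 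If instead $|S\setminus W|\ge\gamma|S|$, then $|S|\le\tfrac1\gamma|S\cap F|\le\tfrac1\gamma|F|\le\tfrac{n}{\gamma\Delta}$, so $A\bcp \le \tfrac{|S|}{n}\vec{b} \le \tfrac{1}{\gamma\Delta}\vec{b}$, whence $u_i(\bcp)\le\sum_j(\bcp)_j\le\tfrac{1}{\gamma\Delta}\cdot\Delta=\tfrac1\gamma < L$ for every $i\in S$; since $\alpha=L$ is granted for free, no $i\in S$ can have $u_i(\bcp)\ge(1+\delta)u_i(\bc)+\alpha$. In both cases $\vec{X}$ is a $(\delta,\alpha)$-core outcome with $\alpha = L = O\big(\tfrac{1}{\delta^4}\log\tfrac{\Delta}{\delta}\big)$, as claimed.

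The main obstacle I expect is step two: unlike the single cardinality constraint of Theorem~\ref{thm:main4}, general packing constraints $A\vec{x}\le\vec{b}$ are not preserved exactly by the dependent rounding of~\cite{GandhiKS01}, so one must either accept approximate feasibility (replacing $\vec{b}$ by $(1+o(1))\vec{b}$, which needs $b_k = \omega(\log K)$ plus a Chernoff argument in the spirit of Section~\ref{sec:main}) or invoke a more delicate correlation-preserving rounding for the packing polytope. Guaranteeing negative correlation of the $X_j$ (so that Lemma~\ref{lem:main} still applies coordinate-wise to each $\hat U_i$) \emph{simultaneously} with constraint preservation is the crux; everything else is a routine substitution of $\Delta$ for $B$ in the {\sc Approval Voting} proof.
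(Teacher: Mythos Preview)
Your proposal is correct and is exactly the generalization the paper has in mind: the paper gives no separate proof for this theorem, stating only that ``generalizing the above proof, it is easy to show'' the result, and your transcription of the {\sc Approval Voting} argument with $B$ replaced by $\Delta$ is precisely that generalization. You are in fact more careful than the paper on the one genuine subtlety---preserving the packing constraints under rounding---which the paper's one-line pointer glosses over; your proposed fix (scale down by $1-\gamma$ and round independently, as in Section~\ref{sec:main}, at the cost of an implicit $b_k=\omega(\log K)$ assumption) is the natural and correct way to handle it.
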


%
%
%
%

\end{document}